\documentclass{article}
\usepackage[final,nonatbib]{my_neurips_2022}

\usepackage[utf8]{inputenc} 
\usepackage[T1]{fontenc}    
\usepackage{url}            
\usepackage{booktabs}       
\usepackage{amsfonts}       
\usepackage{nicefrac}       
\usepackage{microtype}      
\usepackage{xcolor}         

\usepackage{amsmath,mathtools}
\usepackage{graphicx}
\usepackage{amssymb}
\usepackage{amsthm}
\usepackage{subcaption}
\usepackage{wrapfig}
\usepackage{float}
\usepackage[noend]{algpseudocode}
\usepackage[numbers]{natbib}
\usepackage{xcolor}
\usepackage[colorlinks=true, allcolors=blue]{hyperref}
\hypersetup{
  colorlinks,
  citecolor=black,
  linkcolor=black,
  urlcolor=blue,
  pdftitle={Differentially Private Minimum Enclosing Ball}}
\usepackage{cleveref}
\usepackage{algorithm}

\newcommand{\innerproduct}[2]{\langle #1, #2 \rangle}



\theoremstyle{plain}
\newtheorem{theorem}{Theorem}[section]
\theoremstyle{definition}
 
\newtheorem{claim}[theorem]{Claim}

\newtheorem{lemma}[theorem]{Lemma}
\newtheorem{corollary}[theorem]{Corollary}

\newcommand{\R}{\mathbb{R}}
\newcommand{\N}{\mathbb{N}}
\newcommand{\calM}{\mathcal{M}}
\newcommand{\E}{\mathop{\mathbb{E}}}
\newcommand{\cut}[1]{}

\allowdisplaybreaks


\title{A Differentially Private Linear-Time fPTAS for the Minimum Enclosing Ball Problem}
\author{Bar Mahpud \qquad Or Sheffet\\
Faculty of Engineering\\
  Bar-Ilan University,  Israel \\
  \texttt{\{mahpudb, or.sheffet\}@biu.ac.il} \\}

\begin{document}

\maketitle
\begin{abstract}
The Minimum Enclosing Ball (MEB) problem is one of the most fundamental problems in clustering, with applications in operations research, statistics and computational geometry. In this works, we give the first linear time differentially private (DP) fPTAS for the Minimum Enclosing Ball problem, improving both on the runtime and the utility bound of the best known DP-PTAS for the problem, of Ghazi et al~\cite{GhaziKM20}. Given $n$ points in $\R^d$ that are covered by the ball $B(\theta_{opt},r_{opt})$, our simple iterative DP-algorithm returns a ball $B(\theta,r)$ where $r\leq (1+\gamma)r_{opt}$ and which leaves at most $\tilde O(\frac{\sqrt d}{\gamma\epsilon})$ points uncovered in $\tilde O(\nicefrac n {\gamma^2})$-time. We also give a local-model version of our algorithm, that leaves at most  $\tilde O(\frac{\sqrt {nd}}{\gamma\epsilon})$ points uncovered, improving on the $n^{0.67}$-bound of Nissim and Stemmer~\cite{NissimStemmer18} (at the expense of other parameters). Lastly, we test our algorithm empirically and discuss open problems.
\end{abstract}

\section{Introduction and Related Work}
\label{sec:intro}

One of the fundamental problems in clustering is the Minimum Enclosing Ball (MEB) problem, or the $1$-Center problem, in which we are given a dataset  $P\subset \R^d$ containing $n$ points, and our goal is to find the smallest possible ball $B(\theta_{opt},r_{opt})$ that contains $P$. 
The MEB problem has applications in various areas of operations research, machine learning, statistics and computational geometry: gap tolerant classifiers~\cite{Burges1998}, tuning Support Vector Machine parameters~\cite{Chapelle2002} and Support Vector Clustering~\cite{BHHSV02, BJS03}, $k$-center clustering~\cite{BHPI02}, solving the approximate $1$-cylinder problem~\cite{BHPI02}, computation of spatial hierarchies (e.g., sphere trees~\cite{Hub96}), and others~\cite{Elzinga1972}. 
The MEB problem is NP-hard to solve exactly, but it can be solved in linear time in constant dimension~\cite{Megiddo83,EppsteinE94} and has several fully-Polynomial Time Approximation Schemes (fPTAS)~\cite{BatdoiuClarkson04, KumarMYY03} that approximate it to any constant $(1+\gamma)$ in time $O(n/\gamma)$; as well as an additive $\gamma$ approximation in sublinear time~\cite{ClarksonHW12}.
    
But in situations where the data is sensitive in nature, such as addresses, locations or descriptive feature-vectors\footnote{Consider a research in a hospital in which one first runs some regression on each patient's data, and then looks for the spread of all regressors of all patients.} we run the risk that approximating the data's MEB might leak information about a single individual. Differential privacy~\cite{DworkMNS06,DworkKMMN06} (DP) alleviates such a concern as it requires that no single individual has a significant effect on the output. Alas, the MEB problem is highly sensitive in nature, since there exist datasets where a change to a single datum may affect the MEB significantly.

In contrast, it is evident that for any fixed ball $B(\theta,r)$ the number of input points that $B$ contains changes by no more than one when changing any single datum. And so, in DP we give \emph{bi-criteria} approximations of the MEB: a ball $B(\theta,r)$ that may leave at most a few points of $P$ uncovered and whose radius is comparable to $r_{opt}$. The work of~\cite{NissimStemmerVadhan16} returns a $O(\sqrt{\log(n)})$-approximation of the MEB while omitting as few as $\tilde O(\nicefrac 1 \epsilon)$ points from $P$, and it was later improved to a  $O(1)$-approximation~\cite{NissimStemmer18}. The work of~\cite{GhaziKM20} does give a PTAS for the MEB problem, but their $(1+\gamma)$-approximation may leave $\tilde O(\nicefrac{\sqrt{d}}{\epsilon\gamma^{3}})$ datapoints uncovered\footnote{See Lemmas 59 \& 60 in~\cite{GhaziKM20}} and it runs in $n^{O(1/\gamma^2)}$-time where the constant hidden in the big-$O$ notation is huge; as it leverages on multiple tools that take $\exp({\sf dim})$-time to construct, such as almost-perfect lattices and list-decodable covers. It should be noted that all of these works actually study the related problem of $1$-cluster in which one is given an additional parameter $t$ and seeks to find the smallest MEB of a subset $Q\subset P$ where $|Q|\geq t$. Lastly (as was first commented in~\cite{GhaziKM20}, Section D.2.1.), a natural way to approximate the MEB problem is through minimizing the convex hinge-loss $L(\theta,x)=\frac 1 r\max\{0, \|x-\theta\|-r\}$ but its utility depends on $r$ (as the utility of DP-ERM scales with the Lipfshitz constant of the loss~\cite{BassilyST14}).\footnote{In fact, there's more to this discussion, as we detail at the end of the introduction.}

By far, one of the most prominent uses of the DP-approximations of the MEB problem lies in range estimation, as $O(1)$-approximations of the MEB can assist in reducing an a-priori large domain to a ball whose radius is proportional to the diameter of $P$. This helps in reducing the $L_2$-sensitivity of problems such as the mean and other distance related queries (e.g. PCA). So for example, if we have $\tilde \Omega(\frac{\sqrt{d}}{\gamma\epsilon})$ points in a ball of radius $10r_{opt}$ then a DP-approximation of the data's mean using the Gaussian mechanism (see Section~\ref{sec:preliminaries}) returns a point of distance $\leq \gamma r_{opt}$ to the true mean (a technique that is often applied in a Subsample-and-Aggregate framework~\cite{nissim2007smooth}). This averaging also gives an efficient $(2+\gamma)$-approximation of the MEB. But it is still unknown whether there exists a DP $c$-approximation of the MEB for $c<2$ whose runtime is below, say, $n^{100}$.

\paragraph{Our Contribution and Organization.} In this work, we give the first DP-fPTAS for the MEB problem. Our algorithm is very simple and so is its analysis. As input, we assume the algorithm is run after the algorithms of~\cite{NissimStemmer18} were already run, and as a ``starting point'' we have both (a) a real number $r_0$ which is a $4$-approximation of $r_{opt}$, and (b) a $10$-approximation of the MEB itself, namely a ball $B$ such that $P\subset B$,\footnote{We can always omit the few input points that may reside outside this ball.} which is centered at a point $\theta_0$ satisfying $\|\theta_0-\theta_{opt}\|\leq 10r_{opt}$.\footnote{We comment that replacing these $4$ and $10$ constants with any other constants merely changes the constants in our analysis in a very straight-forward way.}  It is now our goal to refine these parameters to a $(1+\gamma)$-approximation of the MEB. In fact, we can assume that we have a $(1+\gamma)$-approximation of the value of $r_{opt}$: we simply iterate over all powers: $\frac{r_0}{4}, \frac{r_0}{4}(1+\gamma), \frac{r_0}{4}(1+\gamma)^2, ..., r_0$ where for each guess of $r$ we apply a privacy preserving procedure returning either a point $\theta$ satisfying $P\subset B(\theta, r)$ or $\bot$. In our algorithm we simply use a binary-search over these $O(\nicefrac 1 \gamma)$ possible values, in order to save on the privacy-budget.

Now, given $\theta_0$ and some radius-guess $r$, our goal is to shift $\theta_0$ towards $\theta_{opt}$. So, starting from $\theta^0=\theta_0$, we repeat this simple iterative procedure: we take the mean $\mu$ of the points \emph{uncovered by the current $B(\theta^t,r)$} and update $\theta^{t+1}\gets \theta^t + \frac{\gamma^2}2(\mu-\theta^t)$. We argue that, if $r\geq r_{opt}$ then after $T=O(\gamma^{-2}\log(\nicefrac 1 \gamma))$-iterations we get $\theta^T$ such that $\|\theta^T-\theta_{opt}\|\leq \gamma r_{opt}$ and therefore have that $P\subset B(\theta^T, (1+\gamma)r)$. The reason can be easily seen from Figure~\ref{fig:uncovered_point_projection}~--- any point $x\in P$ which is uncovered by the current $B(\theta^t,r)$ must be closer to $\theta_{opt}$ than to $\theta^t$, and therefore must have a noticeable projection onto the direction $\theta_{opt}-\theta^t$. Thus, in a Perceptron-like style, making a $\Theta(\gamma^2)$-size step towards this $x$ must push us significantly in the $\theta_{opt}-\theta^t$ direction. We thus prove that if the distance of $\theta^t$ from $\theta_{opt}$ is large, this update step reduces our distance to $\theta_{opt}$. Note that our proof shows that in the non-private case it suffices to take any uncovered point in order to make this progress, or any convex-combination of the uncovered points. 

In the private case, rather than using the true mean of the uncovered points in each iteration, we have to use an approximated mean. So we prove that applying our iterative algorithm with a ``nice'' distribution whose mean has a large projection in the $\theta_{opt}-\theta^t$ direction also returns a good $\theta^T$ in expectation, and then amplify the success probability by na\"ive repetitions. We also give a SQ-style algorithm for approximating the MEB under proximity conditions between the true- and the noisy-mean, a result which may be of interest by itself. 
After discussing preliminaries in Section~\ref{sec:preliminaries}, 
we present both the standard (non-noisy) version of our algorithm and its noisy variation in Section~\ref{sec:MEB_fPTAS}.

Having established that our algorithm works even with a ``nice'' distribution whose mean approximates the mean of the uncovered points, all that is left is just to set the parameters of a privacy preserving algorithm accordingly. To that end we work with the notion of zCDP~\cite{BunS16} and apply solely the Gaussian mechanism. To obtain these nice properties, it follows that the  number of uncovered points must be $\Omega(\nicefrac{\sqrt{d}}{\epsilon^t})$ where $\epsilon^t$ is the privacy budget of the $t^{\rm th}$-iteration, or else we halt. And due to the composition theorem of DP it suffices to set $\epsilon^t = O(\nicefrac{\epsilon}{\sqrt T})$. This leads to a win-win situation: either we find in some iteration a ball that leaves no more than $\tilde O(\nicefrac{\sqrt{d}}{\gamma\epsilon})$ points uncovered, or we complete all iterations and obtain a ball of radius $\leq (1+\gamma)r$ that covers all of $P$. The full details of this analysis appear in Section~\ref{sec:DP_MEB_fPTAS}.
We then repeat this analysis but in the local-model, where each user adds Gaussian noise to her own  input point. This leads to a similar analysis incurring a $\sqrt{n}$-larger bounds, as detailed in Section~\ref{sec:local_DP_MEB_fPTAS}.

While at the topic of local-model DP (LDP) algorithms, it is worth mentioning that the algorithms of~\cite{NissimStemmer18}, which provide us with a good initial ``starting point'', do have a LDP-variant. Yet the LDP variants of these algorithms may leave as many as $n^{0.67}$ datapoints uncovered. So in  Appendix~\ref{apx_sec:preliminary_algorithms} we give simple differentially private algorithms (in both the curator- and local-models) that obtain such good $\theta_0$ and $r_0$. Formally, our LDP-algorithm returns a ball $B(\theta_0,r_0)$ s.t. by projecting all points in $P$ onto $B(\theta_0,r)$ we alter no more than $\tilde O(\sqrt{d}/\epsilon)$ points and obtain $P'\subset B(\theta_0, r_0)$ where $r_0\leq 6r_{opt}(P')$. Thus, combining our LDP algorithm for finding a good starting point together with the algorithm of Section~\ref{sec:local_DP_MEB_fPTAS} we get an overall $(1+\gamma)$-approximation of the MEB in the local model which may omit / alter as many as $\tilde O(\nicefrac{\sqrt{nd}}{\gamma\epsilon})$-points. We comment that while this improves on the previously best-known LDP algorithm's bound of $n^{0.67}$, our algorithm's dependency on parameters such as the dimension $d$ or grid-size\footnote{It is known~\cite{BunNSV15} that DP MEB-approximation requires the input points to lie on some prespecified finite grid.} is worse, and furthermore~-- that the analysis of~\cite{NissimStemmer18} (i) relates to the problem of $1$-cluster (finding a cluster containing $t\leq n$ many points) and (ii) separates between the required cluster size and the number of omitted points (which is much smaller and only logarithmic in $d$), two aspects that are not covered in our work.

Lastly, we provide empirical evaluations of our algorithm in Section~\ref{sec:experiments}
showing a rather ubiquitous performance across multiple datasets, and discuss open problems in Section~\ref{sec:conclusion}. 

\paragraph{Comparison with the ERM Baseline.} Recall that the MEB problem, given a suggested radius $r$ and a convex set $\Theta$, can be formulated as a ERM problem using a hinge-loss function $\ell^1(\theta;x) = \max\{ 0, \frac{\|x-\theta\|-r}{{\rm diam}(\Theta)} \}$. Indeed, when ${{\rm diam}(\Theta)}\gg r$ then privately solving this ERM problem gives no useful guarantee about the result, but much like our algorithm one can first find some $\theta_0$ close up to, say, $10r$ to $\theta_{opt}$ and set $\Theta$ as a ball of radius $O(r)$. Since there exists $\theta_{opt}$ for which $\frac 1 n\sum_{x}\ell^1(\theta; x) = 0$, then private SGD~\cite{BassilyST14,BassilyFTT19} returns $\tilde \theta$ for which 
$\frac 1 n\sum_{x}\ell^1(\tilde \theta; x)\leq C\frac{\sqrt d}{\epsilon n}$ for some constant $C>0$. This upper-bounds the number of points that contribute $\gamma r$ to this loss at $C\frac{\sqrt d}{\epsilon \gamma}$, and so $|P\setminus B(\tilde\theta,(1+\gamma)r)| = O(\frac{\sqrt d}{\epsilon \gamma})$. However, the caveat is that the SGD algorithm achieves such low loss using $O(n^2)$-SGD iterations.\footnote{Unfortunately, the hinge-loss isn't smooth, ruling out the linear SGD of~\cite{FeldmanKT20}.} In contrast our analysis can be viewed as proving that for the equivalent ERM in the square of the norm, $\ell^2(\theta;x) = \max\{ 0, \frac{\|x-\theta\|^2-r^2}{{\rm diam}(\Theta)^2} \}$, it suffices to make only $\tilde O(\gamma^{-2})$ \emph{non-zero} gradient steps to have some $\theta^T$ s.t. $\|\theta^T-\theta_{opt}\|\leq \gamma r$ so that $B(\theta^T,(1+\gamma)r)$ covers all of the input. Thus, our result is obtained in linear $\tilde O(n/\gamma^2)$-time.

\section{Preliminaries}
\label{sec:preliminaries}

\paragraph{Notation.} Given a vector $v\in \R^d$ we denote its $L_2$-norm as $\|v\|$, and also use $\langle v, u \rangle$ to denote the dot-product between two $d$-dimensional vectors $u$ and $v$. A (closed) ball $B(\theta,r)$ is the set of all points $B(\theta, r)=\{x\in \R^d: \|x-\theta\|\leq r\}$. We use $\tilde O(\cdot)$ / $\tilde \Omega(\cdot)$ to denote big-$O$ / big-$\Omega$ dependency up to ${\rm poly}\log$ factors. We comment that in our work we made no effort to optimize constants.

\paragraph{The Gaussian and $\chi^2_d$-Distributions.} 
Given two parameters $\mu\in \R$ and $\sigma^2>0$ we denote ${\cal N}(\mu,\sigma^2)$ as the Gaussian distribution whose {\sf PDF} at a point $x\in \R$ is $(2\pi\sigma^2)^{0.5}\exp(-\frac{(x-\mu)^2}{2\sigma^2})$. Standard concentration bounds give that for any $x>1$ the probability $\Pr_{X\sim {\cal N}(\mu,\sigma^2)}[|X-\mu|\geq x\sigma]\leq 2\exp(-x^2/2)$. It is well-known that given two independent random variable $X\sim {\cal N}(\mu_1, \sigma_1^2)$ and $Y\sim {\cal N}(\mu_2,\sigma_2^2)$ their sum is distributed like a Gaussian $X+Y\sim {\cal N}(\mu_1+\mu_2, \sigma_1^2+\sigma_2^2)$. We also denote ${\cal N}(v,\sigma^2I_d)$ as the distribution over $d$-dimensional vectors where each coordinate $j$ is drawn i.i.d.~from ${\cal N}(v_j,\sigma^2)$. Given $X\sim {\cal N}(0,\sigma^2I_d)$ it is known that $\|X\|^2$ is distributed like a $\chi^2_d$-distribution; and known concentration bounds on the $\chi^2_d$-distribution give that for any $x>1$ the probability $\Pr_{X\sim{\cal N}(0,\sigma^2I_d)}[\|X\|^2 > \sigma^2(\sqrt{d}+x)^2]\leq \exp(-x^2/2)$.

\paragraph{Differential Privacy.}
Given a domain ${\cal X}$, two multi-sets $P,P'\in {\cal X}^n$ are called \emph{neighbors} if they differ on a single entry. An algorithm (alternatively, mechanism) $\calM$ is said to be \emph{$(\epsilon,\delta)$-differentially private} (DP)~\cite{DworkMNS06,DworkKMMN06} if for any two neighboring $P,P'$ and any set $S$ of possible outputs we have: $\Pr[\calM(P)\in S]\leq e^\epsilon \Pr[\calM(P')\in S] + \delta$.

An algorithm is said to be $\rho$-zero concentrated differentially privacy (zCDP)~\cite{BunS16} if for and two neighboring $P$ and $P'$ and any $\alpha>1$, the $\alpha$-R\'eyni divergence between the output distribution of $\calM(P)$ and of $\calM(P')$ is upper bounded by $\alpha\rho$, namely
\[ \forall \alpha>1, ~~ \frac{1}{\alpha-1}\log\left(\E_{x\sim \calM(P')}\left[\left(\frac{{\sf PDF}[\calM(P)=x]}  {{\sf PDF}[\calM(P')=x]}\right)^\alpha\right]\right) \leq \alpha\rho\]
It is a well-known fact that the composition of two $\rho$-zCDP mechanisms is $2\rho$-zCDP. It is also known that given a function $f:{\cal X}^n\to\R^d$ whose $L_2$-global sensitivity is $\max_{P\sim P'}\|f(P)-f(P')\|_2 \leq G$ then the Gaussian mechanism that returns $f(D)+X$ where $X\sim {\cal N}(0, \frac{G^2}{2\rho}I_d)$ is $\rho$-zCDP. Lastly, it is known that any $\rho$-zCDP mechanism is $(\epsilon,\delta)$-DP for any $\delta<1$ and $\epsilon = \rho + \sqrt{4\rho\ln(1/\delta)}$. This suggests that given $\epsilon\leq 1$ and $\delta \leq e^{-2}$ it suffices to use a $\rho$-zCDP mechanis with $\rho \leq \frac{\epsilon^2}{5\ln(1/\delta)}$.

The \emph{Local-Model} of DP: while standard algorithms in DP assume the existence of a trusted curator who has access to the raw data, in the local-model of DP no such curator exists. While the formal definition of the local-model involves the notion of protocols (see~\cite{Vadhan17} for a formal definition), for the context of this work it suffices to say each respondent randomized her own messages so that altogether they preserve $\rho$-zCDP.

\section{A Non-Private fPTAS for the MEB Problem}
\label{sec:MEB_fPTAS}

In this section we give our non-private algorithm. We first analyze it assuming no noise~-- namely, in each iteration we use the precise mean of the points that do not reside inside the ball $B(\theta^t,r)$. Later, in Section~\ref{subsec:noisy_version} we discuss a version of this algorithm in which rather than getting the exact mean, we get a point which is sufficiently close to the mean.

\begin{algorithm}
    \caption{Non-Private Minimum Enclosing Ball\label{alg:np-meb}}
     \hspace*{\algorithmicindent}\textbf{Input:} a set of $n$ points $P \subseteq \mathbb{R}^d$, an approximation parameter $\gamma \in (0,1)$, \\ \hspace*{\algorithmicindent} an initial radius $r_0$ s.t. $r_{opt}\leq r_0 \leq 4r_{opt}$, and an initial center $\theta_0$ s.t. $\|\theta_0 - \theta_{opt}\| \leq 10r_{opt}$.
    \begin{algorithmic}[1]
     \State Set $i_{\min} \gets 0$, $i_{\max} \gets \ln_{1+\gamma}(4)(\approx \frac{4}{\gamma})$, and $\theta^*\gets\theta_0$.
     \While{($i_{\min}<i_{\max}$)}
        \State $i_{cur} = \lfloor\frac{i_{\min}+i_{\max}}2\rfloor$
        \State $r_{cur} \leftarrow  (1+\gamma)^{i_{cur}}\cdot r_0 / 4$
        \State $\theta_{cur} \gets {\hyperref[alg:gd-meb]{\textrm{MMEB}}}(P, \gamma, r_{cur}, \theta_0)$
        \If {$P\subset B(\theta_{cut}, (1+\gamma)r_{cur})$} 
             \State Set $i_{\max}\gets i_{cur}$, $\theta^*\gets \theta_{cur}$ and $r^*\gets (1+\gamma)r_{cur}$
             \Else
             \State $i_{\min} \gets i_{cur}+1$
        \EndIf
     \EndWhile
     \State \Return $B(\theta^*, r^*)$
    \end{algorithmic}
\end{algorithm}

\begin{algorithm}
    \caption{Margin based Minimum Enclosing Ball (MMEB)\label{alg:gd-meb}}
     \hspace*{\algorithmicindent}\textbf{Input:} a set of $n$ points $P \subseteq \mathbb{R}^d$, an approximation parameter $\gamma \in (0,1)$, \\ \hspace*{\algorithmicindent} a candidate radius $r$, and an initial center $\theta_0$ s.t. $\|\theta_0 - \theta_{opt}\| \leq 10r_{opt}$.
    \begin{algorithmic}[1]
     \State Set $T \leftarrow \frac{4}{\gamma^2}\ln(\frac{100}{\gamma^2})$, and $\theta^0=\theta_0$.
     \For{$t=0, 1, 2, \ldots, T-1$}
        \If{($\{x \in P: x \notin B(\theta^t, r) \} = \emptyset$)}
            \Return $\theta^t$         \Else 
        \State Set $n^t_{w} \leftarrow |\{x \in P: x \notin B(\theta^t, r) \}|$
        and $\mu^t_{w} \leftarrow \frac{1}{n^t_{w}} \sum\limits_{x \notin B(\theta^t, r)} x$
        \State Update $\theta^{t+1} \leftarrow \theta^t - \frac{\gamma^2}{2}(\theta^t - \mu^t_{w})$
        \EndIf
     \EndFor
     \State \Return $\theta^T$
    \end{algorithmic}
\end{algorithm}

\begin{theorem}
    \label{thm:Alg_no_noise_fptas}
    For any $P\subset \R^d$, denote $B(\theta_{opt}, r_{opt})$ as the MEB of $P$. Then Algorithm~\ref{alg:np-meb} returns a ball $B(\theta,r)$ where $P\subset B(\theta,r)$ and $r\leq (1+3\gamma)r_{opt}$.
\end{theorem}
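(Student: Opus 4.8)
The plan is two layers: a correctness guarantee for the inner routine $\textrm{MMEB}$ (Algorithm~\ref{alg:gd-meb}) whenever the guessed radius is at least $r_{opt}$, and then a short wrapper argument for the binary search of Algorithm~\ref{alg:np-meb}. For the first layer I claim: if $r\ge r_{opt}$ then $\textrm{MMEB}(P,\gamma,r,\theta_0)$ returns a center $\theta$ with $P\subseteq B(\theta,(1+\gamma)r)$. If the routine halts early (because $B(\theta^t,r)$ already covers $P$) this is immediate; otherwise I track the potential $\Phi_t:=\|\theta^t-\theta_{opt}\|^2$, with $\Phi_0\le(10r_{opt})^2=100\,r_{opt}^2$ by the input promise. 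The mean $\mu^t_w$ of the uncovered points has two properties I will use. First, each uncovered $x$ lies in $P\subseteq B(\theta_{opt},r_{opt})$ and $\mu^t_w$ is a convex combination of such points, so $\|\mu^t_w-\theta_{opt}\|\le r_{opt}$. Second (the margin, cf.\ Figure~\ref{fig:uncovered_point_projection}), each uncovered $x$ obeys $\|x-\theta^t\|>r\ge r_{opt}\ge\|x-\theta_{opt}\|$, hence lies strictly on the $\theta_{opt}$-side of the bisector of $\theta^t,\theta_{opt}$; expanding $\|x-\theta^t\|^2>\|x-\theta_{opt}\|^2$ gives $\langle x-\theta^t,\theta_{opt}-\theta^t\rangle>\tfrac12\|\theta_{opt}-\theta^t\|^2$, and averaging then rearranging yields $\langle \mu^t_w-\theta_{opt},\ \theta^t-\theta_{opt}\rangle\le\tfrac12\Phi_t$.

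Writing $\lambda:=\tfrac{\gamma^2}{2}\in(0,\tfrac12)$, the update is the convex combination $\theta^{t+1}-\theta_{opt}=(1-\lambda)(\theta^t-\theta_{opt})+\lambda(\mu^t_w-\theta_{opt})$, so expanding $\Phi_{t+1}=\|\theta^{t+1}-\theta_{opt}\|^2$ and feeding in the two properties above,
\[
\Phi_{t+1}\ \le\ (1-\lambda)^2\Phi_t+2\lambda(1-\lambda)\cdot\tfrac12\Phi_t+\lambda^2 r_{opt}^2\ =\ (1-\lambda)\Phi_t+\lambda^2 r_{opt}^2 .
\]
This affine recurrence has fixed point $\Phi^\star=\lambda r_{opt}^2=\tfrac{\gamma^2}{2}r_{opt}^2$, so $\Phi_t-\Phi^\star\le(1-\lambda)^t(\Phi_0-\Phi^\star)$. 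Since $T=\tfrac{4}{\gamma^2}\ln(100/\gamma^2)$ makes $(1-\lambda)^T\cdot 100\,r_{opt}^2\le\tfrac{\gamma^2}{2}r_{opt}^2$ (using $-\ln(1-\lambda)\ge\lambda$), we obtain $\Phi_T\le 2\Phi^\star=\gamma^2 r_{opt}^2$, and one also checks directly from the recurrence that $\Phi$ never climbs back above $\gamma^2 r_{opt}^2$ once there; hence $\|\theta^T-\theta_{opt}\|\le\gamma r_{opt}$. Then every $x\in P$ satisfies $\|x-\theta^T\|\le\|x-\theta_{opt}\|+\gamma r_{opt}\le(1+\gamma)r_{opt}\le(1+\gamma)r$, proving the claim.

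For the wrapper: by the claim just proved, the test on line~6 of Algorithm~\ref{alg:np-meb} passes at every index $i$ with $r_{cur}=(1+\gamma)^i r_0/4\ge r_{opt}$; since $r_{opt}\le r_0\le 4r_{opt}$ such indices exist (and include the top of the range), so the binary search terminates having recorded some $(\theta^*,r^*)$ for which the test passed, i.e.\ $P\subseteq B(\theta^*,r^*)$. By the usual ``leftmost-satisfying'' binary-search invariant, the recorded index is at most the smallest index $i^*$ with $r_{cur}\ge r_{opt}$ (rejecting an index $\ge i^*$ would contradict the claim); and because consecutive candidate radii differ by the factor $(1+\gamma)$ while $r_{opt}$ lies between two of them, $r_{cur}^{(i^*)}<(1+\gamma)r_{opt}$. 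Therefore
\[
r^*=(1+\gamma)\,r_{cur}^{(i^*)}<(1+\gamma)^2 r_{opt}\le(1+3\gamma)\,r_{opt},
\]
the last step because $\gamma\le 1$ gives $2\gamma+\gamma^2\le3\gamma$. Together with $P\subseteq B(\theta^*,r^*)$ this is exactly the assertion of the theorem.

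The conceptual crux is the second property in the first paragraph: the $\Theta(\Phi_t^{1/2})$-sized projection of the uncovered points toward $\theta_{opt}$ is what collapses the cross term in the expansion down to $\lambda(1-\lambda)\Phi_t$, so that the recurrence telescopes at the clean rate $(1-\lambda)$ — without it the triangle inequality alone would only give $\|\mu^t_w-\theta_{opt}\|\le r_{opt}$, yielding a mere $2$-approximation. The one piece of actual care is therefore the algebra of the second paragraph (getting the recurrence and checking that the prescribed $T=\tfrac{4}{\gamma^2}\ln(100/\gamma^2)$ truly drives $\Phi_T$ below $\gamma^2 r_{opt}^2$); a minor loose end is confirming the binary search cannot terminate without having recorded a valid ball in the boundary case where only the largest candidate radius works.
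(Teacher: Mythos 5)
Your proof is correct and follows essentially the same route as the paper: the same halfspace/margin claim giving $\innerproduct{\mu^t_w-\theta_{opt}}{\theta^t-\theta_{opt}}\le\tfrac12\|\theta^t-\theta_{opt}\|^2$, the same expansion of the convex-combination update into the recurrence $\Phi_{t+1}\le(1-\lambda)\Phi_t+\lambda^2 r_{opt}^2$ with $\lambda=\gamma^2/2$, and the same binary-search wrapper with $i^*=\min\{i:\tfrac{r_0}{4}(1+\gamma)^i\ge r_{opt}\}$. The only cosmetic difference is that you solve the affine recurrence by contraction to its fixed point $\Phi^\star=\lambda r_{opt}^2$, whereas the paper's Lemma~\ref{lem:MMEB_returns_close_point} splits into the cases $\|\theta^t-\theta_{opt}\|\ge\gamma r_{opt}$ (pure multiplicative decay) and $<\gamma r_{opt}$ (once below, stays below); both yield $\|\theta^T-\theta_{opt}\|\le\gamma r_{opt}$ in $T$ steps.
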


At the core of the proof of Theorem~\ref{thm:Alg_no_noise_fptas} lies the following lemma.

\begin{lemma}
    \label{lem:MMEB_returns_close_point}
    Applying Algorithm~\ref{alg:gd-meb} with any $r\geq r_{opt}$ and any $\theta_0$ where $\|\theta_0-\theta_{opt}\|\leq 10r_{opt}$ we obtain a $\theta$ where $\|\theta-\theta_{opt}\|\leq \gamma r_{opt}$ in at most $T$ iterations. 
\end{lemma}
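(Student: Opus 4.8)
The plan is to track the distance $d_t := \|\theta^t-\theta_{opt}\|$ across the iterations of MMEB and show it contracts geometrically until it reaches $\gamma r_{opt}$, after which it never leaves that regime. Write $\rho:=r_{opt}$, and for an iteration $t$ whose uncovered set is non-empty let $p_t := \langle \theta_{opt}-\theta^t,\ \mu^t_w-\theta^t\rangle$. The geometric heart of the argument is a \emph{margin inequality}: if $x\in P$ is uncovered then $\|x-\theta^t\|>r\ge\rho\ge\|x-\theta_{opt}\|$, so expanding $\|x-\theta_{opt}\|^2 = \|x-\theta^t\|^2 - 2\langle x-\theta^t,\ \theta_{opt}-\theta^t\rangle + d_t^2$ and using $\|x-\theta_{opt}\|^2\le\rho^2<\|x-\theta^t\|^2$ gives \emph{simultaneously} $\langle x-\theta^t,\ \theta_{opt}-\theta^t\rangle > \tfrac12 d_t^2$ and $\|x-\theta^t\|^2 \le \rho^2 + 2\langle x-\theta^t,\ \theta_{opt}-\theta^t\rangle - d_t^2$. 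Both right-hand sides are affine in $x$, so averaging over the uncovered points and using convexity of $\|\cdot\|^2$ (hence $\|\mu^t_w-\theta^t\|^2 \le \tfrac{1}{n^t_w}\sum_{x\notin B(\theta^t,r)}\|x-\theta^t\|^2$) yields $p_t > \tfrac12 d_t^2$ and the key second-moment bound $\|\mu^t_w-\theta^t\|^2 \le \rho^2 + 2p_t - d_t^2$.

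These feed into a one-step recursion. From the update rule we have $\theta^{t+1}-\theta_{opt} = (\theta^t-\theta_{opt}) + \tfrac{\gamma^2}{2}(\mu^t_w-\theta^t)$, hence $d_{t+1}^2 = d_t^2 - \gamma^2 p_t + \tfrac{\gamma^4}{4}\|\mu^t_w-\theta^t\|^2$. Substituting the second-moment bound makes the coefficient of $p_t$ equal to $-\gamma^2(1-\tfrac{\gamma^2}{2})$, which is negative for $\gamma\in(0,1)$, so we may replace $p_t$ by its lower bound $\tfrac12 d_t^2$; after the quartic terms cancel I expect to arrive at $d_{t+1}^2 \le (1-\tfrac{\gamma^2}{2}) d_t^2 + \tfrac{\gamma^4}{4}\rho^2$. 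As anticipated in the introduction, this uses only that $\mu^t_w$ is \emph{some} convex combination of the uncovered points.

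It then remains to iterate. Plugging $d_t^2\le\gamma^2\rho^2$ into the recursion gives $d_{t+1}^2\le\gamma^2\rho^2$, so the region $\{d\le\gamma\rho\}$ is forward-invariant; this is precisely what lets us conclude about the returned iterate $\theta^T$ rather than only about $\min_t d_t$. While $d_t>\gamma\rho$ we instead have $\tfrac{\gamma^4}{4}\rho^2<\tfrac{\gamma^2}{4} d_t^2$, hence $d_{t+1}^2 < (1-\tfrac{\gamma^2}{4}) d_t^2$. Combining the two cases, an easy induction from $d_0^2\le 100\rho^2$ gives $d_t^2 \le \max\{\gamma^2\rho^2,\ (1-\tfrac{\gamma^2}{4})^t\cdot 100\rho^2\}$ for every $t$ the loop reaches, and since $T=\tfrac{4}{\gamma^2}\ln(\tfrac{100}{\gamma^2})$ satisfies $(1-\tfrac{\gamma^2}{4})^T\cdot 100 \le 100\,e^{-\gamma^2 T/4} = \gamma^2$, we obtain $\|\theta^T-\theta_{opt}\|\le\gamma r_{opt}$. (If instead the loop halts early at some $t^*<T$, that is because $P\subseteq B(\theta^{t^*},r)\subseteq B(\theta^{t^*},(1+\gamma)r)$, which is already what the calling Algorithm~\ref{alg:np-meb} needs, so the distance bound is not required there.) The one delicate point I anticipate is in the margin step: one must retain the second-moment estimate $\|\mu^t_w-\theta^t\|^2 \le \rho^2 + 2p_t - d_t^2$ rather than the coarser triangle-inequality bound $\|\mu^t_w-\theta^t\|\le\rho+d_t$, as only the former makes the recursion close at exactly the constant stated in the lemma.
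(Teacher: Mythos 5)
Your proof is correct and follows essentially the same route as the paper's: your margin inequality for uncovered points is equivalent to the paper's Claim~\ref{clm:pt_outside_ball_dotproduct}, it feeds into the identical one-step recursion $\|\theta^{t+1}-\theta_{opt}\|^2\le(1-\frac{\gamma^2}{2})\|\theta^{t}-\theta_{opt}\|^2+\frac{\gamma^4}{4}r_{opt}^2$, and the same two-case analysis follows (geometric contraction while $d_t>\gamma r_{opt}$, forward-invariance once $d_t\le\gamma r_{opt}$). The only cosmetic difference is that you control the quadratic term via the second-moment bound $\|\mu^t_w-\theta^t\|^2\le r_{opt}^2+2p_t-d_t^2$, whereas the paper writes the update as a convex combination and uses $\|\mu^t_w-\theta_{opt}\|\le r_{opt}$ from convexity of the optimal ball; both yield the same constants.
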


It is important to note that Lemma~\ref{lem:MMEB_returns_close_point} holds even if in each iteration the update step isn't based on the mean $\mu^t$ of the set of uncovered point, but rather \emph{any} convex combination of the uncovered points. Specifically, even if we use in each iteration a single point which is uncovered by $B(\theta^t,r)$, then the algorithm's convergence in $T$ steps can be guaranteed.

\begin{proof}[Proof of Theorem~\ref{thm:Alg_no_noise_fptas}]
    Suppose Lemma~\ref{lem:MMEB_returns_close_point} indeed holds. Then it immediately implies whenever Algorithm~\ref{alg:gd-meb} is run with $r\geq r_{opt}$ we obtain a point $\theta$ where $P\subset B(\theta_{opt},r_{opt}) \subset B(\theta, (1+\gamma)r_{opt})$. Denote $i^* = \min\{i\in \N: \frac {r_0}4 (1+\gamma)^i \geq r_{opt}\}$. It is simple to prove inductively that in each iteration of Algorithm~\ref{alg:np-meb} we have that $i^* \geq i_{\min}$. Next, call an integer $i$ successful if we obtain for its radius $r_{cur}(i)$ some point $\theta$ where $P\subset B(\theta, (1+\gamma)r_{cur}(i))$. Again, it is simple to argue inductively that $i_{\max}$ is always successful.
    It follows that when the binary search of Algorithm~\ref{alg:np-meb} terminates, $i_{\min}=i_{\max}$ and we have a successful $i$, and so we return a ball of radius $\frac{r_0}4 (1+\gamma)^{i_{\min}}\cdot (1+\gamma) \leq (1+\gamma)^2 r_{opt} \leq (1+3\gamma)r_{opt}$ which contains all points in $P$, thus concluding our proof.
\end{proof}

\begin{wrapfigure}{r}{0.52\textwidth}
    \centering
    \vspace{-.65cm}
    \includegraphics[scale=0.58]{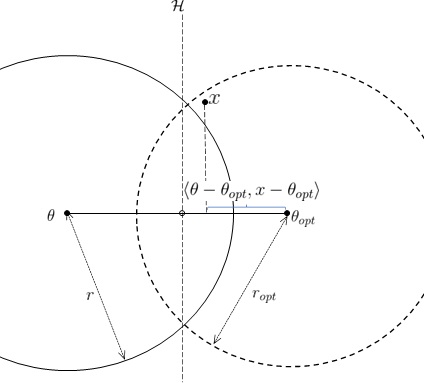}
    \caption{\label{fig:uncovered_point_projection} For a point $x$ uncovered by $B(\theta,r)$ where $r\geq r_{opt}$, it must be that $x$'s projection onto the $\overline{\theta\theta_{opt}}$-line is closer to $\theta_{opt}$ than to $\theta$.}
    \vspace{-1cm}
\end{wrapfigure}

Thus, all that is left is to prove Lemma~\ref{lem:MMEB_returns_close_point}. Its proof, in turn, requires the following claim.

\begin{claim}\label{clm:pt_outside_ball_dotproduct}
    Given a set of $n$ points $P \subseteq \mathbb{R}^d$, let $B(\theta_{opt}, r_{opt})$ denote the MEB of $P$. Let $\theta \in \mathbb{R}^d$ be an arbitrary point, and let $r$ be any real number where $r\geq r_{opt}$. Then for any $x \in P$ s.t.~$\|\theta - x\| > r$ it holds that \[\innerproduct{\theta - \theta_{opt}}{x - \theta_{opt}} \leq \frac{1}{2}\|\theta - \theta_{opt}\|^2\]
\end{claim}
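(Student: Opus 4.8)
The plan is to reduce the claimed dot-product bound to the elementary geometric fact that an uncovered point $x$ must lie strictly closer to $\theta_{opt}$ than to $\theta$, and then to expand a single squared norm. First I would record the two ingredients. Since $x\in P$ and $B(\theta_{opt},r_{opt})$ is the MEB of $P$, we have $P\subseteq B(\theta_{opt},r_{opt})$ and hence $\|x-\theta_{opt}\|\leq r_{opt}$. By the hypothesis of the claim, $\|\theta-x\|>r\geq r_{opt}\geq \|x-\theta_{opt}\|$, so in particular $\|\theta-x\|^2>\|x-\theta_{opt}\|^2$.

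Next I would write $\theta-x=(\theta-\theta_{opt})-(x-\theta_{opt})$ and expand:
\[ \|\theta-x\|^2 \;=\; \|\theta-\theta_{opt}\|^2 \;-\; 2\innerproduct{\theta-\theta_{opt}}{x-\theta_{opt}} \;+\; \|x-\theta_{opt}\|^2 . \]
Substituting this into the inequality $\|\theta-x\|^2>\|x-\theta_{opt}\|^2$ and cancelling the common term $\|x-\theta_{opt}\|^2$ from both sides gives $\|\theta-\theta_{opt}\|^2 - 2\innerproduct{\theta-\theta_{opt}}{x-\theta_{opt}}>0$, which rearranges precisely to $\innerproduct{\theta-\theta_{opt}}{x-\theta_{opt}}<\tfrac12\|\theta-\theta_{opt}\|^2$, and in particular the (non-strict) inequality asserted in the claim.

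There is essentially no obstacle here: the only point requiring care is invoking the defining property of the MEB to get $\|x-\theta_{opt}\|\leq r_{opt}$, and then tracking the chain $\|\theta-x\|>r\geq r_{opt}\geq\|x-\theta_{opt}\|$ so that $r$ and $r_{opt}$ drop out entirely; everything else is one line of algebra. Geometrically this is exactly what Figure~\ref{fig:uncovered_point_projection} depicts --- the inequality $\innerproduct{\theta-\theta_{opt}}{x-\theta_{opt}}\leq\tfrac12\|\theta-\theta_{opt}\|^2$ is equivalent to saying that the orthogonal projection of $x$ onto the line $\overline{\theta\theta_{opt}}$ falls on the $\theta_{opt}$-side of the midpoint of the segment $\overline{\theta\theta_{opt}}$, i.e.\ closer to $\theta_{opt}$ than to $\theta$.
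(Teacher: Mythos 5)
Your proof is correct and is in substance the same argument as the paper's: both hinge on the observation that $\|x-\theta_{opt}\|\leq r_{opt}\leq r<\|x-\theta\|$, so $x$ is strictly closer to $\theta_{opt}$ than to $\theta$. The paper phrases the final step geometrically (the open half-space on the $\theta_{opt}$-side of the perpendicular bisector of $\overline{\theta\theta_{opt}}$), while you carry out the equivalent one-line algebraic expansion of $\|\theta-x\|^2$; these are the same computation.
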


\begin{proof}
    Let $x\in P$ be a point s.t.~$x\notin B(\theta,r)$, as depicted in Figure~\ref{fig:uncovered_point_projection}.
    Let $m$ be the middle point $\frac {\theta + \theta_{opt}}{2}$, and let $\mathcal{H}$ be the hyperplane orthogonal to $\theta - \theta_{opt}$ which passes through $m$. Denote $\mathcal{H}^+$ as the (open) half-space $\mathcal{H}^+ = \{ z \in \mathbb{R}^d : \|z - \theta_{opt}\| < \|z - \theta\| \}$. Therefore $x \in \mathcal{H}^+$ which in turn implies that 
    \[\innerproduct{x - \theta_{opt}}{\theta-\theta_{opt}} < \innerproduct{m - \theta_{opt}}{\theta-\theta_{opt}} = \frac{1}{2}\|\theta - \theta_{opt}\|^2 \ \hfill\qedhere\]
\end{proof}

We are now ready to prove our main lemma.

\begin{proof}[Proof of Lemma~\ref{lem:MMEB_returns_close_point}]
First, we argue that in any iteration $t$ of Algorithm~\ref{alg:gd-meb} where $\{x \in P: x \notin B(\theta^t, r) \} \neq \emptyset$ it holds that $\|\theta^{t+1} - \theta_{opt}\|^2 \leq (1 - \frac{\gamma^2}{2})\|\theta^{t} - \theta_{opt}\|^2 + (\frac{\gamma^2}{2})^2\cdot r_{opt}^2$. That is because by definition
    \begin{align}
    \|\theta^{t+1} - \theta_{opt}\|^2 &= \left\| \left((1 - \frac{\gamma^2}{2})\theta^t + \frac{\gamma^2}{2} \mu^t_{w}\right)- \theta_{opt} \right\|^2 
    = \left\| (1 - \frac{\gamma^2}{2})\left(\theta^t-\theta_{opt}\right) + \frac{\gamma^2}{2} \left(\mu^t_{w}- \theta_{opt}\right) \right\|^2
    \cr &= (1 - \frac{\gamma^2}{2})^2 \cdot \|\theta^t - \theta_{opt}\|^2 
     + 2\frac{\gamma^2}{2}(1 - \frac{\gamma^2}{2}) \innerproduct{\theta^t - \theta_{opt}}{\mu_{w^t} - \theta_{opt}}  + (\frac{\gamma^2}{2})^2\cdot \|\mu_{w^t} - \theta_{opt}\|^2 \notag
     \intertext{
     Claim~\ref{clm:pt_outside_ball_dotproduct} gives that $\innerproduct{\theta^t - \theta_{opt}}{\mu^t_{w} - \theta_{opt}} = \frac 1 {n^t_w}\sum\limits_{x\notin B(\theta^t,r)}\innerproduct{\theta^t - \theta_{opt}}{x - \theta_{opt}} \leq \frac{1}{2}\|\theta^t - \theta_{opt}\|^2$, so
     }
    & \leq  (1 - \frac{\gamma^2}{2})^2 \cdot \|\theta^t - \theta_{opt}\|^2 
           + 2(\frac{\gamma^2}{2} - \frac{\gamma^4}{4})\cdot \frac{1}{2}\|\theta^t - \theta_{opt}\|^2  + (\frac{\gamma^2}{2})^2\cdot \|\mu_{w^t} - \theta_{opt}\|^2 \notag
    \intertext{Lastly note that the ball $B(\theta_{opt},r_{opt})$ is convex and so}
    &\leq (1 - \gamma^2 + \frac{\gamma^4}{4}) \cdot \|\theta^t - \theta_{opt}\|^2 
           + (\frac{\gamma^2}{2} - \frac{\gamma^4}{4})\cdot \|\theta^t - \theta_{opt}\|^2  + \frac{\gamma^4}{4}\cdot r_{opt}^2
           \cr & \leq (1 - \frac{\gamma^2}{2})\|\theta^{t} - \theta_{opt}\|^2 + \frac{\gamma^4}{4}\cdot r_{opt}^2 \label{eq:progress_of_algorithm_MMEB}
    \end{align}
    So now, consider any iteration of Algorithm~\ref{alg:gd-meb} with $r\geq r_{opt}$ and where $\|\theta^t-\theta_{opt}\|\geq \gamma r_{opt}$ and in which we make an update step. Due to Equation~\eqref{eq:progress_of_algorithm_MMEB}
\begin{align*}
    \|\theta^{t+1} - \theta_{opt}\|^2 &\leq (1 - \frac{\gamma^2}{2})\|\theta^{t} - \theta_{opt}\|^2  + \frac{\gamma^4}{4}\cdot r_{opt}^2  
     \leq (1 - \frac{\gamma^2}{2})\|\theta^{t} - \theta_{opt}\|^2  + \frac{\gamma^4}{4}\cdot \frac{\|\theta^t - \theta_{opt}\|^2}{\gamma^2} 
    \cr & = (1 - \frac{\gamma^2}{4})\|\theta^{t} - \theta_{opt}\|^2 \leq e^{- \frac{\gamma^2}{4}}\|\theta^{t} - \theta_{opt}\|^2
\end{align*}
This suggests that after $T = \frac{4}{\gamma^2}\ln(\frac{100}{\gamma^2})$ iterations where $\|\theta^t-\theta_{opt}\|\geq \gamma r_{opt}$ we get that
\[  \|\theta^T-\theta_{opt}\|^2 \leq e^{- \frac{T\gamma^2}{4}}\|\theta_{0} - \theta_{opt}\|^2 \leq \frac{\gamma^2}{100} \cdot 100 r^2_{opt}  = \gamma^2 r^2_{opt}   \]
as required. Now, should it be the case that in some iteration $\|\theta^t-\theta_{opt}\| < \gamma r_{opt}$ and we make an update step. Again, Equation~\eqref{eq:progress_of_algorithm_MMEB} asserts that 
\begin{align*}
    \|\theta^{t+1} - \theta_{opt}\|^2 &\leq (1 - \frac{\gamma^2}{2})\|\theta^{t} - \theta_{opt}\|^2  + \frac{\gamma^4}{4}\cdot r_{opt}^2 
    < (1 - \frac{\gamma^2}{2})\gamma^2 r_{opt}^2  + \frac{\gamma^4}{4}\cdot r_{opt}^2
    < \gamma^2 r_{opt}^2
\end{align*}
so once $\|\theta^t-\theta_{opt}\|<\gamma r_{opt}$ then we have that $\|\theta^\tau-\theta_{opt}\|<\gamma r_{opt}$ for all $\tau \geq t$.
\end{proof}

We comment that non-privately, it is rather simple to obtain a good $r_0$ and a good starting point $\theta_0$: $r_0 = {\rm diam}(P)$ which is known to be upper bounded by $2r_{opt}$ and $\theta_0$ can be any $x\in P$ which is within distance $r_{opt}$ from the true center of the MEB of $P$. Next, we comment that Algorithm~\ref{alg:gd-meb} runs in time $O(T\cdot n)$ since the averaging of the points in $P\setminus B(\theta^t,r)$ takes $O(n)$-time na\"ively. Thus, overall, the runtime of Algorithm~\ref{alg:np-meb} is $O(nT\log(\nicefrac 1 \gamma))=O(n \frac{\log^2(1/\gamma)}{\gamma^2})$. Lastly, we comment that in Algorithm~\ref{alg:gd-meb} we could replace the mean $\mu_w^t$ of the uncovered points with any convex combination (even a single $x\notin B(\theta^t,r)$) and the analysis carries through. This implies that the ERM discussed in the introduction (with based on the loss-function $\ell^2$)  requires a constant step-rate and can halt after $\tilde O(\gamma^{-2})$ iterations of \emph{non-zero} gradients.

\subsection{The Noisy/SQ-Version of the fPTAS for the MEB Problem}
\label{subsec:noisy_version}

Now, we consider  a scenario where in each iteration $t$, rather than using the exact mean $\mu_w^t = \frac{\sum_{x\in P\setminus B(\theta^t,r)}x}{|P\setminus B(\theta^t,r)|}$, we obtain an approximated mean
$\tilde \mu_w^t = \mu_w^t + \Delta^t$. We consider here two scenarios: (a) where $\Delta^t$ is a zero-mean bounded-variance random noise~--- a setting we refer to as \emph{random noise} from now own; and (b) where $\Delta^t$ is an arbitrary noise subject to the constraint that $\|\Delta^t\| = O(\gamma r)$~--- a setting we refer to as \emph{arbitrary small noise}. Since the latter isn't used in our algorithm we defer it to Appendix~\ref{apx_sec:SQNoise}.

\paragraph{The random noise setting.} In this setting, our update step in each iteration is made not using a deterministically chosen uncovered point but rather by a draw from a distribution ${\cal D}^t$ whose mean is ``as good'' as an uncovered point. This requires us to make two changes to the algorithm: (i) modify the update rate and (ii) repeat the entire algorithm $R = O(\log(1/\beta))$ times. 

\newcommand{\calD}{\mathcal{D}}
\begin{claim}
\label{clm:random_step_algorithm}
Consider an altered version of Algorithm~\ref{alg:gd-meb} which (1) repeats the algorithm $R = \lceil\log_{\nicefrac 4 3}(1/\beta)\rceil$ times, (2) each repetition is composed of at most $T = \frac {4096}{\gamma^2} \ln(\frac{121\cdot 4}{\gamma^2})$ update-steps and (3) in each iteration where it doesn't terminate it draws a point $z\sim \calD^t$ and makes that update-step: $\theta^{t+1}\gets \theta^t + \frac{\gamma^2}{2048}z$. If it holds that for each iteration $t$ we have that $\calD^t$ satisfies the  two properties
\begin{align}
    \textrm{(i)} &\E\limits_{z \sim \mathcal{D}^t}\left[\innerproduct{\theta_{opt} - \theta^t} {z}~|~ \theta^t \right] \geq \frac{1}{4}\|\theta^t - \theta_{opt}\|^2\cr 
    \textrm{(ii)}  &\E\limits_{z \sim \mathcal{D}^t}[\|z\|^2~|~ \theta^t] \leq 512r^2 \label{eq:requirements_of_good_dist}
\end{align}
then, provided that $r\geq r_{opt}$, we have that w.p. $\geq 1-\beta$ one of the $R$ repetitions of the revised algorithm returns a candidate center $\theta^T$ where  $P\subset B(\theta^T, (1+\gamma)r)$. 
\end{claim}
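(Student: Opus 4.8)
The plan is to run the potential-function argument of Lemma~\ref{lem:MMEB_returns_close_point} in expectation, then convert the resulting expectation bound into a constant-probability success guarantee via Markov's inequality, and finally amplify to probability $1-\beta$ using the $R$ independent repetitions. Write $\eta=\tfrac{\gamma^2}{2048}$ for the step size and $\Phi^t=\|\theta^t-\theta_{opt}\|^2$ for the potential, and let $\mathcal{F}_t$ denote the randomness of the first $t$ draws of a given repetition (so $\theta^0,\dots,\theta^t$ are $\mathcal{F}_t$-measurable). In any iteration where an update step is performed, expanding $\theta^{t+1}=\theta^t+\eta z$ for $z\sim\calD^t$ gives
\[\E[\Phi^{t+1}\mid\mathcal{F}_t]=\Phi^t-2\eta\,\E_{z\sim\calD^t}\!\left[\innerproduct{\theta_{opt}-\theta^t}{z}\mid\theta^t\right]+\eta^2\,\E_{z\sim\calD^t}\!\left[\|z\|^2\mid\theta^t\right],\]
and substituting properties (i) and (ii) of~\eqref{eq:requirements_of_good_dist} yields the one-step contraction
\[\E[\Phi^{t+1}\mid\mathcal{F}_t]\leq\bigl(1-\tfrac{\gamma^2}{4096}\bigr)\Phi^t+\tfrac{\gamma^4r^2}{8192}.\]
This is the stochastic analogue of Equation~\eqref{eq:progress_of_algorithm_MMEB}; note it holds for \emph{every} value of $\theta^t$, because property (i) already packages the geometry of Claim~\ref{clm:pt_outside_ball_dotproduct}.

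Next I would account for early termination, the one place where the argument genuinely differs from the deterministic one. If at some iteration $t$ the set $\{x\in P:x\notin B(\theta^t,r)\}$ is empty, the repetition stops and outputs $\theta^t$ with $P\subset B(\theta^t,r)\subseteq B(\theta^t,(1+\gamma)r)$, which is an automatic success; hence, letting $\tau$ be the first such iteration, a repetition can fail only if $\tau>T$ and $\Phi^T>\gamma^2r^2$ (here I use that $\|\theta^T-\theta_{opt}\|\leq\gamma r$ forces $\|p-\theta^T\|\leq\|p-\theta_{opt}\|+\gamma r\leq r+\gamma r$ for every $p\in P$, since $r\geq r_{opt}$). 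I would then set $Y_t=\Phi^t\cdot\mathbf{1}[\tau>t]$ and check that it satisfies the same recursion in expectation: on $\{\tau\leq t\}$ we have $Y_t=Y_{t+1}=0$, while on $\{\tau>t\}$ an update step is made so the displayed recursion applies and $Y_{t+1}\leq\Phi^{t+1}$; in all cases $\E[Y_{t+1}\mid\mathcal{F}_t]\leq\bigl(1-\tfrac{\gamma^2}{4096}\bigr)Y_t+\tfrac{\gamma^4r^2}{8192}$.

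Unrolling this inequality from $Y_0\leq\Phi^0=\|\theta_0-\theta_{opt}\|^2\leq100r_{opt}^2\leq100r^2$ and summing the geometric series of the additive terms gives
\[\E[Y_T]\leq\bigl(1-\tfrac{\gamma^2}{4096}\bigr)^T\cdot100r^2+\frac{\gamma^4r^2/8192}{\gamma^2/4096}\leq e^{-T\gamma^2/4096}\cdot100r^2+\tfrac12\gamma^2r^2.\]
With $T=\tfrac{4096}{\gamma^2}\ln(\tfrac{484}{\gamma^2})$ the first term equals $\tfrac{100}{484}\gamma^2r^2$, so $\E[Y_T]\leq\bigl(\tfrac{100}{484}+\tfrac12\bigr)\gamma^2r^2<\tfrac34\gamma^2r^2$; by Markov's inequality $\Pr[\tau>T,\ \Phi^T>\gamma^2r^2]=\Pr[Y_T>\gamma^2r^2]\leq\tfrac34$, so a single repetition succeeds with probability $\geq\tfrac14$. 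Since the $R=\lceil\log_{4/3}(1/\beta)\rceil$ repetitions are independent, the probability that they all fail is at most $(3/4)^R\leq\beta$, and verifying $P\subset B(\theta,(1+\gamma)r)$ for each returned center is trivial, so with probability $\geq1-\beta$ the algorithm outputs a successful $\theta^T$, as claimed.

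The main obstacle is the early-termination/fluctuation issue in the second step: one cannot carry over the deterministic invariant ``once $\|\theta^t-\theta_{opt}\|<\gamma r$ it stays there'' from Lemma~\ref{lem:MMEB_returns_close_point}, since under random noise $\Phi^t$ may bounce back above $\gamma^2r^2$. The remedy is to give up on path-wise control and bound only $\E[\Phi^T]$ (equivalently $\E[Y_T]$), folding early termination into the indicator $\mathbf{1}[\tau>t]$, while being careful that the constants hard-wired into $\eta$ and $T$ leave $\E[Y_T]$ below $\gamma^2r^2$ by a constant factor, so that Markov's inequality produces a usable success probability.
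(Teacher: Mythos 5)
Your proposal is correct and follows essentially the same route as the paper's proof: the same one-step conditional-expectation contraction from properties (i) and (ii), the same unrolling to $\E[Y^T]\leq\tfrac34\gamma^2r^2$, Markov's inequality for a per-repetition success probability of $\tfrac14$, and amplification over the $R$ independent repetitions. Your explicit treatment of early termination via the stopped potential $Y_t=\Phi^t\cdot\mathbf{1}[\tau>t]$ is a small tidiness refinement the paper glosses over, but it does not change the argument.
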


\begin{proof}
    To prove the claim it suffices to show that in a single execution of the algorithm we have that $\Pr[\|\theta^T-\theta_{opt}\|\leq \gamma r] = \Pr[\|\theta^T-\theta_{opt}\|^2\leq \gamma^2 r^2]\geq \nicefrac 1 4$, implying that in $R$ repetitions of the algorithm the failure probability decreases to $(\nicefrac 3 4)^R=\beta$. To that end, denote the non-negative random variables $Y^t = \|\theta^t-\theta_{opt}\|^2$ for each iteration $t$. Note that if we show that $\E[Y^T]\leq \frac 3 4 \gamma^2 r^2$ then Markov's inequality implies that $\Pr[Y^T\geq \gamma^2 r^2]\leq \nicefrac 3 4$. So our goal is to prove that $\E[Y^T]\leq \frac 3 4 \gamma^2 r^2$. 

    We can now analyze the conditional expectation and observe that 
    \begin{align*}
    &\E\left[\|\theta^{t+1} - \theta_{opt}\|^2 ~|~ \theta^t\right] = \E\left[\left\|\theta^t - \theta_{opt} + \frac{\gamma^2}{2048}z\right\|^2~|~ \theta^t\right]
    \cr & ~~~ = \E\left[\|\theta^t - \theta_{opt}\|^2 + \frac{2\gamma^2}{2048}\innerproduct{z}{\theta^t - \theta_{opt}} + (\frac{\gamma^2}{2048})^2\|z\|^2~|~ \theta^t\right]
    \cr & ~~~ \stackrel{z \sim \mathcal{D}^t}\leq \|\theta^t - \theta_{opt}\|^2 - \frac{2\gamma^2}{2048}\cdot \frac{1}{4}\|\theta^t - \theta_{opt}\|^2 + \frac{\gamma^4\cdot 512r^2}{2048^2}
     = (1 - \frac{\gamma^2}{4096})\|\theta^t - \theta_{opt}\|^2 + \frac{\gamma^4}{8192}r^2
    \end{align*}
    Since $\E[Y^{t+1} ~|~ \theta^t] \leq (1 - \frac{\gamma^2}{4096})Y^t + \frac{\gamma^4}{8192}r^2$ then
it is easy to see that $\E[Y^T]\leq (1-\frac{\gamma^2}{4096})^T \cdot Y^0 + \frac{\gamma^4}{8192}r^2 \sum_{t=0}^{T-1}(1-\frac{\gamma^2 }{4096})^t \leq (1-\frac{\gamma^2}{4096})^t \cdot (11r)^2 + \frac{\gamma^2}{2}r^2$. It follows that iteration $T=\frac {4096}{\gamma^2} \ln(\frac{121\cdot 4}{\gamma^2})$ we have that $\E[Y^T]\leq \frac{\gamma^2}{4}r^2 + \frac{\gamma^2}2 r^2 = \frac{3}{4}\gamma^2 r^2$ as required.
\end{proof}

\begin{corollary}
    \label{cor:random_step_algorithm_whp}
    Suppose that in each iteration $t$ of the revised algorithm $\calD^t$ is a distribution that satisfies the required two properties of Claim~\ref{clm:random_step_algorithm} w.p. $\geq 1 - \frac{1}{8T \cdot \lceil\log_{8/7}(1/\beta)\rceil }$. Then, repeating this algorithm $R = \lceil\log_{8/7}(1/\beta)\rceil$ many times we have that w.p. $\geq 1-\beta$ it holds that for at least one repetition we have $P\subset (B(\theta^T,(1+\gamma)r)$.
\end{corollary}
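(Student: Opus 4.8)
The plan is to re-run the argument of Claim~\ref{clm:random_step_algorithm}, paying the small per-iteration probability that the distribution $\mathcal{D}^t$ is ``bad'' (i.e.\ violates property (i) or (ii)). I would analyze a single repetition $\rho$ first and show it succeeds (outputs $\theta^T$ with $P\subset B(\theta^T,(1+\gamma)r)$) with probability at least $\tfrac14-\tfrac1{8R}$, where $R=\lceil\log_{8/7}(1/\beta)\rceil$, and then boost by independence across repetitions. The obstacle is that the event ``$\mathcal{D}^s$ is good for all $s$'' involves iterations \emph{after} $t$, which can be correlated with the step $z^t$, so one cannot simply condition the contraction recursion on the global good event. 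The device that resolves this is a dominating sequence: let $\tau$ be the first iteration $t\le T$ at which $\mathcal{D}^t$ is not good (and $\tau=T$ if all are good), set $\hat Y^t=\|\theta^t-\theta_{opt}\|^2$ for $t\le\tau$, and for $t>\tau$ let $\hat Y^t$ follow the deterministic recursion $\hat Y^t=(1-\tfrac{\gamma^2}{4096})\hat Y^{t-1}+\tfrac{\gamma^4}{8192}r^2$.

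The key step is to verify that $\E[\hat Y^{t+1}\mid\mathcal{F}^t]\le(1-\tfrac{\gamma^2}{4096})\hat Y^t+\tfrac{\gamma^4}{8192}r^2$ holds \emph{unconditionally}: when $t<\tau$ the distribution $\mathcal{D}^t$ is good and this is exactly the conditional estimate computed inside the proof of Claim~\ref{clm:random_step_algorithm}; when $t\ge\tau$ it holds by the very definition of $\hat Y^t$; and on the boundary event $\{\tau=t+1\}$ one combines the two, using that both the good-case value and the deterministic value are dominated by $(1-\tfrac{\gamma^2}{4096})\hat Y^t+\tfrac{\gamma^4}{8192}r^2$. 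Unrolling this recursion over $T=\tfrac{4096}{\gamma^2}\ln(\tfrac{121\cdot4}{\gamma^2})$ steps, with $\hat Y^0=\|\theta_0-\theta_{opt}\|^2\le(11r)^2$, gives $\E[\hat Y^T]\le\tfrac34\gamma^2r^2$ verbatim as in Claim~\ref{clm:random_step_algorithm}, and Markov's inequality yields $\Pr[\hat Y^T>\gamma^2r^2]\le\tfrac34$.

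To finish, let $G_\rho$ be the event that every $\mathcal{D}^0,\dots,\mathcal{D}^{T-1}$ used in repetition $\rho$ is good; a union bound over the at most $T$ iterations gives $\Pr[\overline{G_\rho}]\le T\cdot\tfrac1{8T\cdot R}=\tfrac1{8R}$. On $G_\rho$ we have $\hat Y^T=\|\theta^T-\theta_{opt}\|^2$, and if $\|\theta^T-\theta_{opt}\|\le\gamma r$ then (since $r\ge r_{opt}$) $P\subset B(\theta_{opt},r_{opt})\subseteq B(\theta^T,(1+\gamma)r)$, i.e.\ repetition $\rho$ succeeds; hence $\Pr[\text{repetition }\rho\text{ fails}]\le\Pr[\hat Y^T>\gamma^2r^2]+\Pr[\overline{G_\rho}]\le\tfrac34+\tfrac1{8R}$. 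Since the $R$ repetitions use independent randomness, the probability that all of them fail is at most $\bigl(\tfrac34+\tfrac1{8R}\bigr)^R\le\bigl(\tfrac78\bigr)^R\le\beta$, using $R\ge1$ for the middle inequality and $R=\lceil\log_{8/7}(1/\beta)\rceil$ for the last. I expect the only non-routine point to be the measurability/coupling justification for the dominating sequence; everything else is a direct reuse of the contraction estimate already established in the proof of Claim~\ref{clm:random_step_algorithm}.
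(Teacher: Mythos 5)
Your proposal is correct and follows the same route as the paper: a union bound over the bad-draw events (you do it per repetition, getting $\tfrac{1}{8R}$; the paper does it over all $RT$ draws at once, getting $\tfrac18$), subtraction of that probability from the per-repetition success probability $\geq \tfrac14$ supplied by Claim~\ref{clm:random_step_algorithm}, and amplification over the $R$ independent repetitions to $(\tfrac78)^R\leq\beta$. The paper's three-line proof simply asserts the subtraction, whereas your stopped/dominating-sequence construction rigorously justifies the step it glosses over — that conditioning on all the $\mathcal{D}^s$ being good could in principle disturb the contraction recursion — so your write-up is a more careful rendition of the same argument rather than a different one.
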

\begin{proof}
    Using the union bound, it follows that in one of the $R\cdot T$ repetition of the revised algorithm the probability that one draw isn't from a  good $\calD^t$ (that does satisfy these two properties) is at most $\nicefrac 1 8$. It follows that $\Pr[Y^T\geq \gamma^2 r^2]\geq \nicefrac 1 4 - \nicefrac 1 8=\nicefrac 1 8$. Repeating this algorithm $R$ reduces the failure probability to $(\nicefrac 7 8)^R\leq \beta$.
\end{proof}

\DeclareRobustCommand{\SQNoiseParagraph}{For completeness, we also bring the SQ-model version of the algorithm where in each iteration we obtain an approximated center $\tilde \mu^t$ where $\Delta^t =  \tilde \mu_w ^t- \mu^t_w$ is of magnitude propostional to $\gamma r$. 
We modify Algorithm~\ref{alg:gd-meb} so that our update scale shrinks by a constant factor to $\gamma^2/8$, namely we set $\theta^{t+1}\gets (1-\frac{\gamma^2}8)\theta^t + \frac{\gamma^2}{8}\tilde \mu_w^t$. We now prove that the revised algorithm still converges to a point close to $\theta_{opt}$.

\begin{lemma}
    \label{lem:noisy_MMEB_returns_close_point}
    Applying Algorithm~\ref{alg:gd-meb} with any $4r_{opt}\geq r\geq r_{opt}$ and any $\theta_0$ where $\|\theta_0-\theta_{opt}\|\leq 10r_{opt}$, where in each iteration we use an approximated mean $\tilde \mu_w^t=\mu_w^t + \Delta^t$ where $\|\Delta^t\|\leq \frac{\gamma r}{16}  \leq \frac{\gamma r_{opt}} 4$  we obtain a $\theta$ where $\|\theta-\theta_{opt}\|\leq \gamma r_{opt}$ in at most $16T=\frac{64}{\gamma^2}\ln(\nicefrac{100}{\gamma^2})$ iterations. 
\end{lemma}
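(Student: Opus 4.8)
The plan is to follow the proof of Lemma~\ref{lem:MMEB_returns_close_point} line by line, carrying along the extra terms created by the perturbation $\Delta^t$ and checking that the reduced step size $\gamma^2/8$ keeps the recursion contractive. Write $a=\gamma^2/8$ and $D_t=\|\theta^t-\theta_{opt}\|$. First I would expand, exactly as in the display leading to Equation~\eqref{eq:progress_of_algorithm_MMEB}, using $\theta^{t+1}-\theta_{opt}=(1-a)(\theta^t-\theta_{opt})+a(\tilde\mu_w^t-\theta_{opt})$, to obtain
\[\|\theta^{t+1}-\theta_{opt}\|^2 = (1-a)^2 D_t^2 + 2a(1-a)\innerproduct{\theta^t-\theta_{opt}}{\tilde\mu_w^t-\theta_{opt}} + a^2\|\tilde\mu_w^t-\theta_{opt}\|^2 .\]
Then I would split the middle inner product as $\innerproduct{\theta^t-\theta_{opt}}{\mu_w^t-\theta_{opt}}+\innerproduct{\theta^t-\theta_{opt}}{\Delta^t}$: the first summand is at most $\tfrac12 D_t^2$ by Claim~\ref{clm:pt_outside_ball_dotproduct} (applied to each uncovered point and averaged, as in the original proof), and the second is at most $\|\Delta^t\|\,D_t\le\tfrac{\gamma r_{opt}}{4}D_t$ by Cauchy--Schwarz, where the bound $\|\Delta^t\|\le\tfrac{\gamma r_{opt}}{4}$ is exactly where the hypothesis $r\le 4r_{opt}$ enters (combined with $\|\Delta^t\|\le\tfrac{\gamma r}{16}$). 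For the last term, convexity of $B(\theta_{opt},r_{opt})$ gives $\|\mu_w^t-\theta_{opt}\|\le r_{opt}$, hence $\|\tilde\mu_w^t-\theta_{opt}\|\le\tfrac54 r_{opt}$ and $a^2\|\tilde\mu_w^t-\theta_{opt}\|^2\le 2a^2 r_{opt}^2$.

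Collecting these bounds and using $(1-a)^2+a(1-a)=1-a$ yields the one-step estimate
\[\|\theta^{t+1}-\theta_{opt}\|^2 \le \Bigl(1-\tfrac{\gamma^2}{8}\Bigr)D_t^2 + \tfrac{\gamma^3 r_{opt}}{16}D_t + \tfrac{\gamma^4}{32}r_{opt}^2 ,\]
which is the analogue of Equation~\eqref{eq:progress_of_algorithm_MMEB}, the only new feature being the linear-in-$D_t$ term from the noise. From here the argument is the same two-case split as in Lemma~\ref{lem:MMEB_returns_close_point}. If $D_t\ge\gamma r_{opt}$, then $\gamma r_{opt}\le D_t$ lets me absorb $\tfrac{\gamma^3 r_{opt}}{16}D_t\le\tfrac{\gamma^2}{16}D_t^2$ and $\tfrac{\gamma^4}{32}r_{opt}^2\le\tfrac{\gamma^2}{32}D_t^2$, so $\|\theta^{t+1}-\theta_{opt}\|^2\le(1-\tfrac{\gamma^2}{32})D_t^2\le e^{-\gamma^2/32}D_t^2$; since $D_0\le 10r_{opt}$, iterating this drives $D_t^2$ below $\gamma^2 r_{opt}^2$ within $\tfrac{32}{\gamma^2}\ln(\tfrac{100}{\gamma^2})$ update steps, comfortably fewer than $16T=\tfrac{64}{\gamma^2}\ln(\tfrac{100}{\gamma^2})$. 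If instead $D_t<\gamma r_{opt}$, substituting $D_t<\gamma r_{opt}$ into the one-step estimate gives $\|\theta^{t+1}-\theta_{opt}\|^2<(1-\tfrac{\gamma^2}{32})\gamma^2 r_{opt}^2<\gamma^2 r_{opt}^2$, so once an iterate lands within $\gamma r_{opt}$ of $\theta_{opt}$ it stays there; combining the cases shows some $\theta^t$ with $t\le 16T$ satisfies $\|\theta^t-\theta_{opt}\|\le\gamma r_{opt}$ (and if the loop terminates earlier because $B(\theta^t,r)$ already covers $P$, that is all that is needed downstream).

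\textbf{Main obstacle.} The only place the argument genuinely departs from the noiseless proof is the bookkeeping of constants in the one-step estimate. With the original step size $\gamma^2/2$ the very same manipulations would produce a coefficient $1+\Theta(\gamma^2)$ in front of $D_t^2$ rather than $1-\Theta(\gamma^2)$, so no contraction would be provable; shrinking the update to $\gamma^2/8$ (as the lemma prescribes) leaves the net factor $1-\gamma^2/32$, which is precisely what makes the perturbed recursion work. Everything else mirrors the proof of Lemma~\ref{lem:MMEB_returns_close_point}, and in particular no probabilistic amplification is required here, since $\Delta^t$ is an arbitrary vector of controlled norm rather than random noise.
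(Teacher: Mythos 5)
Your proposal is correct and follows essentially the same route as the paper's proof: the same expansion of $\|\theta^{t+1}-\theta_{opt}\|^2$, the same splitting of the cross term into the Claim~\ref{clm:pt_outside_ball_dotproduct} piece plus a Cauchy--Schwarz bound on $\innerproduct{\theta^t-\theta_{opt}}{\Delta^t}$, the same use of $r\le 4r_{opt}$ to convert $\|\Delta^t\|\le\gamma r/16$ into $\gamma r_{opt}/4$, the same two-case split on $D_t\gtrless \gamma r_{opt}$, and the same ``once within $\gamma r_{opt}$ it stays there'' closing argument. The only differences are bookkeeping: you bound the last term via $\|\tilde\mu_w^t-\theta_{opt}\|\le\tfrac54 r_{opt}$ while the paper uses $\|a+b\|^2\le2\|a\|^2+2\|b\|^2$, yielding your slightly tighter contraction $(1-\gamma^2/32)$ versus the paper's $(1-\gamma^2/64)$; both comfortably fit the $16T$ budget.
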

\begin{proof}
    First, analogously to Lemma~\ref{lem:MMEB_returns_close_point} we have that in each update step we get
        \begin{align*}
    \|\theta^{t+1} - \theta_{opt}\|^2 &= \left\| \left((1 - \frac{\gamma^2}{8})\theta^t + \frac{\gamma^2}{8} \tilde\mu^t_{w}\right)- \theta_{opt} \right\|^2 
    = (1 - \frac{\gamma^2}{8})^2 \cdot \|\theta^t - \theta_{opt}\|^2 
     \cr&~~ + 2\frac{\gamma^2}{8}(1 - \frac{\gamma^2}{8}) \left(\innerproduct{\theta^t - \theta_{opt}}{\mu_{w}^t - \theta_{opt}}+\innerproduct{\theta^t-\theta_{opt}}{\Delta^t}\right)  + (\frac{\gamma^2}{8})^2\cdot \|\mu_{w}^t - \theta_{opt}+\Delta^t\|^2
    \cr & \leq  (1 - \frac{\gamma^2}{8})^2 \cdot \|\theta^t - \theta_{opt}\|^2 
           + 2(\frac{\gamma^2}{8} - \frac{\gamma^4}{64})\cdot \left(\frac{1}{2}\|\theta^t - \theta_{opt}\|^2 + \|\theta^t - \theta_{opt}\|\cdot \frac{\gamma r_{opt}}{4}\right) 
           \cr & ~~ + (\frac{\gamma^2}{8})^2\cdot \left( 2\|\mu_{w}^t - \theta_{opt}\|^2+2\frac{\gamma^2r_{opt}^2}{4^2}\right)
           \cr & \leq (1 - \frac{\gamma^2}{8})^2\|\theta^{t} - \theta_{opt}\|^2 
           + 2(\frac{\gamma^2}{8} - \frac{\gamma^4}{64})\cdot \|\theta^t - \theta_{opt}\|\left(\frac{1}{2}\|\theta^t - \theta_{opt}\| + \frac{\gamma r_{opt}}{4}\right) + \frac{3\gamma^4}{64}r_{opt}^2
    \end{align*}
    It follows that in each iteration where $\|\theta^t-\theta_{opt}\| \geq \gamma r_{opt}$ we get that
    \begin{align*}    \|\theta^{t+1} - \theta_{opt}\|^2&\leq 
    (1 - \frac{2\gamma^2}{8}+\frac{\gamma^4}{64})\|\theta^{t} - \theta_{opt}\|^2 
           + 2(\frac{\gamma^2}{8} - \frac{\gamma^4}{64})\cdot\frac{3}{4} \|\theta - \theta_{opt}\|^2  + \frac{3\gamma^4r_{opt}^2}{64}
           \cr &< (1 - \frac{\gamma^2}{16}) \|\theta^t-\theta_{opt}\|^2 +  \frac{3\gamma^2}{64} \|\theta^t-\theta_{opt}\|^2 = (1-\frac{\gamma^2}{64})\|\theta^t-\theta_{opt}\|^2 
    \end{align*}
    suggesting that after $16T=\frac{64}{\gamma^2}\ln(100/\gamma^2)$ iteration at most it must hold that 
    \[ \|\theta^{16T}-\theta_{opt}\|^2 \leq  \exp(-\frac{64}{\gamma^2}\ln(100/\gamma^2)\cdot \frac{\gamma^2}{64})\|\theta_0-\theta_{opt}\|^2 \leq \frac{\gamma^2}{100}\cdot 100r^2_{opt}  = \gamma^2 r_{opt}^2\]
    As required.
    Similarly, if at some iteration $t$ it holds that $\|\theta^t-\theta_{opt}\| < \gamma r_{opt}$ then we get that
    \begin{align*}   
    \|\theta^{t+1} - \theta_{opt}\|^2&\leq 
    (1 - \frac{\gamma^2}{8})^2\gamma^2 r_{opt}^2 
           + 2(\frac{\gamma^2}{8} - \frac{\gamma^4}{64})\cdot\frac{3}{4} \gamma^2 r_{opt}^2  + \frac{3\gamma^4r_{opt^2}}{64}
           \cr &\leq \gamma^2 r_{opt}^2\left( 1 - \frac{2\gamma^2}{8} + \frac{\gamma^4}{64} + \frac{3\gamma^2}{2\cdot 8} - \frac{3\gamma^4}{2\cdot 64} + \frac{3\gamma^2}{64}\right) \leq (1-\frac{\gamma^2}{64})\gamma^2r_{opt}^2
    \end{align*}
    suggesting yet again that $\|\theta^\tau-\theta_{opt}\|<\gamma r_{opt}$ for all $\tau \geq t$.
\end{proof}
}

\cut{
\section{New Results}

\begin{claim}
    Let $\Delta_s \sim \mathcal{N}(\Bar{0}, \sigma_s^2 I_d)$ and $\Delta_c \sim \mathcal{N}(0, \sigma_c^2)$. Let $n_w^t \leftarrow |\{x \in P: x \notin B(\theta^t, r) \}|$. Denote the distribution of $\frac{1}{n_w^t + \Delta_c} \left( \sum\limits_{x \notin B(\theta^t, r)}{(x - \theta^t)} + \Delta_{s} \right)$ as $\mathcal{D}$. Then $\mathcal{D}$ applies:
    \begin{enumerate}
    \item $\innerproduct{\theta_{opt} - \theta^t}{\E\limits_{z \sim \mathcal{D}} [z]} \geq \frac{1}{4}\|\theta^t - \theta_{opt}\|^2$
    \item $\E\limits_{z \sim \mathcal{D}}[\|z\|^2] \leq 512r_{opt}^2$
    \end{enumerate}
\end{claim}

\begin{proof}
\noindent
    \begin{enumerate}
        \item 
        \begin{align*}
            \innerproduct{\theta_{opt} - \theta^t}{\E\limits_{z \sim \mathcal{D}} [z]} &= \innerproduct{\theta_{opt} - \theta^t}{\E\left[\frac{1}{n_w^t + \Delta_c} \right] \sum\limits_{x \notin B(\theta^t, r)}{(x - \theta^t)}}
            \cr & = \innerproduct{\theta_{opt} - \theta^t}{\E\left[\frac{n_w^t}{n_w^t + \Delta_c} \right] \frac{\sum\limits_{x \notin B(\theta^t, r)}{(x - \theta^t)}}{n_w^t}}
            \cr & = \E\left[\frac{1}{1 + \nicefrac{\Delta_c}{n_w^t}} \right]\innerproduct{\theta_{opt} - \theta^t}{ \frac{\sum\limits_{x \notin B(\theta^t, r)}{x}}{n_w^t} - \theta^t}
            \cr & \stackrel{\rm Claim~\ref{clm:pt_outside_ball_dotproduct}}\geq \E\left[\frac{1}{1 + \nicefrac{\Delta_c}{n_w^t}} \right]\cdot\frac{1}{2}\|\theta^t - \theta_{opt}\|^2
            \cr & \stackrel{|\Delta_c| \leq O(\nicefrac{n_w^t}{\sqrt{d}})}\geq \frac{1}{4}\|\theta^t - \theta_{opt}\|^2
        \end{align*}
        
        \item
        \begin{align*}
            \E\limits_{z \sim \mathcal{D}}[\|z\|^2] &= \E\left[\left\|\frac{\sum\limits_{x \notin B(\theta^t, r)}{(x - \theta^t) + \Delta_s}}{n_w^t + \Delta_c}\right\|^2\right]
            \cr & = \E\left[\innerproduct{\frac{\sum\limits_{x \notin B(\theta^t, r)}{(x - \theta^t)}}{n_w^t + \Delta_c} + \frac{\Delta_s}{n_w^t + \Delta_c}}{\frac{\sum\limits_{x \notin B(\theta^t, r)}{(x - \theta^t)}}{n_w^t + \Delta_c} + \frac{\Delta_s}{n_w^t + \Delta_c}}\right]
            \cr & = \E\left[\left(\frac{1}{1 + \nicefrac{\Delta_c}{n_w^t}}\right)^2\right]\left\|\frac{\sum\limits_{x \notin B(\theta^t, r)}{x}}{n_w^t} - \theta^t\right\|^2 + 0 + 0 + \E\left[\frac{\|\Delta_s\|^2}{(n_w^t + \Delta_c)^2}\right]
            \cr & \leq \E\left[\left(\frac{1}{1 + \nicefrac{\Delta_c}{n_w^t}}\right)^2\right](11r_{opt})^2 + \E\left[\frac{\|\Delta_s\|^2}{(n_w^t + \Delta_c)^2}\right]
            \cr & \stackrel{\rm under~ {\cal E}_1}\leq 4\cdot(11r_{opt})^2 + \E[\|\Delta_s\|^2]\cdot\frac{4}{(n_w^t)^2} 
            \cr & \stackrel{\rm under~ {\cal E}_2}\leq 4\cdot(11r_{opt})^2 + \frac{RT\cdot (88r)^2}{\rho}\left(\sqrt{d}+\sqrt{2\ln(\nicefrac{4RT}{\beta_0})}\right)^2\cdot\frac{4}{(n_w^t)^2} 
            \cr & \stackrel{(*)}\leq 484r_{opt}^2 + 4\cdot r_{opt}^2 = 488r_{opt}^2 < 512r_{opt}^2
        \end{align*}
    \end{enumerate}
    ($\ast$) since $ (\tilde n_w^t)^2 \geq \frac{RT \cdot 88^2}{\rho}\left(\sqrt{d}+\sqrt{2\ln(\nicefrac{4RT}{\beta_0})}\right)^2$ in order for us to make an update.
\end{proof}

}

\section{A Differentially Private fPTAS for the MEB Problem}
\label{sec:DP_MEB_fPTAS}

We now turn our attention to the privacy-preserving versions of Algorithms~\ref{alg:np-meb} and~\ref{alg:gd-meb}. In this section we give their curator-model $\rho$-zCDP versions (Algorithms~\ref{alg:dp-meb} and~\ref{alg:ngd-meb} resp.), whereas in the following section (Section~\ref{sec:local_DP_MEB_fPTAS}) we detail their local-model zCDP versions. 

\begin{algorithm}[ht]
    \caption{Differentially Private Minimum Enclosing Ball (DP-MEB)}
    \label{alg:dp-meb}
     \hspace*{\algorithmicindent}\textbf{Input:} a set of $n$ points $P \subseteq \mathbb{R}^d$, an approximation parameter $\gamma \in (0,1)$, \\ \hspace*{\algorithmicindent} an initial radius $r_0$ s.t. $r_{opt}\leq r_0 \leq 4r_{opt}$, and an initial center $\theta_0$ s.t. $\|\theta_0 - \theta_{opt}\| \leq 10r_{opt}$, error parameter $\beta$ and privacy-parameter $\rho$.
    \begin{algorithmic}[1]
     \State Remove any $x\in P$ which doesn't belong to $B(\theta_0, 11r_0)$.
     \State Set $i_{\min} \gets 0$, $i_{\max} \gets \ln_{1+\gamma}(4)(\approx \frac{4}{\gamma})$, and $\theta^*\gets\theta_0$.
     \State Set $B \gets \lceil \log_2\left(\ln_{1+\gamma}(4)\right)\rceil$.
     \While{($i_{\min}<i_{\max}$)}
        \State $i_{cur} = \lfloor\frac{i_{\min}+i_{\max}}2\rfloor$
        \State $r_{cur} \leftarrow  (1+\gamma)^{i_{cur}}\cdot r_0 / 4$
        \State $\theta_{cur} \gets {\hyperref[alg:ngd-meb]{\textrm{DP-MMEB}}}(P, \gamma, \frac{\beta}{B}, \frac{\rho}{B} , r_{cur}, \theta_0)$
        \If {($\theta_{cur}\neq \bot$)} 
             \State Set $i_{\max}\gets i_{cur}$, $\theta^*\gets \theta_{cur}$ and $r^* \gets (1+\gamma)r_{cur}$
             \Else
             \State $i_{\min} \gets i_{cur}+1$
        \EndIf
     \EndWhile
     \State \Return $B(\theta^*, r^*)$
   \end{algorithmic}
\end{algorithm}

\newlength\myindent
\setlength\myindent{5mm}
\newcommand\bindent{%
  \begingroup
  \setlength{\itemindent}{\myindent}
  \addtolength{\algorithmicindent}{\myindent}
}
\newcommand\eindent{\endgroup}

\begin{algorithm}[ht]
    \caption{DP-Margin based Minimum Enclosing Ball (DP-MMEB)}
    \label{alg:ngd-meb}
     \hspace*{\algorithmicindent}\textbf{Input:} a set of $n$ points $P \subseteq \mathbb{R}^d$, an approximation parameter $\gamma \in (0,1)$, \\ \hspace*{\algorithmicindent} an error parameter $\beta \in (0,1)$, privacy parameter $\rho$, \\ \hspace*{\algorithmicindent} a candidate radius $r$, and an initial center $\theta_0$ s.t. $\|\theta_0 - \theta_{opt}\| \leq 10r_{opt}$.
    \begin{algorithmic}[1]
     \State Set $R\gets \lceil\log_{\nicefrac 8 7}(\nicefrac 1 \beta) \rceil$,  $\theta^0\gets \theta_0$, $T \leftarrow \frac{4096}{\gamma^2}\ln(\frac{484}{\gamma^2})$, $\beta_0 = \frac 1 {16RT}$, 
      $\sigma_{count}^2 \leftarrow \frac{R(T+1)}{\rho}$,
      and $\sigma_{sum}^2 \leftarrow \frac{RT\cdot (88r)^2}{\rho}$.
    \Repeat {
     \For{$t=0,1,2, \ldots, T-1$}
        \State Sample $\Delta_{count}\sim \mathcal{N}(0, \sigma_{count}^2)$.
        \State $\Tilde{n}^t_{w} \leftarrow |\{x \in P: x \notin B(\theta^t, r) \}| + \Delta_{count}$
        \If{\big($\tilde n_w^t < \frac{88\sqrt{RT}}{\sqrt\rho}\left(\sqrt{d}+\sqrt{2\ln(\nicefrac{4RT}{\beta_0})}\right)$\big)}
            \State \Return $\theta^t$
        \EndIf
        \State Sample $\Delta_{sum} \sim \mathcal{N}(0, \sigma_{sum}^2I_d)$.
        \State Set $\Tilde{\mu}^t_{w} \leftarrow \frac{1}{\Tilde{n}_{w^t}}
        \left(\sum\limits_{x \notin B(\theta^t, r)}{(x - \theta^t)} + \Delta_{sum}\right)$.
        \State Update $\theta^{t+1} \leftarrow \theta^t + \frac{\gamma^2}{2048}\Tilde{\mu}_{w^t}$
     \EndFor
     \State Sample $\Delta_{count}\sim {\cal N}(0, \sigma^2_{count})$.
     \If{\big(      $|P\setminus B(\theta^T,(1+\gamma)r)\}| + \Delta_{count}\leq \sqrt{\frac{2R(T+1)\log(4R(T+1)/\beta_0)}{\rho}}$ \big)} 
     {\Return $\theta^T$ {and halt  }} {}
     \EndIf}
    \Until {$R$ repetitions}
    \State \Return $\bot$
    \end{algorithmic}
\end{algorithm}

\DeclareRobustCommand{\DPanalysis}{
\subsection{Privacy Analysis}
\label{subsec:DP_alg_privacy_analysis}
\begin{lemma}\label{lem:iterative_alg_is_dp}
    \hyperref[alg:ngd-meb]{Algorithm 4} satisfies \emph{$\rho$-zCDP}.
\end{lemma}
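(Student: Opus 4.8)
The plan is to present Algorithm~\ref{alg:ngd-meb} as an \emph{adaptive composition} of Gaussian mechanisms and then add up the per-step privacy losses with the zCDP composition theorem. Over its at-most-$R$ repetitions, each consisting of at most $T$ loop iterations followed by one final test, the algorithm touches the data only through: (i) at most $R(T+1)$ \emph{counting} queries — the $\tilde n_w^t$ at line 5 of each iteration and the one count in the final test — each released after adding an independent $\mathcal{N}(0,\sigma_{count}^2)$ draw; and (ii) at most $RT$ \emph{vector-sum} queries $\sum_{x\notin B(\theta^t,r)}(x-\theta^t)$, each released after adding an independent $\mathcal{N}(0,\sigma_{sum}^2 I_d)$ draw. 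Everything else the algorithm does — forming $\tilde\mu_w^t$, the two thresholding tests, the update $\theta^{t+1}\gets\theta^t+\frac{\gamma^2}{2048}\tilde\mu_w^t$, the early returns, and the final output of $\theta^t$ or $\bot$ — is post-processing of this sequence of noisy answers together with the data-independent inputs $\theta_0,r_0,\gamma,\rho,\beta$. In particular the center $\theta^t$ used to define the $t$-th query pair is a deterministic function of the answers released before step $t$, so when the composition theorem lets us condition on the transcript so far, $\theta^t$ is a \emph{fixed} vector in $\R^d$ and the $t$-th queries become ordinary functions of $P$.

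I would then bound the $L_2$-sensitivity of each released quantity. A counting query has sensitivity $1$: replacing one entry of $P$ changes the number of points outside any fixed ball by at most one, so by the Gaussian-mechanism fact recalled in Section~\ref{sec:preliminaries} each noisy count is $\frac{1}{2\sigma_{count}^2}=\frac{\rho}{2R(T+1)}$-zCDP. For a vector-sum query I would show that every relevant summand has norm $\|x-\theta^t\|\le 44r$, so replacing one entry moves $\sum_{x\notin B(\theta^t,r)}(x-\theta^t)$ by at most $88r$ in $L_2$, making each noisy sum $\frac{(88r)^2}{2\sigma_{sum}^2}=\frac{\rho}{2RT}$-zCDP. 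To get $\|x-\theta^t\|\le 44r$ one uses that after the preprocessing in line 1 of Algorithm~\ref{alg:dp-meb} all surviving points lie in $B(\theta_0,11r_0)$, that the iterates $\theta^t$ stay within a ball of radius $O(r_0)$ about $\theta_0$ (each update is, up to the contribution of the injected noise scaled by the accepted count $\tilde n_w^t\ge\frac{88\sqrt{RT}}{\sqrt\rho}(\sqrt d+\sqrt{2\ln(4RT/\beta_0)})$, a convex-combination-type mixture of $\theta^t$ and a point of $B(\theta_0,11r_0)$), and that $r_0\le 4r$ since $r=r_{cur}\ge r_0/4$.

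Putting the two bounds together: by the zCDP composition theorem, releasing all the noisy counts and sums is $\big(R(T+1)\cdot\frac{\rho}{2R(T+1)}\big)+\big(RT\cdot\frac{\rho}{2RT}\big)=\frac{\rho}{2}+\frac{\rho}{2}=\rho$-zCDP, and since the whole output of Algorithm~\ref{alg:ngd-meb} is post-processing of this transcript, the algorithm is $\rho$-zCDP. Along the way I would note two minor points: an early return only makes the algorithm invoke \emph{fewer} of these mechanisms, which cannot increase the privacy loss (one may formally pad the transcript to full length); and the adaptivity of the queries — each $\theta^t$ depending on earlier released answers — is precisely what the adaptive-composition theorem for zCDP accommodates, so it needs no separate argument.

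I expect the real work, and the only non-routine step, to be the sensitivity bound for the vector-sum queries, i.e., justifying $\|x-\theta^t\|\le 44r$ \emph{uniformly over every noise realization}. This is where one must combine the preprocessing radius $11r_0$, the threshold on $\tilde n_w^t$, the choice $\sigma_{sum}^2=\frac{RT(88r)^2}{\rho}$, and the near-convex-combination form of the update to confine the iterates — or, if a worst-case confinement is needed, to insert an explicit projection of $\theta^{t+1}$ back onto $B(\theta_0,11r_0)$. The remaining ingredients — the count sensitivity of $1$, the composition arithmetic, and the reduction of the output to post-processing — are immediate.
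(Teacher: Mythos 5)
Your proposal is correct and follows essentially the same route as the paper's proof: decompose the algorithm into $R(T+1)$ noisy counting queries of sensitivity $1$ and $RT$ noisy vector-sum queries of sensitivity $88r$, apply the Gaussian mechanism bound to each, and sum the budgets via zCDP composition, treating everything else as post-processing. If anything you are more careful than the paper, which simply asserts the $88r$ sensitivity from the points lying in $B(\theta_0,11r_0)$ without addressing the drift of $\theta^t$ that you correctly flag as the one step needing an explicit confinement (or projection) argument.
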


\begin{proof}
    At each one of the $RT$ iterations of the algorithm, we answer two queries to the input data: a counting query and a summation query.
    It is known that the \emph{$L_2$-sensitivity} of a counting query is $1$, therefore using the Gaussian mechanism theorem while setting $\sigma_{count}^2 = \frac{R(T+1)}{\rho}$ satisfies \emph{$\frac{\rho}{2R(T+1)}$-zCDP}. 
    Secondly, we know that all the points are bounded by a ball of radius $11r_0 \leq 44r_{opt}\leq 44r$ around $\theta_0$, hence the summation query has $L_2$-sensitivity of $\leq 88r$. Thus, by setting $\sigma_{sum}^2 = \frac{RT(88r)^2}{\rho}$ we have that we answer each summation query using $\frac{\rho}{2T}$-zCDP.
    Due to sequential composition of zCDP~\cite{bun16}, it holds that in all $T$ iteration together we preserve $\left(\rho(1-\frac{1}{2R(T+1)})\right)$-zCDP. Lastly, we apply one last counting query which we answer using the Gaussian mechanism while satisfying $\frac{\rho}{2R(T+1)}$-zCDP, thus, overall we are $\rho$-zCDP.
\end{proof}

\begin{corollary}\label{cor:overall_alg_is_DP}
    \hyperref[alg:dp-meb]{Algorithm 3} satisfies \emph{$\rho$-zCDP}.
\end{corollary}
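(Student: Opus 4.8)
The plan is to derive Corollary~\ref{cor:overall_alg_is_DP} from Lemma~\ref{lem:iterative_alg_is_dp} by a routine adaptive-composition argument, viewing Algorithm~\ref{alg:dp-meb} as a preprocessing step followed by a binary search that makes at most $B$ data-dependent calls to DP-MMEB.

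First I would dispose of the preprocessing line (``remove any $x\in P$ outside $B(\theta_0,11r_0)$''). Since $\theta_0$ and $r_0$ are supplied as (public) inputs and are \emph{not} computed from $P$ within this algorithm, the map $P\mapsto P\cap B(\theta_0,11r_0)$ sends any pair of neighboring datasets to a pair that is again neighboring (or identical). Composing a $\rho$-zCDP mechanism after such a filter is still $\rho$-zCDP, by the same R\'enyi-divergence data-processing inequality that yields closure under post-processing. Next, the body of the algorithm is a binary search over the integer grid $\{0,1,\dots,i_{\max}\}$ with $i_{\max}=\ln_{1+\gamma}(4)$; the standard depth bound for binary search guarantees the \texttt{while} loop runs at most $B=\lceil\log_2(\ln_{1+\gamma}(4))\rceil$ times, which is exactly the value set on line~3. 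Each iteration accesses the data only through a single invocation of DP-MMEB with privacy parameter $\rho/B$, which by Lemma~\ref{lem:iterative_alg_is_dp} is $(\rho/B)$-zCDP.

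The only subtlety is that these invocations are \emph{adaptive}: the candidate radius $r_{cur}$ passed to the $j$-th call depends on the outputs $\theta_{cur}$ of the previous calls (through the updates to $i_{\min}$ and $i_{\max}$). I would invoke the (adaptive) sequential composition of zCDP recalled in Section~\ref{sec:preliminaries}: the adaptive composition of mechanisms that are $\rho_1$-, $\rho_2$-, \dots-zCDP, each conditioned on the previous transcript, is $\bigl(\sum_j\rho_j\bigr)$-zCDP. Applying this with at most $B$ summands each equal to $\rho/B$ shows that the whole sequence of DP-MMEB calls is $\rho$-zCDP. Finally, the returned ball $B(\theta^*,r^*)$ — together with the bookkeeping quantities $i_{\min},i_{\max},i_{cur},r_{cur}$ — is a deterministic function of the transcript of these calls and the public inputs, so post-processing closure of zCDP gives the claim.

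I do not expect a real obstacle here; the two points needing an explicit line of justification are (i) that the binary-search depth is genuinely $\le B$, so that the per-call budget $\rho/B$ sums to exactly $\rho$ (not more), and (ii) that the composition theorem is being used in its adaptive form — both of which are already available: (i) from the standard logarithmic depth of binary search over the $\ln_{1+\gamma}(4)+1$ admissible values, and (ii) from the zCDP composition fact stated in the preliminaries.
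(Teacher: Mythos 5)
Your proof is correct and follows essentially the same route as the paper's, which is a one-line appeal to the fact that Algorithm~\ref{alg:dp-meb} makes $B$ calls to Algorithm~\ref{alg:ngd-meb}, each $(\rho/B)$-zCDP, and composes. The extra details you supply (the public-input filtering step preserving the neighboring relation, the adaptive form of composition, and post-processing of the transcript) are exactly the points the paper leaves implicit, and they are all handled correctly.
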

\begin{proof}
    Since Algorithm~\ref{alg:dp-meb} invokes $B=\lceil \log_2(\log_{1+\gamma}(4)) \rceil$ calls to Algorithm~\ref{alg:ngd-meb} each preserving $\frac{\rho}B$-zCDP, Algorithm~\ref{alg:dp-meb} is $\rho$-zCDP overall.
\end{proof}
}

\DPanalysis

\subsection{Utility Analysis}
\label{subsec:utility_analysis}

\DeclareRobustCommand{\lemDPUtility}
{W.p. $\geq1-\beta$, applying Algorithm~\ref{alg:ngd-meb} with $r\geq r_{opt}$ and an initial center $\theta_0$ s.t. $\|\theta_0-\theta_{opt}\|\leq 10r_{opt}$ returns a point $\theta^t$ where $\left| P\setminus B(\theta^t, (1+\gamma)r) \right| \leq  88\sqrt{\frac{RT}{\rho}}\left(\sqrt{d}+\sqrt{2\ln(\nicefrac{4RT}{\beta_0})}\right)+ \sqrt{\frac{2R(T+1)\log(4R(T+1)/\beta_0)}{\rho}}$.}

\begin{lemma}
\label{lem:DP_iterative_algorithm_utility}
\lemDPUtility
\end{lemma}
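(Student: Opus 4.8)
The plan is to combine the privacy-free convergence machinery of Claim~\ref{clm:random_step_algorithm}/Corollary~\ref{cor:random_step_algorithm_whp} with a concentration argument showing that the Gaussian noises added by Algorithm~\ref{alg:ngd-meb} are, with high probability, small enough that (a) the noisy counts $\tilde n_w^t$ faithfully track the true counts up to the stated threshold, and (b) conditioned on not halting, the empirical distribution $\mathcal{D}^t$ of the quantity $\tilde\mu_w^t$ used in the update step satisfies the two requirements (i) and (ii) of Claim~\ref{clm:random_step_algorithm}. First I would fix the ``good event'' $\mathcal{E}$: using the Gaussian tail bound $\Pr[|X-\mu|\ge x\sigma]\le 2e^{-x^2/2}$ for the $\sigma_{count}$-noises and the $\chi^2_d$-tail bound $\Pr[\|X\|^2>\sigma^2(\sqrt d + x)^2]\le e^{-x^2/2}$ for the $\sigma_{sum}$-noises, a union bound over all $\le R(T+1)$ counting queries and $RT$ summation queries shows that with probability $\ge 1-\beta_0\cdot$(number of queries) $\ge 1-\tfrac1{8}$ (by the choice $\beta_0 = \tfrac1{16RT}$), simultaneously every $|\Delta_{count}|\le \sigma_{count}\sqrt{2\ln(4RT/\beta_0)}$ and every $\|\Delta_{sum}\|^2 \le \sigma_{sum}^2(\sqrt d + \sqrt{2\ln(4RT/\beta_0)})^2$.

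Next I would verify that on $\mathcal{E}$, in any iteration where the algorithm does \emph{not} return in the count-check on line~7, the distribution $\mathcal{D}^t$ of $\tilde\mu_w^t$ (over the fresh draw of $\Delta_{sum}$, conditioned on $\theta^t$ and on $\tilde n_w^t$ being above threshold) meets properties (i) and (ii). For (i): since $\E[\Delta_{sum}]=0$, $\E_{z\sim\mathcal{D}^t}[\langle\theta_{opt}-\theta^t,z\rangle\mid\theta^t] = \tfrac{1}{\tilde n_w^t}\langle \theta_{opt}-\theta^t, \sum_{x\notin B(\theta^t,r)}(x-\theta^t)\rangle = \tfrac{n_w^t}{\tilde n_w^t}\langle\theta_{opt}-\theta^t, \mu_w^t - \theta^t\rangle$, and Claim~\ref{clm:pt_outside_ball_dotproduct} (applied to each uncovered $x$ and averaged) gives $\langle\theta^t-\theta_{opt},\mu_w^t-\theta_{opt}\rangle\le \tfrac12\|\theta^t-\theta_{opt}\|^2$, which rearranges to $\langle\theta_{opt}-\theta^t,\mu_w^t-\theta^t\rangle\ge\tfrac12\|\theta^t-\theta_{opt}\|^2$; on $\mathcal{E}$ the ratio $n_w^t/\tilde n_w^t$ lies in $[\tfrac12,2]$ (the threshold on $\tilde n_w^t$ was chosen precisely so that $|\Delta_{count}|\le \tfrac12\tilde n_w^t$, hence $\tfrac12\tilde n_w^t\le n_w^t\le\tfrac32\tilde n_w^t$), giving factor $\ge\tfrac12$ and hence (i) with constant $\tfrac14$. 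For (ii): expand $\E[\|z\|^2] = \tfrac{1}{(\tilde n_w^t)^2}(\|\sum_{x\notin B}(x-\theta^t)\|^2 + \E\|\Delta_{sum}\|^2)$; the first term is $(n_w^t/\tilde n_w^t)^2\|\mu_w^t-\theta^t\|^2 \le 4\cdot(11r_0/4\cdot\text{const})^2$ — more precisely, since all points lie in $B(\theta_0,11r_0)$ and $\|\theta^t-\theta_0\|$ stays $O(r)$ (this needs a short invariant argument, or one can bound $\|\mu_w^t-\theta^t\|$ directly using that $\theta^t$ is a convex combination of $\theta_0$ and previous means which all lie in the $11r_0$-ball, so $\|\mu_w^t-\theta^t\|\le 22r_0\le 88r$), the first term is $\le 4\cdot(88r)^2$; the second term is $\tfrac{1}{(\tilde n_w^t)^2}\sigma_{sum}^2 d \approx \tfrac{RT(88r)^2 d}{\rho(\tilde n_w^t)^2}$, and since we only proceed when $\tilde n_w^t \ge \tfrac{88\sqrt{RT}}{\sqrt\rho}(\sqrt d + \sqrt{2\ln(4RT/\beta_0)})$ this term is $\le (88r)^2$; tuning constants so the sum is $\le 512r^2$. (The displayed $512r^2$ in requirement (ii) of Claim~\ref{clm:random_step_algorithm}, versus $r_{opt}^2$, is exactly what makes these bounds go through since $r\ge r_{opt}$.)

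Having established (i) and (ii) on $\mathcal{E}$ for every iteration that survives the count-check, Corollary~\ref{cor:random_step_algorithm_whp} applies (with the per-iteration failure probability absorbed into $\mathcal{E}$): with probability $\ge 1-\beta$, in at least one of the $R$ repetitions the final center satisfies $P\subset B(\theta^T,(1+\gamma)r)$. The last step is to translate ``the algorithm returns some $\theta^t\ne\bot$'' into the claimed bound on $|P\setminus B(\theta^t,(1+\gamma)r)|$: if the algorithm returns on line~7 because $\tilde n_w^t$ is below threshold, then on $\mathcal{E}$ the true count $n_w^t = |P\setminus B(\theta^t,r)| \le \tilde n_w^t + |\Delta_{count}| \le \tfrac{88\sqrt{RT}}{\sqrt\rho}(\sqrt d + \sqrt{2\ln(4RT/\beta_0)}) + \sigma_{count}\sqrt{2\ln(4RT/\beta_0)}$, and since $B(\theta^t,r)\subseteq B(\theta^t,(1+\gamma)r)$ this upper-bounds $|P\setminus B(\theta^t,(1+\gamma)r)|$ by the stated expression (recognizing $\sigma_{count}^2 = R(T+1)/\rho$); if instead the algorithm returns on line~16 after the final over-$(1+\gamma)r$ count-check, the same argument with the line-16 threshold $\sqrt{2R(T+1)\log(4R(T+1)/\beta_0)/\rho}$ gives $|P\setminus B(\theta^T,(1+\gamma)r)| \le$ that threshold $+ |\Delta_{count}|$, again within the stated bound. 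I expect the main obstacle to be the bookkeeping in step (ii): pinning down a clean bound on $\|\mu_w^t-\theta^t\|$ (which requires either maintaining the invariant that all iterates $\theta^t$ stay in a ball of radius $O(r)$ around $\theta_0$, or arguing directly via convexity that each $\theta^t$ is a convex combination of points that lie in $B(\theta_0,11r_0)$) and then carefully choosing the absolute constants ($88$, $512$, $2048$, $4096$) so that every inequality closes with room to spare; the probabilistic and convergence parts are essentially immediate given the earlier claims.
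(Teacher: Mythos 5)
Your proposal follows essentially the same route as the paper: the same good event $\mathcal{E}$ via Gaussian and $\chi^2_d$ tails with a union bound over all $R(T+1)$ count queries and $RT$ sum queries, the same verification that the update distribution $\mathcal{D}^t$ meets requirements (i) and (ii) of Claim~\ref{clm:random_step_algorithm} whenever the count-check passes (using $\E[\Delta_{sum}]=0$, independence, Claim~\ref{clm:pt_outside_ball_dotproduct}, and the threshold on $\tilde n_w^t$ to control $n_w^t/\tilde n_w^t$), and the same case analysis at the end converting an early return or the final check into the stated bound on $|P\setminus B(\theta^t,(1+\gamma)r)|$.

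The one step that does not close as written is your bound on the signal term in requirement (ii). You bound $\|\mu_w^t-\theta^t\|\leq 22r_0\leq 88r$, giving a first term of $4\cdot(88r)^2\approx 31000\,r^2$, which already exceeds the $512r^2$ budget; and ``tuning constants'' is not available here, because $512$ is hard-wired into Claim~\ref{clm:random_step_algorithm} together with the step size $\gamma^2/2048$ and $T$, which the algorithm uses as stated. The paper instead bounds $\|\mu_w^t-\theta^t\|\leq\|\mu_w^t-\theta_{opt}\|+\|\theta_{opt}-\theta^t\|\leq r_{opt}+10r_{opt}\leq 11r$, using that $\mu_w^t$ is a convex combination of points of $P\subset B(\theta_{opt},r_{opt})$, so the first term is about $(11r)^2\cdot\frac{44}{43}$ and the total stays below $512r^2$ with the noise term contributing at most $r^2$. (This route leans on the iterates staying within distance $O(r)$ of $\theta_{opt}$, an invariant you correctly identify as the delicate point; routing the triangle inequality through $\theta_{opt}$ rather than through $\theta_0$ and the $11r_0$-ball is what keeps the constant small enough.) Everything else in your argument matches the paper.
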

\DeclareRobustCommand{\lemDPUtilityPf}{
\begin{proof}
    Given a repetition $r$ and iteration $t$ denote the events
    \begin{align*}
        {\cal E}_1^{r,t} &:= \text{in the $(r,t)$-draws, $|\Delta_{count}|\leq \sigma_{count}\sqrt{2\ln(\nicefrac{4R(T+1)}{\beta_0})}$}
        \cr {\cal E}_2^{r,t} &:= \text{in the $(r,t)$-draw, $\|\Delta_{sum}\|\leq \sigma_{sum}\left(\sqrt{d}+\sqrt{2\ln(\nicefrac{4RT}{\beta_0})}\right)$}
    \end{align*}
    and denote also ${\cal E}_i = \bigcup_{r,t}{\cal E}_i^{r,t}$ for $i=1,2$. Using standard bounds on the concentration of the Gaussian distribution and the $\chi^2_d$-distribution together with the union-bound we have that $\Pr[\overline{{\cal E}_1}\cup\overline{{\cal E}_2}]\leq R(T+1)\cdot \frac{\beta_0}{2R(T+1)}+RT\frac{\beta_0}{2RT}\leq \beta_0$. We continue the rest of the proof conditioning on ${\cal E}_1\cap {\cal E}_2$ holding.

    Fix $r$ and $t$. Under ${\cal E}_1^{r,t}\cap {\cal E}_2^{r,t}$ holding, the required conditions detailed in~\eqref{eq:requirements_of_good_dist} hold, which~-- using Corollary~\ref{cor:random_step_algorithm_whp}~-- yields the correctness of our algorithm. Under the same notation as in Algorithm~\ref{alg:ngd-meb}, 
    denote the distribution of $\frac{1}{n_w^t + \Delta_{count}} \left( \sum\limits_{x \notin B(\theta^t, r)}{(x - \theta^t)} + \Delta_{sum} \right)$ as $\mathcal{D}^t$. 
    
    First, observe that under ${\cal E}_1^{r,t}$, the condition $\tilde n_w^t  \geq \frac{88\sqrt{RT}}{\sqrt\rho}\left(\sqrt{d}+\sqrt{2\ln(\nicefrac{4RT}{\beta_0})}\right)$ implies that 
    \[ n_w^t \geq  
    \frac{88\sqrt{RT}}{\sqrt\rho}\left(\sqrt{d}+\sqrt{2\ln(\nicefrac{4RT}{\beta_0})}\right)-\sqrt{\frac{2R(T+1)\ln(\nicefrac{4R(T+1)}{\beta_0})}{\rho}}\geq 44 |\Delta_{count}|
    \]
    and secondly, observe that $\Delta_{sum}$ is drawn from a spherically symmetric distribution, so for any $a>0$ we have that $\E[\Delta_{sum}|~ \|\Delta_{sum}\|\leq a]=0$.
    And so, if indeed Algorithm~\ref{alg:ngd-meb} passes the if-condition and makes an update step we have
    \begin{align*}
        \E\limits_{z \sim \mathcal{D}^t} \left[\innerproduct{\theta_{opt} - \theta^t}{z}|~\theta^t, {\cal E}_1^{r,t}\cap {\cal E}_2^{r,t}\right] &=
        \innerproduct{\theta_{opt} - \theta^t}{\E\left[\frac{ \Delta_{sum}+\sum\limits_{x \notin B(\theta^t, r)}{(x - \theta^t)}}{n_w^t + \Delta_{count}} |~\theta^t, {\cal E}_1^{r,t}\cap {\cal E}_2^{r,t}\right]}
            \cr & \stackrel{\rm independ.}= \innerproduct{\theta_{opt} - \theta^t}{\E\left[\frac{1}{n_w^t + \Delta_{count}} |~\theta^t, {\cal E}_1^{r,t}\right] {\sum\limits_{x \notin B(\theta^t, r)}{(x - \theta^t)}}}
            \cr &= \innerproduct{\theta_{opt} - \theta^t}{\E\left[\frac{n_w^t}{n_w^t + \Delta_{count}} |~\theta^t, {\cal E}_1^{r,t}\right] \frac{\sum\limits_{x \notin B(\theta^t, r)}{(x - \theta^t)}}{n_w^t}}
            \cr & = \E\left[\frac{1}{1 + \nicefrac{\Delta_{count}}{n_w^t}}|~\theta^t, {\cal E}_1^{r,t} \right]\innerproduct{\theta_{opt} - \theta^t}{ \frac{\sum\limits_{x \notin B(\theta^t, r)}{x}}{n_w^t} - \theta^t}
            \cr & \stackrel{\rm Claim~\ref{clm:pt_outside_ball_dotproduct}}\geq \left(\frac{1}{1 - \nicefrac{1}{44}} \right)\cdot\frac{1}{2}\|\theta^t - \theta_{opt}\|^2\geq \frac{1}{4}\|\theta^t - \theta_{opt}\|^2
    \end{align*}
    and also
    \begin{align*}
            &\E\limits_{z \sim \mathcal{D}^t}[\|z\|^2|~\theta^t, {\cal E}_1^{r,t}\cap {\cal E}_2^{r,t}] = \E\left[\left\|\frac{\sum\limits_{x \notin B(\theta^t, r)}{(x - \theta^t) }}{n_w^t + \Delta_{count}}+\frac{\Delta_{sum}}{n_w^t + \Delta_{count}}\right\|^2|~\theta^t, {\cal E}_1^{r,t}\cap {\cal E}_2^{r,t}\right]
            \cr &~~~ = \resizebox*{0.93\textwidth}{!}{$\E\left[\left(\frac{n_w^t}{n_w^t + \Delta_{count}}\right)^2\left\|\frac{\sum\limits_{x \notin B(\theta^t, r)}{x}}{n_w^t} - \theta^t\right\|^2 + \frac{2\langle \Delta_{sum},\sum\limits_{x \notin B(\theta^t, r)}{(x - \theta^t) }\rangle+\|\Delta_{sum}\|^2}{(n_w^t + \Delta_{count})^2}  |~\theta^t, {\cal E}_1^{r,t}\cap {\cal E}_2^{r,t}\right]$}
            \cr &~~~ \stackrel{\rm independ.}= \E\left[\left(\frac{1}{1 + \frac{\Delta_{count}}{n_w^t}}\right)^2|~\theta^t, {\cal E}_1^{r,t}\right]\left\|\frac{\sum\limits_{x \notin B(\theta^t, r)}{x}}{n_w^t} - \theta^t\right\|^2 + \frac {0+\E\left[{\|\Delta_{sum}\|^2}|~\theta^t, {\cal E}_2^{r,t}\right]} {(\tilde n_w^t)^2}
            \cr &~~~ \leq \frac 1 {1-\nicefrac 1{44}}\cdot(11r)^2 + \frac{RT\cdot (88r)^2}{\rho}\left(\sqrt{d}+\sqrt{2\ln(\nicefrac{4RT}{\beta_0})}\right)^2\cdot\frac{1}{(\tilde n_w^t)^2}
            < 512r^2
        \end{align*}
    since $\tilde n_w^t \geq 88\sqrt{\frac{RT}{\rho}}\left(\sqrt{d}+\sqrt{2\ln(\nicefrac{4RT}{\beta_0})}\right)$ in order for us to make an update.

    Corollary~\ref{cor:random_step_algorithm_whp} suggests that if we make all $T$ updates then indeed $\|\theta^T-\theta_0\|\leq \gamma r$ and so $|P\setminus B(\theta^T, (1+\gamma)R)|=0$. So under ${\cal E}_1$ Algorithm~\ref{alg:ngd-meb} returns $\theta^T$. Otherwise, at some iteration we do not make an update step, which under ${\cal E}_1$ suggests that 
    \[ n_w^t = |P\setminus B(\theta^t,r)| \leq 88\sqrt{\frac{RT}{\rho}}\left(\sqrt{d}+\sqrt{2\ln(\nicefrac{4RT}{\beta_0})}\right)+ \sqrt{\frac{2R(T+1)\log(4R(T+1)/\beta_0)}{\rho}} \qedhere \]
\end{proof}
}
\lemDPUtilityPf
\DeclareRobustCommand{\corDPUtility}
{Given $r_0$ where $r_{opt}\leq r_0\leq 4r_{opt}$ and a point $\theta_0$ where $\|\theta_0-\theta^*\|\leq 10r_{opt}$, w.p. $\geq 1-\beta$ Algorithm~\ref{alg:dp-meb} is a $O(n\cdot \frac{\log^2(\nicefrac 1 \gamma)\log(\nicefrac 1 \beta)}{\gamma^2} )$-time algorithm that returns a ball $B(\theta^*,r)$ where
    $r \leq (1+3\gamma)r_{opt}$ and where 
    $|P\setminus B(\theta^*,r^*)|= O( \frac{\left(\sqrt{d}+\sqrt{\log(\nicefrac{\log(\nicefrac 1 \beta)}{\gamma})}\right)\sqrt{\log(\nicefrac 1 \gamma)\log(\nicefrac 1 \beta)}}{\gamma\sqrt{\rho}}  )$.}
\begin{corollary}
\label{cor:DP_utility}
\corDPUtility
\end{corollary}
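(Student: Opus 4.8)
The idea is to feed the per-call utility bound of Lemma~\ref{lem:DP_iterative_algorithm_utility} into the binary-search bookkeeping already used in the proof of Theorem~\ref{thm:Alg_no_noise_fptas}, and then to substitute parameters. Privacy needs nothing new: Corollary~\ref{cor:overall_alg_is_DP} already gives that Algorithm~\ref{alg:dp-meb} is $\rho$-zCDP. It is also worth recording that line~1 of Algorithm~\ref{alg:dp-meb} removes no point under the stated hypotheses, because every $x\in P$ satisfies $\|x-\theta_0\|\le\|x-\theta_{opt}\|+\|\theta_{opt}-\theta_0\|\le r_{opt}+10r_{opt}\le 11r_0$.

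First I would fix the high-probability event. Algorithm~\ref{alg:dp-meb} calls DP-MMEB exactly $B=\lceil\log_2(\ln_{1+\gamma}(4))\rceil$ times, always with error parameter $\beta/B$ and privacy parameter $\rho/B$. Invoking Lemma~\ref{lem:DP_iterative_algorithm_utility} for each call and union-bounding, I obtain, with probability $\ge 1-\beta$, that for \emph{every} one of these $B$ calls: (a)~if the call is run with radius $r_{cur}\ge r_{opt}$ then its output is $\ne\bot$; and (b)~if the call outputs $\theta_{cur}\ne\bot$, then $\bigl|P\setminus B(\theta_{cur},(1+\gamma)r_{cur})\bigr|\le U$, where $U$ is the bound of Lemma~\ref{lem:DP_iterative_algorithm_utility} with $(\rho/B,\beta/B)$ substituted. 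Part~(a) and part~(b) for $r_{cur}\ge r_{opt}$ are exactly the statement of Lemma~\ref{lem:DP_iterative_algorithm_utility}. Part~(b) for $r_{cur}<r_{opt}$ is not, and this is the one place I would have to look inside Algorithm~\ref{alg:ngd-meb}: conditioned on the concentration event~${\cal E}_1$ of that lemma's proof, a non-$\bot$ output can only be emitted through the in-loop count test or through the final count test, and in both cases ${\cal E}_1$ forces the true value $|P\setminus B(\theta_{cur},(1+\gamma)r_{cur})|$ below (twice) the relevant threshold, hence $\le U$.

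Conditioned on this event, I would replay the proof of Theorem~\ref{thm:Alg_no_noise_fptas} almost verbatim, with the test ``$P\subset B(\theta_{cur},(1+\gamma)r_{cur})$'' replaced by ``$\theta_{cur}\ne\bot$''. Writing $i^\star=\min\{i\in\N:\ \tfrac{r_0}{4}(1+\gamma)^i\ge r_{opt}\}$ — so $0\le i^\star\le\ln_{1+\gamma}(4)$ because $r_{opt}\le r_0\le 4r_{opt}$ — part~(a) and an easy induction maintain $i_{\min}\le i^\star$ throughout the while loop, while part~(b) guarantees that whenever $\theta^*,r^*$ are (re)assigned we have $|P\setminus B(\theta^*,r^*)|\le U$. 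On termination $i_{\min}=i_{\max}$, so $r^*=(1+\gamma)\tfrac{r_0}{4}(1+\gamma)^{i_{\min}}\le(1+\gamma)\tfrac{r_0}{4}(1+\gamma)^{i^\star}<(1+\gamma)^2 r_{opt}\le(1+3\gamma)r_{opt}$, the strict step using minimality of $i^\star$. For the runtime I would simply count: each DP-MMEB call runs $RT$ iterations of $O(n)$ work with $R=O(\log(1/\beta))$ and $T=O(\gamma^{-2}\log(1/\gamma))$, and there are $B=O(\log(1/\gamma))$ calls, giving $O\!\bigl(n\gamma^{-2}\log^2(1/\gamma)\log(1/\beta)\bigr)$; and for the uncovered count I would substitute $\rho/B,\beta/B$ into $U$, use $RTB=\tilde O(\gamma^{-2}\log^2(1/\gamma)\log(1/\beta))$ and $\ln(RT/\beta_0)=O(\log(\log(1/\beta)/\gamma))$, and simplify to the claimed $O\!\bigl(\tfrac{(\sqrt d+\sqrt{\log(\log(1/\beta)/\gamma)})\sqrt{\log(1/\gamma)\log(1/\beta)}}{\gamma\sqrt\rho}\bigr)$.

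The step I expect to be the real obstacle is part~(b) in the suboptimal regime: Lemma~\ref{lem:DP_iterative_algorithm_utility} is stated only for $r\ge r_{opt}$, yet the binary search may well ``commit'' to a non-$\bot$ answer returned for a radius just below $r_{opt}$, and I must certify that such an answer still omits only $O(U)$ points — which forces me to argue directly about the two acceptance tests of Algorithm~\ref{alg:ngd-meb} under the Gaussian concentration event, instead of re-using the convergence argument. Everything else is routine bookkeeping: the two nested rescalings of the failure probability ($\beta\mapsto\beta/B$ in Algorithm~\ref{alg:dp-meb}, then $\beta/B\mapsto\beta_0=\tfrac{1}{16RT}$ inside Algorithm~\ref{alg:ngd-meb}) and the $B$-way split of the zCDP budget, none of which changes the stated asymptotics.
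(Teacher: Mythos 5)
Your proposal is correct and follows essentially the same route as the paper's proof: plug the per-call guarantee of Lemma~\ref{lem:DP_iterative_algorithm_utility} (with parameters $\rho/B$ and $\beta/B$) into the binary-search bookkeeping from Theorem~\ref{thm:Alg_no_noise_fptas}, track $i^\star$ to get $r^*\leq(1+\gamma)^2 r_{opt}$, and count $O(nRTB)$ for the runtime. The one place you go beyond the paper's own write-up is the sub-optimal-radius case: the paper simply ``plugs in'' Lemma~\ref{lem:DP_iterative_algorithm_utility}, even though that lemma is stated only for $r\geq r_{opt}$ while the binary search may commit to a non-$\bot$ answer produced with $r_{cur}<r_{opt}$. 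Your patch is the right one — under the count-concentration event, any non-$\bot$ exit of Algorithm~\ref{alg:ngd-meb} passes through one of the two noisy count tests, and in either case the true number of points outside $B(\theta_{cur},(1+\gamma)r_{cur})$ is at most the threshold plus the noise deviation (for the in-loop test one also uses $B(\theta^t,r)\subset B(\theta^t,(1+\gamma)r)$), hence $O(U)$ regardless of the sign of $r_{cur}-r_{opt}$. So your argument is, if anything, more complete than the one in the paper; everything else matches.
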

\DeclareRobustCommand{\corDPUtilityPf}{
\begin{proof}
    The result follows directly from the fact that Algorithm~\ref{alg:dp-meb} invokes $B = O(\log(\nicefrac 1 \gamma))$ calls to Algorithm~\ref{alg:ngd-meb}, with a privacy budget of $O(\rho/\log(\nicefrac 1 \gamma))$ each and with a failure probability of $O(\beta/\log(\nicefrac 1 \gamma))$ each. Plugging those into the bound of Lemma~\ref{lem:DP_iterative_algorithm_utility} together with the fact that $T = O(\gamma^{-2}\log(\nicefrac 1 \gamma))$ yields the resulting bound. 
    Note that, denoting the ``correct'' $i^* = \min\{i\geq 0: ~ \frac{r_0}4 (1+\gamma)^i\geq r_{opt}\}$, under the event that no invocation of Algorithm~\ref{alg:ngd-meb} fails, each time we execute the binary search with a value of $i_{cur}\geq i^*$ we obtain some $\theta_{cur}\neq \bot$. Due to the nature of the binary search and the fact that upon finding $\theta_{cur}\neq \bot$ we set $i_{\max}=i_{cur}$, it must follows that we return a ball of radius $(1+\gamma)r^*=(1+\gamma)\cdot \frac{r_0}4\cdot (1+\gamma)^i$ for some $i\leq i^*$, and so $r^*\leq (1+\gamma)^2r_{opt}\leq (1+3\gamma)r_{opt}$.
    Lastly, the runtime of Algorithm~\ref{alg:ngd-meb} is $O(nRT)$ making the runtime of Algorithm~\ref{alg:dp-meb} to be $O(nRTB)=O(\frac{n\log^2(1/\gamma)\log(1/\beta)}{\gamma^2})$ as required.
\end{proof}
}
\corDPUtilityPf

We comment that the amplification of the success probability of the algorithm from $\nicefrac 1 8$ to $1-\beta$ can be done using the amplification techniques of~\cite{LiuT19} which saves on the privacy budget: instead of na\"ively setting the privacy budget per iteration as $\rho/R$, we could use conversions to $(\epsilon,\delta)$-DP and as a result ``shave-off'' a factor of $R$. But since $R=O(\log(1/\beta))$ this would merely reduce polyloglog factors, at the expense of readability.

\DeclareRobustCommand{\algDPApplication}{
\subsection{Application: Subsample Stable Functions}
\label{subsec:subsample_stable_functions}

Much like the work of~\cite{GhaziKM20}, our work too is applicable as a DP-aggregator in a Subsample-and-Aggregate~\cite{nissim2007smooth} framework. We say that a point $p\in \R^d$ is \emph{$(r,\beta)$-stable} for some function $f:\mathcal{X}^* \to \R^d$  if there exists $m(r,\beta)$ such that for any input $S\subset {\mathcal{X}}^n$ a random subsample of $m$ entries of $S$ input datapoints returns w.p. $\geq 1-\beta$ a value close to $p$, namely, $\Pr_{S'\subset S, |S|=m}[\|c-f(S')\|\leq r]\geq 1-\beta$.

\newcommand{\T}{\mathsf{T}}
\begin{theorem}
    Fix $\rho,\gamma,\beta>0$. There exists some constant $C>0$ such that the following holds. Suppose $f:\mathcal{X}^* \to \R^d$ is a function that has a $(r,\beta)$-stable point. Then, there exists a $\rho$-zCDP algorithm that takes an input a dataset $S\subset {\cal X}^n$ and w.p.$\geq 1- \beta$ returns a $((1+\gamma)r,\nicefrac{\beta}{2k})$-stable point provided that $n \geq k\cdot m(r,\nicefrac{\beta}{2k})$ for $k=\frac{C\left(\sqrt{d}+\sqrt{\log(\nicefrac{\log(\nicefrac 1 \beta)}{\gamma})}\right)\sqrt{\log(\nicefrac 1 \gamma)\log(\nicefrac 1 \beta)}}{\gamma\sqrt{\rho}} $. Furthermore, if finding $f(S')$ for any $S'$ containing $m(r,\nicefrac{\beta}{2k})$-many datapoint takes $\T$ time, then our algorithm runs in time $O(k\T + k\cdot \frac{\log^2(\nicefrac 1 \gamma)\log(\nicefrac 1 \beta)}{\gamma^2})$.
\end{theorem}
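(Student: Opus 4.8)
The plan is to instantiate the Subsample-and-Aggregate paradigm~\cite{nissim2007smooth} with Algorithm~\ref{alg:dp-meb} as the private aggregator. Write $m:=m(r,\nicefrac\beta{2k})$. Since $n\ge km$, I would randomly permute $S$, split its first $km$ entries into $k$ disjoint blocks $S_1,\dots,S_k$ of size $m$ (each is then a uniformly random $m$-subset of $S$), and compute the $k$ points $p_i:=f(S_i)\in\R^d$. I would then split the privacy budget $\rho=\nicefrac\rho2+\nicefrac\rho2$: use the first half to run the differentially private ``good starting point'' routine of Appendix~\ref{apx_sec:preliminary_algorithms} on the multiset $\{p_1,\dots,p_k\}$, obtaining $(\theta_0,r_0)$ meeting the input requirements of Algorithm~\ref{alg:dp-meb} (a constant-factor approximation of the MEB $B(\theta_{opt},r_{opt})$ of the $p_i$'s, possibly after discarding the $\tilde O(\sqrt d/\sqrt\rho)$ of the $p_i$ that routine may move --- affordable since $k=\tilde\Omega(\sqrt d/\sqrt\rho)$); and use the second half to run Algorithm~\ref{alg:dp-meb} on $\{p_1,\dots,p_k\}$ with approximation parameter $\nicefrac\gamma6$, error parameter $\nicefrac\beta4$, and this $(\theta_0,r_0)$. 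The output is the center $\theta^\ast$ that Algorithm~\ref{alg:dp-meb} returns.

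Privacy is immediate: changing one entry of $S$ changes at most one block $S_i$, hence at most one $p_i$, so neighboring $S$ induce neighboring multisets $\{p_i\}$; thus the (adaptive) composition of the two $\nicefrac\rho2$-zCDP procedures run on $\{p_i\}$ is $\rho$-zCDP in $S$ (Corollary~\ref{cor:overall_alg_is_DP} and zCDP composition), and returning $\theta^\ast$ is post-processing.

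For utility I would proceed in three steps. (1) \emph{The aggregates concentrate.} Since each $S_i$ is a uniformly random $m$-subset and $p$ is $(r,\nicefrac\beta{2k})$-stable, $\Pr[\|p_i-p\|>r]\le\nicefrac\beta{2k}$; a union bound over $i\in[k]$ gives that with probability $\ge1-\nicefrac\beta2$ every $p_i$ lies in $B(p,r)$, and I condition on this. (2) \emph{The MEB of the aggregates is small and well-centered.} As $B(p,r)$ encloses $\{p_i\}$ we get $r_{opt}\le r$, and since $\theta_{opt}\in\mathrm{conv}\{p_i\}\subseteq B(p,r)$ also $\|\theta_{opt}-p\|\le r$. (3) \emph{DP-MEB succeeds near $p$.} Choosing the hidden constant $C$ in the definition of $k$ large enough makes the uncovered-point bound of Corollary~\ref{cor:DP_utility} strictly below $k$, so (except with the extra $\le\nicefrac\beta4$ that Algorithm~\ref{alg:dp-meb} and the starting-point routine may fail by) Algorithm~\ref{alg:dp-meb} does not return $\bot$, and for the radius guess $r_{cur}\in[r_{opt},(1+\nicefrac\gamma6)r_{opt}]$ it certifies, Corollary~\ref{cor:random_step_algorithm_whp} --- the noisy analogue of Lemma~\ref{lem:MMEB_returns_close_point} --- gives $\|\theta^\ast-\theta_{opt}\|\le\tfrac\gamma6 r_{cur}\le\tfrac\gamma3 r$. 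Steps (2)--(3) give $\|\theta^\ast-p\|\le(1+\tfrac\gamma3)r$. To finish, note $\theta^\ast$ is a genuine $((1+\gamma)r,\nicefrac\beta{2k})$-stable point: taking a fresh uniformly random subsample $S'$ whose size realizes the $(\tfrac\gamma3 r,\nicefrac\beta{2k})$-stability of $p$ --- implicit in the hypothesis that $f$ has a stable point --- yields $\|f(S')-p\|\le\tfrac\gamma3 r$ w.p. $\ge1-\nicefrac\beta{2k}$, hence $\|f(S')-\theta^\ast\|\le(1+\gamma)r$ w.p. $\ge1-\nicefrac\beta{2k}$. A union bound over all failure events leaves success probability $\ge1-\beta$. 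The runtime is $O(k\T)$ for the $k$ calls to $f$, near-linear in $k$ for the starting-point routine, and $O\big(k\cdot\tfrac{\log^2(\nicefrac1\gamma)\log(\nicefrac1\beta)}{\gamma^2}\big)$ for Algorithm~\ref{alg:dp-meb} on $k$ points, for the stated total.

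The step I expect to be the main obstacle is the quantitative matching in (3): one must pin down the constant $C$ in the formula for $k$ so that Corollary~\ref{cor:DP_utility}'s uncovered count is provably a small enough fraction of $k$ that Algorithm~\ref{alg:dp-meb} cannot abort on the aggregated instance, while keeping the (constantly many) split privacy budgets and failure probabilities mutually consistent. A more conceptual subtlety is the final step: certifying the output as a stable point with the promised $(1+\gamma)$ factor, rather than an $O(1)$ factor, relies on the stability of $p$ at the finer scale $\Theta(\gamma r)$ --- i.e.\ that $f$'s subsample estimator concentrates \emph{around} $p$, not merely \emph{within $r$} of it.
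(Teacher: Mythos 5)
Your construction is exactly the paper's: Subsample-and-Aggregate with $k$ disjoint random blocks $S_1,\dots,S_k$, followed by running the DP $(1+\gamma)$-approximate MEB machinery on the $k$ aggregates $p_i=f(S_i)$ and returning its center. The paper's entire proof is three sentences saying precisely this, so in approach you match it while supplying far more detail.

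Two of your quantitative claims deserve scrutiny. First, in step (3) the bound $\|\theta^\ast-\theta_{opt}\|\le\tfrac{\gamma}{6}r_{cur}$ via Corollary~\ref{cor:random_step_algorithm_whp} only applies when DP-MMEB completes all $T$ update steps; the algorithm may instead exit early because $\tilde n_w^t$ fell below its threshold, in which case the only guarantee (Lemma~\ref{lem:DP_iterative_algorithm_utility}) is that $B(\theta^\ast,(1+\gamma)r_{cur})$ misses few of the $p_i$. Choosing $C$ so that the miss-count is below $k$ then gives, via some covered $p_i$, only $\|\theta^\ast-p\|\le\|\theta^\ast-p_i\|+\|p_i-p\|\le(2+O(\gamma))r$, not $(1+\tfrac{\gamma}{3})r$. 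Second---the subtlety you flag yourself---certifying $\theta^\ast$ as $((1+\gamma)r,\nicefrac{\beta}{2k})$-stable requires $\|\theta^\ast-p\|\le\gamma r$, because the only concentration the hypothesis supplies for a fresh subsample is $\|f(S')-p\|\le r$; your patch invokes $(\tfrac{\gamma}{3}r,\cdot)$-stability of $p$ at the finer scale, which is not among the theorem's assumptions. Your diagnosis that this is the crux is accurate: the paper's own proof does not address either point and simply asserts that returning the center of the $(1+\gamma)$-approximate MEB of the $p_i$'s suffices, so your write-up is no less complete than the paper's---but neither argument, as stated, fully delivers the $(1+\gamma)$ factor from the stated hypothesis alone.
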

\begin{proof}
    The proof simply partitions the $n$ inputs points of $S$ into $k$ disjoint and random subsets $S'_1, S'_2,..., S'_k$. W.p. $\geq 1- \beta/2$ it holds that $\|f(S'_i)-c\|\leq r$ for every subset $S_i'$, and then we apply our $(1+\gamma)$ approximation over this dataset of $k$ many points (with a failure probability of $\beta/2$) and returns the resulting center-point.
\end{proof}
This results improves on Theorem 18 of~\cite{GhaziKM20} in both the runtime and the required number of subsamples, at the expense of requiring \emph{all} subsamples to be close to the point $p$ rather than just many of the points.
}
\algDPApplication

\section{A Local-DP fPTAS for the MEB Problem}
\label{sec:local_DP_MEB_fPTAS}

In this section we give the local-model version of our algorithm. At the core of its utility proof is a lemma analogous to Lemma~\ref{lem:DP_iterative_algorithm_utility}, in which we prove that w.h.p.~in each iteration $t$ the distribution of our update-step satisfies (w.h.p.) the requirements of~\eqref{eq:requirements_of_good_dist}. 

\cut{
Again, due to space constraints, we merely state the LDP algorithm in this section, and defer both its privacy and utility analyses to the Supplementary Material, Section~\ref{apx_sec:proofsLDPAlg}~--- where we prove that it is a $\rho$-zCDP algorithm that returns a ball $B(\theta^*,r^*)$ such that $r^*\leq (1+3\gamma)r_{opt}$ and $|P\setminus B(\theta^*,r^*)| = O(\frac{\sqrt{n}\log(\nicefrac 1 \gamma)}{\gamma^2\sqrt{\rho}}\left(\sqrt{d}+\sqrt{\log(\nicefrac{1}{\gamma\beta})}\right))$.
}

\begin{algorithm}[ht]
    \caption{LDP-Margin based Minimum Enclosing Ball (LDP-MMEB)}
    \label{alg:LDP-meb}
     \hspace*{\algorithmicindent}\textbf{Input:} a set of $n$ points $P \subseteq \mathbb{R}^d$, an approximation parameter $\gamma \in (0,1)$, \\ \hspace*{\algorithmicindent} an error parameter $\beta \in (0,1)$, privacy parameter $\rho$, \\ \hspace*{\algorithmicindent} a candidate radius $r$, and an initial center $\theta_0$ s.t. $\|\theta_0 - \theta_{opt}\| \leq 10r_{opt}$.
    \begin{algorithmic}[1]
     \State Set $R\gets \lceil\log_{\nicefrac 8 7}(\nicefrac 1 \beta) \rceil$, $\theta^0\gets \theta_0$, $T \leftarrow \frac{4096}{\gamma^2}\ln(\frac{484}{\gamma^2})$, $\beta_0 = \frac 1 {16RT}$, 
      $\sigma_{count}^2 \leftarrow \frac{R(T+1)}{\rho}$,
      and $\sigma_{sum}^2 \leftarrow \frac{RT\cdot (88r)^2}{\rho}$.
     \Repeat{
     \For{$t=0,1,2, \ldots, T-1$}
        \For {\textbf{each }$(x\in P)$}
        \State Sample $\Delta_{count}\sim \mathcal{N}(0, \sigma_{count}^2)$.
        \State Sample $\Delta_{sum} \sim \mathcal{N}(0, \sigma_{sum}^2I_d)$.
        \If {($x \notin B(\theta^t,r)$)}
        {\State Send $Y^t_x = 1 + \Delta_{count}, Z^t_x = x-\theta^t + \Delta_{sum}$
        \Else\  Send $Y^t_x = \Delta_{count}, Z^t_x = \Delta_{sum}$}
        \EndIf\EndFor
        \State Set $\tilde n^t_w = \sum_{x\in P}Y^t_x$ and $\tilde v_w^t = \frac 1 {\tilde n^t_w}\sum_{x\in P} Z^t_x$. 
        \If {\big($\tilde n_w^t < \frac{88\sqrt{nRT}}{\sqrt\rho}\left(\sqrt{d}+\sqrt{2\ln(\nicefrac{4RT}{\beta_0})}\right)$\big)} {\Return $\theta^t$} \EndIf
        \State Update $\theta^{t+1} \leftarrow \theta^t +  \frac{\gamma^2}{2048}\tilde v_w^t$
     \EndFor
     \For {\textbf{each }$(x\in P)$}
        \State Sample $\Delta_{count}\sim \mathcal{N}(0, \sigma_{count}^2)$.
        \If {($x \notin B(\theta^T,(1+\gamma)r)$)}
        {\State Send $Y^T_x = 1 + \Delta_{count}$
        \Else\  Send $Y^T_x = \Delta_{count}$}
        \EndIf\EndFor
    \State Set $n_w^T \gets \sum_{x}Y_x$
     \If{\big(   $n_w^T  \leq \sqrt{\frac{2nR(T+1)\log(\nicefrac{4R(T+1)}{\beta_0})}{\rho}}$  \big)} 
     {\Return $\theta^T$ and halt}
     \EndIf}
     \Until {$R$ repetitions}
     \State \Return $\bot$
    \end{algorithmic}
\end{algorithm}

\cut{
The crux of its proof is as follows. For each iteration $t$ we denote $n_w^t$ as the true number of datapoints in $P$ outside the ball $B(\theta^t, r)$, $\mu_w^t$ as their true mean, and $v_w^t$ as the difference of the true mean and the current center $v_w^t = \mu_w^t-\theta^t$; and denote their LDP analogues as $\tilde n_w^t$ and $\tilde v_w^t$ (as defined in Algorithm~\ref{alg:LDP-meb}) and $\tilde \mu_w^t =\tilde v_w^t +\theta^t$. We then define the events ${\cal E}_1 := \text{in all iterations, $|\tilde n_w^t - n_w^t|\leq \sqrt{\frac{2n(T+1)\log(\nicefrac{4(T+1)}{\beta})}{\rho}}$}$ and ${\cal E}_2 := \text{in all iterations, $\|\sum_{x}Z_x^t - n_w^t v_w^t\|\leq \frac{88r\sqrt{nT}\left(\sqrt{d}+\sqrt{2\ln(\nicefrac{4T}{\beta})}\right)}{\sqrt{\rho}}$}$
which hold together w.p. $\geq 1-\beta$. (The r.v. $Z_x^t$ is defined in Algorithm~\ref{alg:LDP-meb}.) Under ${\cal E}_1$ and ${\cal E}_2$ we have that
    \begin{align*}
        \|\tilde\mu_w^t-\mu_w^t\| &= 
        \|\tilde v_w^t- v_w^t\|
        \leq \left\| \frac{\sum_{x}Z_x^t-n_w^t v_w^t}{\tilde n_w^t}- \left(n_w^t v_w^t\right) \left( \frac{1}{\tilde n_w^t} - \frac{1}{n_w^t} \right)  \right\| \cr &\leq \frac{\|\sum_{x}Z_x^t-n_w^t v_w^t\|}{\tilde n_w^t} + \|\mu_w^t-\theta^t\| \frac{|\tilde n_w^t-n_w^t|}{ \tilde n_w^t}
        \cr & \leq \frac {88r\sqrt{nT}\left(\sqrt{d}+\sqrt{2\ln(\nicefrac{4T}{\beta})}\right)+ 22r\sqrt{2n(T+1)\log(\nicefrac{4(T+1)}{\beta})}}{\tilde n_w^t\sqrt{\rho}} \leq \frac{\gamma r}{16}
    \end{align*}
    as Algorithm~\ref{alg:LDP-meb} only makes an update step if $\tilde n_w^t \geq \frac{16\sqrt{n(T+1)}}{\gamma\sqrt{\rho}}\left(88\sqrt{d}+ 110\sqrt{2\log(\nicefrac{4(T+1)}{\beta})}\right)$.
}
\DeclareRobustCommand{\LDPSection}{
\begin{claim}
Algorithm~\ref{alg:LDP-meb} is a local-model $\rho$-zCDP.
\end{claim}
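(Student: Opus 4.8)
The plan is to replay the curator-model accounting of Lemma~\ref{lem:iterative_alg_is_dp}, but carried out \emph{per respondent}: the local-model definition asks that the entire transcript of messages broadcast by any single user $x\in P$ be $\rho$-zCDP with respect to $x$'s own point. Since each user draws her noise independently and never sees another user's point, it suffices to bound the cumulative zCDP cost of the messages of one fixed user; no composition \emph{across} users is needed, as $x$'s point enters only $x$'s messages.

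First I would enumerate what a fixed user $x$ sends. In each of the $R$ repetitions and each of the $T$ inner rounds, $x$ broadcasts the pair $(Y^t_x,Z^t_x)$: here $Y^t_x$ is the Gaussian-mechanism response, with variance $\sigma_{count}^2=R(T+1)/\rho$, to the indicator of the event $x\notin B(\theta^t,r)$ --- an $L_2$-sensitivity-$1$ query, hence $\frac{1}{2\sigma_{count}^2}=\frac{\rho}{2R(T+1)}$-zCDP; and $Z^t_x$ is the Gaussian-mechanism response, with per-coordinate variance $\sigma_{sum}^2=RT(88r)^2/\rho$, to the clipped shift $\mathbf{1}[x\notin B(\theta^t,r)]\,(x-\theta^t)$, whose $L_2$-sensitivity is at most $88r$ by the same reasoning as in the curator-model proof of Lemma~\ref{lem:iterative_alg_is_dp} (all of $P$ lies within distance $11r_0\le 44r_{opt}\le 44r$ of $\theta_0$), hence $\frac{(88r)^2}{2\sigma_{sum}^2}=\frac{\rho}{2RT}$-zCDP. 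The two messages use \emph{independent} noise draws, so each is a separate Gaussian mechanism. Finally, at the end of each repetition $x$ sends one extra indicator message $Y^T_x$ for the ball $B(\theta^T,(1+\gamma)r)$, again a sensitivity-$1$ query answered with variance $\sigma_{count}^2$, costing $\frac{\rho}{2R(T+1)}$-zCDP.

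Next I would invoke adaptive composition of zCDP. The query $x$ answers in round $t$ depends on $\theta^t$, but $\theta^t$ is a deterministic function of the public transcript so far; conditioning on that transcript, $x$'s round-$t$ messages form a fresh Gaussian mechanism with the parameters above, so $x$'s whole sequence of messages is an adaptively composed sequence of zCDP mechanisms. Summing the costs over one repetition gives $(T+1)\cdot\frac{\rho}{2R(T+1)}=\frac{\rho}{2R}$ from the $T+1$ counting queries plus $T\cdot\frac{\rho}{2RT}=\frac{\rho}{2R}$ from the $T$ shift queries, i.e.\ $\frac{\rho}{R}$ per repetition; composing over the $R$ repetitions yields $\rho$-zCDP for $x$, and since $x$ was arbitrary, the protocol is local-model $\rho$-zCDP.

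The points that need care are bookkeeping rather than conceptual: getting the per-repetition count of queries right (the extra final counting query is exactly why $\sigma_{count}^2$ is scaled by $R(T+1)$ and not $RT$), and the $88r$-sensitivity of the shift query, which rests on the boundedness of $P$ inside $B(\theta_0,11r_0)$. One genuinely local subtlety worth flagging is that, unlike the curator model where a single noise vector is added to the aggregate, here each user injects her own noise; Algorithm~\ref{alg:LDP-meb} therefore sets the \emph{per-message} noise to match the curator-model per-query noise, which is what makes the privacy accounting identical to the curator case while forcing the larger $\sqrt n$ thresholds $\frac{88\sqrt{nRT}}{\sqrt\rho}(\cdots)$ that are dealt with separately in the utility analysis, not here.
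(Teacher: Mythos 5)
Your proof is correct and follows essentially the same route as the paper, which simply reuses the curator-model accounting of Lemma~\ref{lem:iterative_alg_is_dp} while noting that each user randomizes her own messages; your version just spells out the per-user, per-round bookkeeping (including the adaptive-composition point and the extra final counting query) that the paper leaves implicit.
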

\begin{proof}
    The proof is very similar to the proof of Lemma~\ref{lem:iterative_alg_is_dp}~--- where we apply basically the same accounting, noticing that each $x\in P$ is in charge of randomizing her own data, making this algorithm LDP.
\end{proof}

\begin{lemma}
\label{lem:LDP_alg_utility}
W.p. $\geq1-\beta$, applying Algorithm~\ref{alg:ngd-meb} with $r\geq r_{opt}$ and an initial center $\theta_0$ s.t. $\|\theta_0-\theta_{opt}\|\leq 10r_{opt}$ returns a point $\theta^t$ where $\left| P\setminus B(\theta^t, (1+\gamma)r) \right| \leq  88\sqrt{\frac{nRT}{\rho}}\left(\sqrt{d}+\sqrt{2\ln(\nicefrac{4nRT}{\beta_0})}\right)+ \sqrt{\frac{2R(T+1)\log(4R(T+1)/\beta_0)}{\rho}}$.
\end{lemma}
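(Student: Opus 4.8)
The plan is to mirror the proof of Lemma~\ref{lem:DP_iterative_algorithm_utility} almost verbatim; the only structural change is that the Gaussian noise is now contributed per user. In Algorithm~\ref{alg:LDP-meb} every $x\in P$ independently draws $\Delta_{count}\sim{\cal N}(0,\sigma_{count}^2)$ and $\Delta_{sum}\sim{\cal N}(0,\sigma_{sum}^2 I_d)$, so by stability of the Gaussian the aggregates $A_{count}:=\sum_{x\in P}\Delta_{count,x}$ and $A_{sum}:=\sum_{x\in P}\Delta_{sum,x}$, which enter $\tilde n_w^t=n_w^t+A_{count}$ and $\sum_x Z_x^t=\sum_{x\notin B(\theta^t,r)}(x-\theta^t)+A_{sum}$, are distributed as ${\cal N}(0,n\sigma_{count}^2)$ and ${\cal N}(0,n\sigma_{sum}^2 I_d)$; moreover $A_{sum}$ is still spherically symmetric (so $\E[A_{sum}\mid \|A_{sum}\|\le a]=0$ for any $a$) and independent of $A_{count}$. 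The point of the argument is that this $\sqrt n$ inflation of the noise's standard deviation is exactly matched by the $\sqrt n$ inflation of every threshold in Algorithm~\ref{alg:LDP-meb} relative to Algorithm~\ref{alg:ngd-meb}.

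First I would define, for each of the $\le R$ repetitions and $\le T$ iterations, the event ${\cal E}_1$ that $|A_{count}|\le \sqrt n\,\sigma_{count}\sqrt{2\ln(4R(T+1)/\beta_0)}$ (and the analogous bound for the final count query) and the event ${\cal E}_2$ that $\|A_{sum}\|\le \sqrt n\,\sigma_{sum}\big(\sqrt d+\sqrt{2\ln(4RT/\beta_0)}\big)$. The Gaussian- and $\chi^2_d$-tail bounds from Section~\ref{sec:preliminaries}, together with a union bound over the $\le R(T+1)$ count queries and the $\le RT$ summation queries, give that all of these hold simultaneously with probability $\ge 1-\beta_0$; I condition on this. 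Fixing a repetition/iteration where an update step is made, the inner-loop test $\tilde n_w^t\ge \frac{88\sqrt{nRT}}{\sqrt\rho}(\sqrt d+\sqrt{2\ln(4RT/\beta_0)})$ together with ${\cal E}_1$ yields $n_w^t=\tilde n_w^t-A_{count}\ge 44|A_{count}|$, precisely as in the curator proof (the $\sqrt n$'s cancel in the comparison).

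Next I would re-run the two moment computations of Lemma~\ref{lem:DP_iterative_algorithm_utility} on the conditional distribution $\calD^t$ of $\tilde v_w^t=\frac{1}{n_w^t+A_{count}}\big(\sum_{x\notin B(\theta^t,r)}(x-\theta^t)+A_{sum}\big)$ given $\theta^t$ and ${\cal E}_1\cap{\cal E}_2$. Property~(i) of~\eqref{eq:requirements_of_good_dist} follows from the independence and spherical symmetry of $A_{sum}$ (which kills its cross-term), the bound $\E[1/(1+A_{count}/n_w^t)]\ge \frac{1}{1-1/44}$, and Claim~\ref{clm:pt_outside_ball_dotproduct}; property~(ii) follows by bounding $\E_{z\sim\calD^t}[\|z\|^2\mid\theta^t]\le \frac{1}{1-1/44}(11r)^2+\frac{\E[\|A_{sum}\|^2]}{(\tilde n_w^t)^2}$ and observing that $\E[\|A_{sum}\|^2]\le n\sigma_{sum}^2(\sqrt d+\sqrt{2\ln(4RT/\beta_0)})^2$ while $(\tilde n_w^t)^2\ge 88^2\,n\sigma_{sum}^2 r^{-2}(\sqrt d+\sqrt{2\ln(4RT/\beta_0)})^2$, so the second term is $\le r^2$ and the total is $<512r^2$ — again the $n$'s cancel. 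Thus the hypotheses of Corollary~\ref{cor:random_step_algorithm_whp} hold, and I conclude exactly as in Lemma~\ref{lem:DP_iterative_algorithm_utility}: with probability $\ge 1-\beta$ some repetition that performs all $T$ updates ends with $\|\theta^T-\theta_{opt}\|\le\gamma r$, hence $P\subset B(\theta^T,(1+\gamma)r)$, and under ${\cal E}_1$ the final count check then passes (the true count is $0$) so the algorithm returns this $\theta^T$; in any repetition that instead terminates earlier, ${\cal E}_1$ converts the noisy test into a true-count bound, namely $|P\setminus B(\theta^t,r)|$ (resp. $|P\setminus B(\theta^T,(1+\gamma)r)|$) is at most the relevant threshold plus $|A_{count}|$, which is the stated bound.

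I do not expect a real obstacle here: the entire content is careful noise bookkeeping. The one point one must not get wrong is that each per-user variance $\sigma_{count}^2,\sigma_{sum}^2$ in Algorithm~\ref{alg:LDP-meb} is set equal to the corresponding \emph{curator}-model variance, so that the factor $n$ picked up when summing $n$ independent per-user contributions is cancelled precisely by the factor $n$ that has been built into every threshold; this keeps conditions~\eqref{eq:requirements_of_good_dist}, and therefore the applicability of Corollary~\ref{cor:random_step_algorithm_whp}, intact. Verifying that the aggregate summation-noise remains spherically symmetric and independent of the aggregate count-noise — which is what makes the cross-terms in both (i) and (ii) vanish — is the other routine check that must be made explicit in the local model.
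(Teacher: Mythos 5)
Your proposal is correct and follows essentially the same route as the paper: the same events ${\cal E}_1,{\cal E}_2$ with $\sqrt n$-inflated concentration bounds, the observation that the per-user Gaussians aggregate to ${\cal N}(0,n\sigma^2)$ noise that remains spherically symmetric and independent of the count noise, the same verification that the $\sqrt n$ factors cancel against the inflated thresholds so that conditions~\eqref{eq:requirements_of_good_dist} still hold, and the same invocation of Corollary~\ref{cor:random_step_algorithm_whp} plus the early-termination case analysis. In fact you spell out the moment computations that the paper dismisses as ``precisely the same proof,'' so nothing is missing.
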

\begin{proof}
    Analogously to the proof of Lemma~\ref{lem:DP_iterative_algorithm_utility}, we use the similar definitions: in each iteration $t$ we denote $n_w^t$ as the true number of datapoints in $P$ outside the ball $n_w^t = |\{x\in P:~ x\notin B(\theta^t, r)\}|$,\footnote{Where technically, in the last steps of the algorithm, $n_w^T = |\{x\in P:~ x\notin B(\theta^T, (1+\gamma)r)\}|$.} $\mu_w^t$ as their true mean $\mu_W^t = \frac{1}{n_w^t}\sum_{x\notin B(\theta^t,r)}x$, and $v_w^t$ as the difference of the true mean and the current center $v_w^t = \mu_w^t-\theta^t = \frac{1}{n_w^t}\sum_{x\notin B(\theta^t,r)}(x-\theta^t)$. We thus define the events
    \begin{align*}
        {\cal E}_1 &:= \text{in all $T+1$ iterations, $|\tilde n_w^t - n_w^t|\leq \sqrt{\frac{2nR(T+1)\log(\nicefrac{4(T+1)}{\beta})}{\rho}}$}
        \cr {\cal E}_2 &:= \text{in all $T$ iterations, $\|\sum_{x}Z_x^t - n_w^t v_w^t\|\leq \frac{88r\sqrt{nRT}}{\sqrt{\rho}}\left(\sqrt{d}+\sqrt{2\ln(\nicefrac{4T}{\beta})}\right)$}
    \end{align*}
    Proving that both $\Pr[\overline{\cal E}_1]\leq \beta/2$ and $\Pr[\overline{\cal E}_2]\leq \beta/2$ is rather straight-forward. In each iteration $t$ it holds that $\sum_x{Y^t_x}\sim{\cal N}(n_w^t, n\sigma^2_{count})$ as the sum on $n$ independent Gaussians, and so we merely apply standard Gaussian concentration  bounds together with the union bound over all $T+1$ iterations. 
    Similarly, in each iteration $t$ it holds that $\sum_x Z^t_x \sim {\cal N}(n_w^t(\mu_x^t-\theta^t), n\sigma^2_{sum}I_d)$. So standard bounds on the concentration of the $\chi^2_d$-distribution assert that the $L_2$-distance between the random draw from such a $d$-dimensional Gaussian and its mean is $> \sqrt{n\sigma^2_{sum}}(\sqrt{d} + \sqrt{2\ln(\nicefrac{4T}{\beta})}$ w.p. $<\frac{\beta}{2T}$, after which we apply the union-bound on all $T$ iterations. We continue the rest of the proof conditioning on both ${\cal E}_1$ and ${\cal E}_2$  holding.

    Again, due to our if-condition, we make an update-step only when $\tilde n_w^t$ is large, which, under ${\cal E}_1$ implies that
    \[ n_w^t \geq  
    \frac{88\sqrt{nRT}}{\sqrt\rho}\left(\sqrt{d}+\sqrt{2\ln(\nicefrac{4RT}{\beta_0})}\right)-\sqrt{\frac{2nR(T+1)\ln(\nicefrac{4R(T+1)}{\beta_0})}{\rho}}\geq 44 |\Delta_{count}|
    \]
    and then proving that the distribution which we use to make an update-step satisfies the conditions detailed in~\eqref{eq:requirements_of_good_dist} w.h.p.~is precisely the same proof (using the independence of $\Delta_{count}$ and $\Delta_{sum}$ and the fact that $\E[\Delta_{sum}]=0$).

    Invoking Corollary~\ref{cor:random_step_algorithm_whp} we have that if we make all $T$ updates then indeed $\|\theta^T-\theta_0\|\leq \gamma r$ and so $|P\setminus B(\theta^T, (1+\gamma)R)|=0$. So under ${\cal E}_1$ Algorithm~\ref{alg:ngd-meb} returns $\theta^T$. Otherwise, at some iteration we do not make an update step, which under ${\cal E}_1$ suggest that 
    \[ n_w^t = |P\setminus B(\theta^t,r)| \leq 88\sqrt{\frac{nRT}{\rho}}\left(\sqrt{d}+\sqrt{2\ln(\nicefrac{4nRT}{\beta_0})}\right)+ \sqrt{\frac{2R(T+1)\log(4R(T+1)/\beta_0)}{\rho}} \qedhere \]
\end{proof}

\begin{corollary}
    \label{cor:LDP_utility} 
    Algorithm~\ref{alg:dp-meb} altered so it invokes $B=O(\log(\nicefrac 1 \gamma))$ calls to Algorithm~\ref{alg:LDP-meb} (instead of Algorithm~\ref{alg:ngd-meb}) is a $O(\frac {\log(\nicefrac 1\beta)\log^2(\nicefrac 1 \gamma)}{\gamma^{2}})$-rounds $\rho$-zCDP algorithm in the local-model that takes $O(n\cdot \frac{\log^2(\nicefrac 1 \gamma)\log(\nicefrac 1 \beta)}{\gamma^2} )$-time; and that returns a ball $B(\theta^*,r^*)$ such that $r^*\leq (1+3\gamma)r_{opt}$ and $|P\setminus B(\theta^*,r^*)| = O( \frac{\sqrt n\left(\sqrt{d}+\sqrt{\log(\nicefrac{\log(\nicefrac 1 \beta)}{\gamma})}\right)\sqrt{\log(\nicefrac 1 \gamma)\log(\nicefrac 1 \beta)}}{\gamma\sqrt{\rho}}  )$.
\end{corollary}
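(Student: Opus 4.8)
The plan is to mirror the proof of Corollary~\ref{cor:DP_utility}, replacing the curator-model sub-routine Algorithm~\ref{alg:ngd-meb} by its local-model counterpart Algorithm~\ref{alg:LDP-meb} and invoking Lemma~\ref{lem:LDP_alg_utility} in place of Lemma~\ref{lem:DP_iterative_algorithm_utility}. For privacy I would argue exactly as in Lemma~\ref{lem:iterative_alg_is_dp} and Corollary~\ref{cor:overall_alg_is_DP}: within a single call to Algorithm~\ref{alg:LDP-meb} every $x\in P$ locally perturbs its own messages $Y_x^t, Z_x^t$, so by the per-user Gaussian-mechanism guarantee for zCDP (with the sensitivities being $1$ for each $Y$-message and $\leq 88r$ for each $Z$-message) together with sequential composition over the $R(T+1)$ rounds, the call is $\frac{\rho}{B}$-zCDP in the local model; composing the $B=\lceil\log_2(\log_{1+\gamma}(4))\rceil=O(\log(1/\gamma))$ calls made by the altered Algorithm~\ref{alg:dp-meb} then yields $\rho$-zCDP overall, still in the local model since no step ever touches raw data centrally.

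For utility I would apply Lemma~\ref{lem:LDP_alg_utility} to each of the $B$ invocations, each run with privacy budget $\rho/B=\Omega(\rho/\log(1/\gamma))$ and failure probability $\beta/B$, and union bound over the $B$ calls to get that w.p. $\geq 1-\beta$ every invocation succeeds. Substituting $T=\frac{4096}{\gamma^2}\ln(\frac{484}{\gamma^2})=O(\gamma^{-2}\log(1/\gamma))$, $R=\lceil\log_{8/7}(1/\beta)\rceil=O(\log(1/\beta))$, $\beta_0=\frac{1}{16RT}$, and $\rho\mapsto\rho/B$ into the bound $|P\setminus B(\theta^t,(1+\gamma)r)|\leq 88\sqrt{nRT/\rho}\,(\sqrt d+\sqrt{2\ln(4nRT/\beta_0)})+\sqrt{2R(T+1)\log(4R(T+1)/\beta_0)/\rho}$ of Lemma~\ref{lem:LDP_alg_utility}, and collapsing the polylog terms (the second summand is dominated by the first, and $\ln(4nRT/\beta_0)=O(\log(n)+\log(\log(1/\beta)/\gamma))$ up to the stated $\tilde O$-notation), gives $|P\setminus B(\theta^*,r^*)|=O\big(\frac{\sqrt n(\sqrt d+\sqrt{\log(\log(1/\beta)/\gamma)})\sqrt{\log(1/\gamma)\log(1/\beta)}}{\gamma\sqrt\rho}\big)$ as claimed.

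For the radius bound I would reuse the binary-search argument of Corollary~\ref{cor:DP_utility} verbatim: set $i^*=\min\{i\geq 0:\ \frac{r_0}{4}(1+\gamma)^i\geq r_{opt}\}$; conditioned on all $B$ invocations succeeding, Lemma~\ref{lem:LDP_alg_utility} together with Corollary~\ref{cor:random_step_algorithm_whp} guarantees that whenever the search is run with $i_{cur}\geq i^*$ it returns $\theta_{cur}\neq\bot$ with $P\subset B(\theta_{cur},(1+\gamma)r_{cur})$, so by the structure of the binary search (upon success we set $i_{\max}\leftarrow i_{cur}$) the final returned radius is $(1+\gamma)\cdot\frac{r_0}{4}(1+\gamma)^i$ for some $i\leq i^*$, whence $r^*\leq(1+\gamma)^2r_{opt}\leq(1+3\gamma)r_{opt}$. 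Finally, one run of Algorithm~\ref{alg:LDP-meb} uses $R(T+1)=O(\log(1/\beta)\gamma^{-2}\log(1/\gamma))$ communication rounds and $O(nRT)$ total work (each round has every user send $O(1)$-size messages that are then aggregated), so the altered Algorithm~\ref{alg:dp-meb} with its $B$ calls needs $O(RTB)=O(\log(1/\beta)\log^2(1/\gamma)/\gamma^2)$ rounds and $O(nRTB)=O(n\log^2(1/\gamma)\log(1/\beta)/\gamma^2)$ time.

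I expect essentially no conceptual obstacle: Lemma~\ref{lem:LDP_alg_utility} already does the real work of showing that the local-model update distribution satisfies~\eqref{eq:requirements_of_good_dist} with high probability (the $\sqrt n$ blow-up entering because $\sum_x Y_x^t$ and $\sum_x Z_x^t$ are sums of $n$ independent Gaussians, hence have variance scaled by $n$). The only care needed is (a) confirming that this $\sqrt n$ factor traced through the noise scaling matches the claimed bound exactly once polylog factors are collapsed, and (b) verifying that the privacy accounting across the $B$ binary-search calls and the $R(T+1)$ per-call rounds composes to precisely $\rho$ under the engineered choices $\sigma_{count}^2=\frac{R(T+1)}{\rho}$, $\sigma_{sum}^2=\frac{RT(88r)^2}{\rho}$.
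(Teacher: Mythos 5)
Your proposal matches the paper's argument: the paper's proof of this corollary is exactly the substitution you describe — apply Lemma~\ref{lem:LDP_alg_utility} with $T=O(\gamma^{-2}\log(\nicefrac 1 \gamma))$, privacy budget $\rho/B$ and failure probability $\beta/B$ per invocation, and compose over the $B=O(\log(\nicefrac 1 \gamma))$ binary-search calls, with the radius and runtime bounds carried over verbatim from Corollary~\ref{cor:DP_utility}. Your additional detail on the privacy accounting and the binary-search argument is consistent with what the paper establishes in the surrounding claims.
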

\begin{proof}
    The proof follows from using the bound of Lemma~\ref{lem:LDP_alg_utility} with $T=O(\gamma^{-2}\log(\nicefrac 1 \gamma))$, and with a privacy budget of $\rho/B$ and failure probability of $\beta/B$ in each invocation of Algorithm~\ref{alg:LDP-meb}.
\end{proof}
}
\LDPSection

\DeclareRobustCommand{\section{Experiments}
\label{sec:experiments}

In this section we give an experimental evaluation of our algorithm on three synthetic datasets and one real dataset. We emphasize that our experiment should be perceived  merely as a proof-of-concept experiment aimed at the possibility of improving the algorithm's analysis, and not a thorough experimentation for a ready-to-deploy code. We briefly explain the experimental setup below.

\paragraph{Goal.}
We set to investigate the performance of our algorithm, and seeing whether the performance is similar across different types of input and across a range of parameters. In addition, we wondered whether in practice our algorithm halts prior to concluding all  $T=O(\gamma^{-2}\ln(\nicefrac 1 \gamma))$ iterations.

\paragraph{Experiment details.}
We conducted experiments solely with Algorithm~\ref{alg:ngd-meb} with update-step that uses a constant learning rate of $\nicefrac{\gamma^2}{8}$, feeding it the true $r_{opt}$ of each given dataset as its $r$ parameter.
By default, we used the following set of parameters. Our domain in the synthetic experiments is $[-5,5]^{10}$ (namely, we work in the $10$-dimensional space), and our starting point $\theta_0$ is the origin. The default values of our privacy parameter is $\rho = 0.3$, of the approximation constant is $1.2$ (namely $\gamma = 0.2$), and of the failure probability is $\beta=e^{-9}\approx 0.00012$. We set the maximal number of repetitions $T$ just as detailed in Algorithm~\ref{alg:ngd-meb}, which depends on $\gamma$. 

We varied two of the input parameters, $\rho$ and $\gamma$, and also the data-type. We ran experiments with $\rho \in \{0.1, 0.3, 0.5, 0.7, 0.9\}$ and with $\gamma \in \{0.1, 0.2, 0.3, 0.4, 0.5\}$. Based on the values of $\rho$ and $\gamma$ we computed $n_0 =\frac{\sqrt{RT}(\sqrt{d}+\sqrt{\ln(\nicefrac {4RT} \beta_0)}}{\sqrt{\rho}}$ which we used as our halting parameter. In all experiments involving a synthetic dataset, we set the input size $n$  to be $n=640n_0$.

We varied also the input type, using 3 synthetically generated datasets and one real-life dataset:
\begin{itemize}
    \item \underline{Spherical Gaussian:} we generated samples from a $d$-dimensional Gaussian $\mathcal{N}(v, I_d)$, where $v \in \mathbb{R}^d$ is a random shift vector. We discarded each point that did not fall in $[-5,5]^{10}$.
    \item \underline{Product Distribution:} we generated samples from a $d$-dimensional Bernoulli distribution with support $\{-1, 1\}^d$ with various probabilities for each dimension~--- where for each coordinate $i\in [10]$ we set $\Pr[x_i = 1]=2^{-i}$. This creates a ``skewed'' distribution whose mean is quite far from its $1$-center. In order for the $1$-center not to coincide with $\theta_0=\bar 0$ we shifted this cube randomly in the grid.
    \item \underline{Conditional Gaussian:} we repeated the experiment with the spherical Gaussian only this time we conditioned our random draws so that no coordinate lies in the $[0,0.5]$-interval. This skews the mean of the distribution to be $<0$ in each coordinate, but leaves the $1$-center unaltered. Again, we shifted the Gaussian to a random point $v\in [-5,5]^d$.   
    \item \underline{``Bar Crawl: Detecting Heavy Drinking'':} a dataset taken from the freely available UCI Machine Learning Repository~\cite{UCI} which collected accelerometer data from participants in a college bar crawl~\cite{KillianPNMC19}. We truncated the data to only its 3 $x$-, $y$- and $z$-coordinates, and dropped any entry outside of $[-1,1]^3$, and since it has two points $(-1,-1,-1)$ and $(1,1,1)$ then its $1$-center is the origin (so we shifted the data randomly in the $[-5,5]^3$ cube). This left us with $n=12,921,593$ points. Note that the data is taken from a very few participants, so our algorithm gives an event-level privacy~\cite{DworkNPR10}.
\end{itemize}
We ran our experiments in Python, on a (fairly standard) Intel Core i7 \@ 2.80 GHz with 16GB RAM and they run in time that ranged from $15$ seconds (for $\gamma=0.5$) to $2$ hours (for $\gamma=0.1$).

\paragraph{Results.}
The results are given in Figures~\ref{fig:results_by_gamma}, \ref{fig:results_by_rho}, where we plotted the distance of $\theta^t$ to $\theta_{opt}$ for each set of parameters across $t=10$ repetitions. As evident, we converged to a good approximation of the MEB in all settings. We halt the experiment (i) if $\|\theta_t - \theta_{opt}\| \leq \gamma r_{opt}$, or (ii) if there are not enough wrong points, or (iii) if $t > 2500$ indicating that the run isn't converging.
Indeed, the number of iterations until convergence does increase as $\gamma$ decreases; but, rather surprisingly, varying $\rho$ has  a small effect on the halting time. This is somewhat expected as $T$ has no dependency on $\rho$ whereas its dependency on $\gamma$ is proportional to $\gamma^{-2}$, but it is evident that as $\rho$ increases our mean-estimation in each iteration becomes more accurate, so one would expect a faster convergence. Also unexpectedly, our results show that even for datasets whose mean and $1$-center aren't close to one another (such as the Conditional Gaussian or Product-Distribution), the number of iterations until convergence remains roughly the same (see for example Figure~\ref{fig:results_by_gamma} vs.~\ref{fig:results_by_rho}).

\begin{figure}	
	\centering
	\begin{subfigure}[t]{0.475\textwidth}
		\centering
		\includegraphics[width=\textwidth]{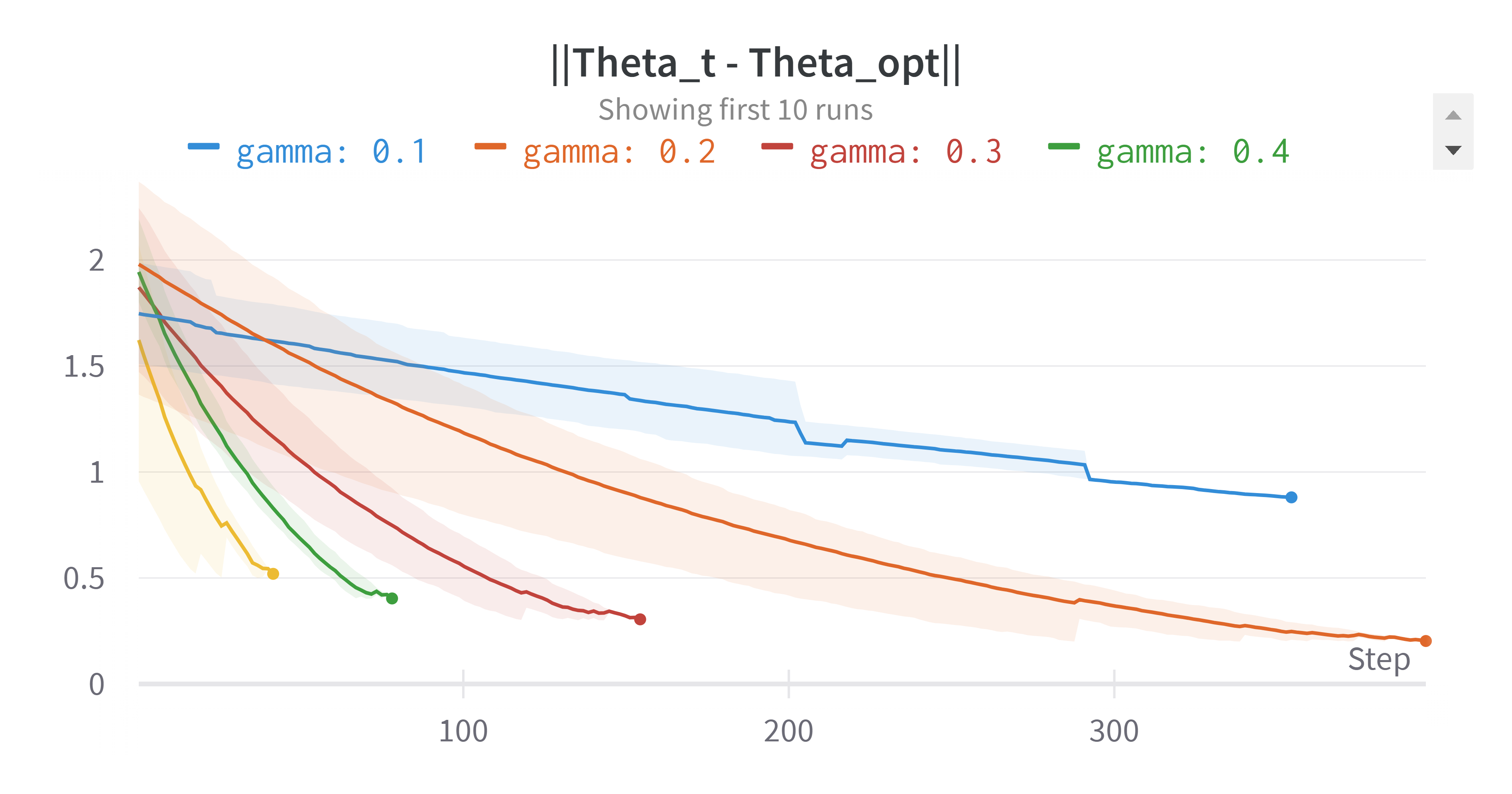}
		\caption{Spherical Gaussian}
	\end{subfigure}
	\hfill
	\begin{subfigure}[t]{0.475\textwidth}
		\centering
		\includegraphics[width=\textwidth]{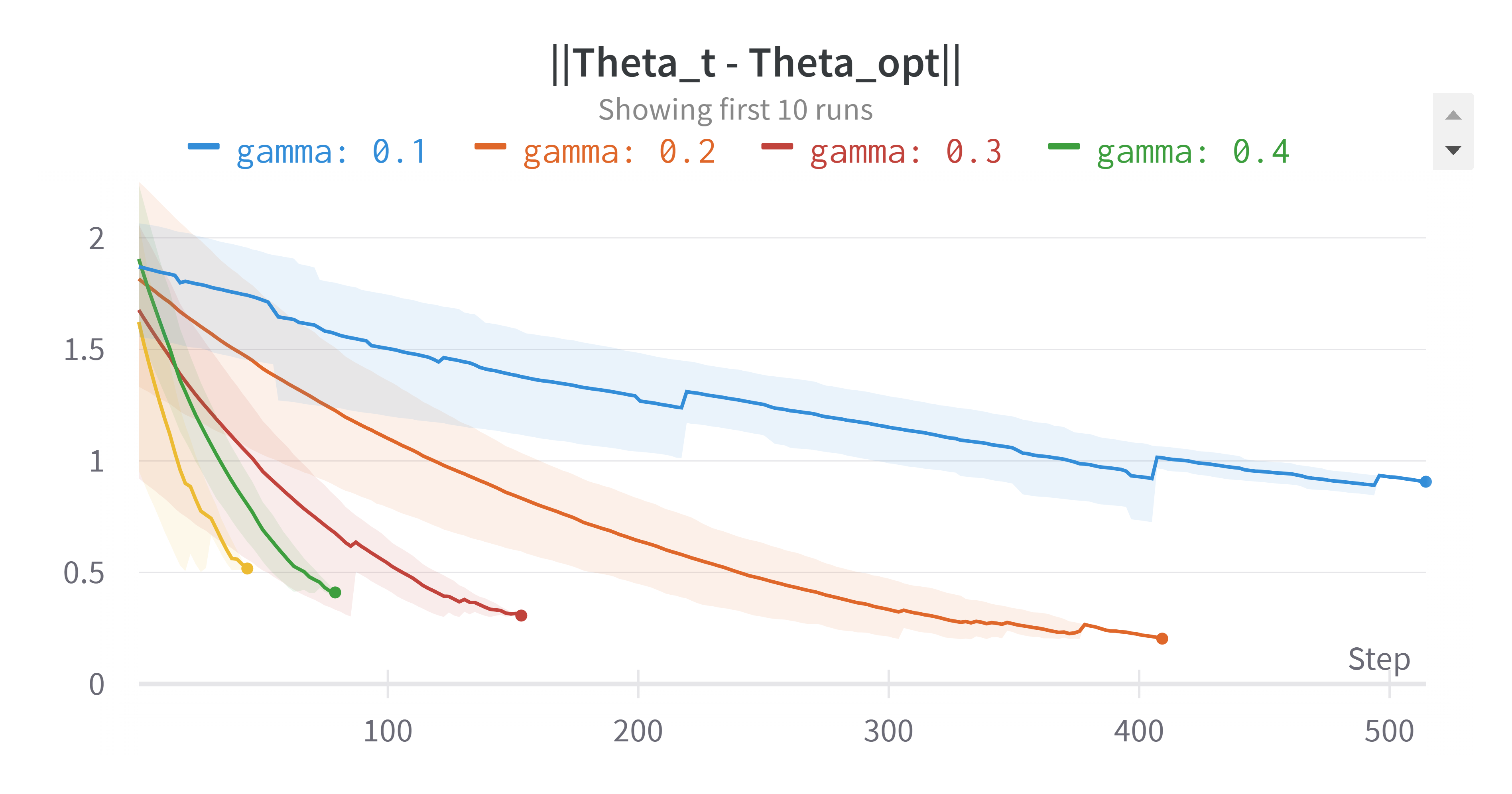}
		\caption{Conditional Gaussian}
	\end{subfigure}
	\vskip\baselineskip
	\begin{subfigure}[t]{0.475\textwidth}
		\centering
		\includegraphics[width=\textwidth]{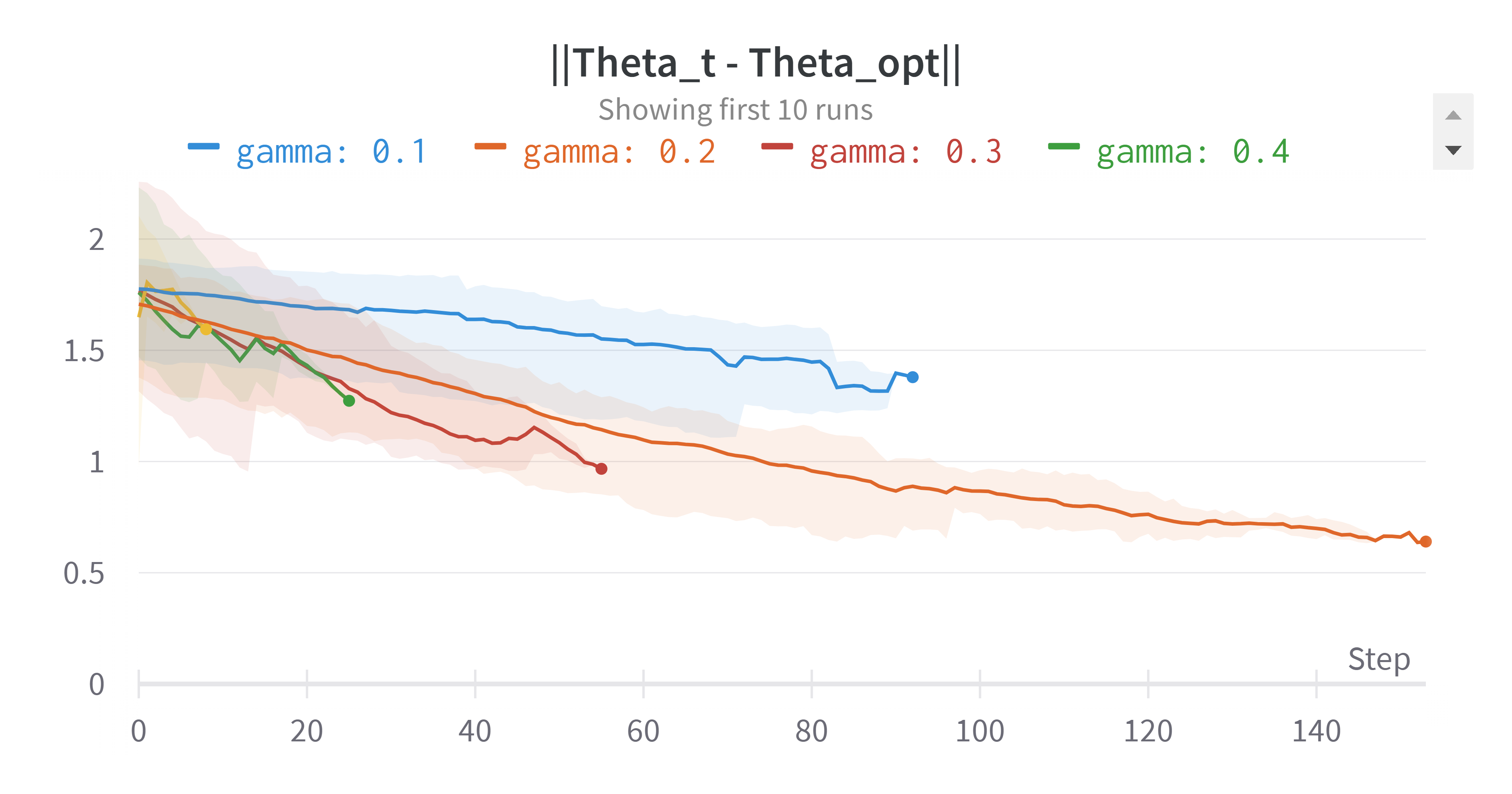}
		\caption{Product Distribution}
	\end{subfigure}
	\hfill
	\begin{subfigure}[t]{0.475\textwidth}
		\centering
		\includegraphics[width=\textwidth]{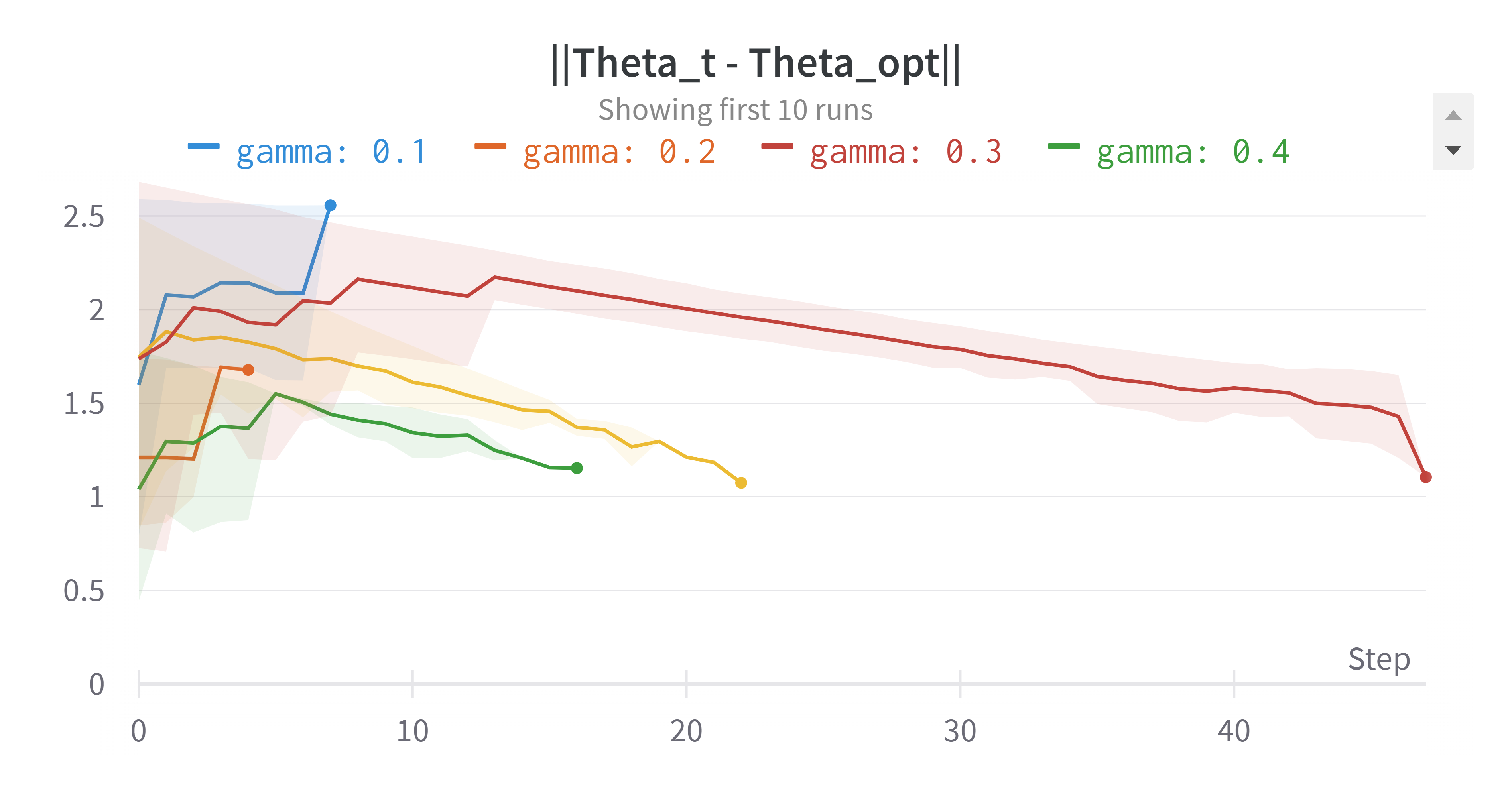}
		\caption{\label{subfig:barcrawl_gamma}Bar Crawl: Detecting Heavy Drinking}
	\end{subfigure}
	\caption{\label{fig:results_by_gamma} The distance of $\theta^t$ to $\theta_{opt}$ as a function of $t$ -- the iteration number, for $\rho = 0.3$ and $\gamma \in \{0.1, 0.2, 0.3, 0.4, 0.5 \}$. Each curve corresponds to a different $\gamma$ value. In all experiments the number of iterations until convergence does increase as $\gamma$ decreases, except for $\gamma=0.1$ where it halts because there were not enough wrong points. Note that for $\gamma=0.1$ for Bar Crawl dataset (figure~\ref{subfig:barcrawl_gamma}) we didn't converge due to its size.}
\end{figure}

\begin{figure}	
	\centering
	\begin{subfigure}[t]{0.475\textwidth}
		\centering
		\includegraphics[width=\textwidth]{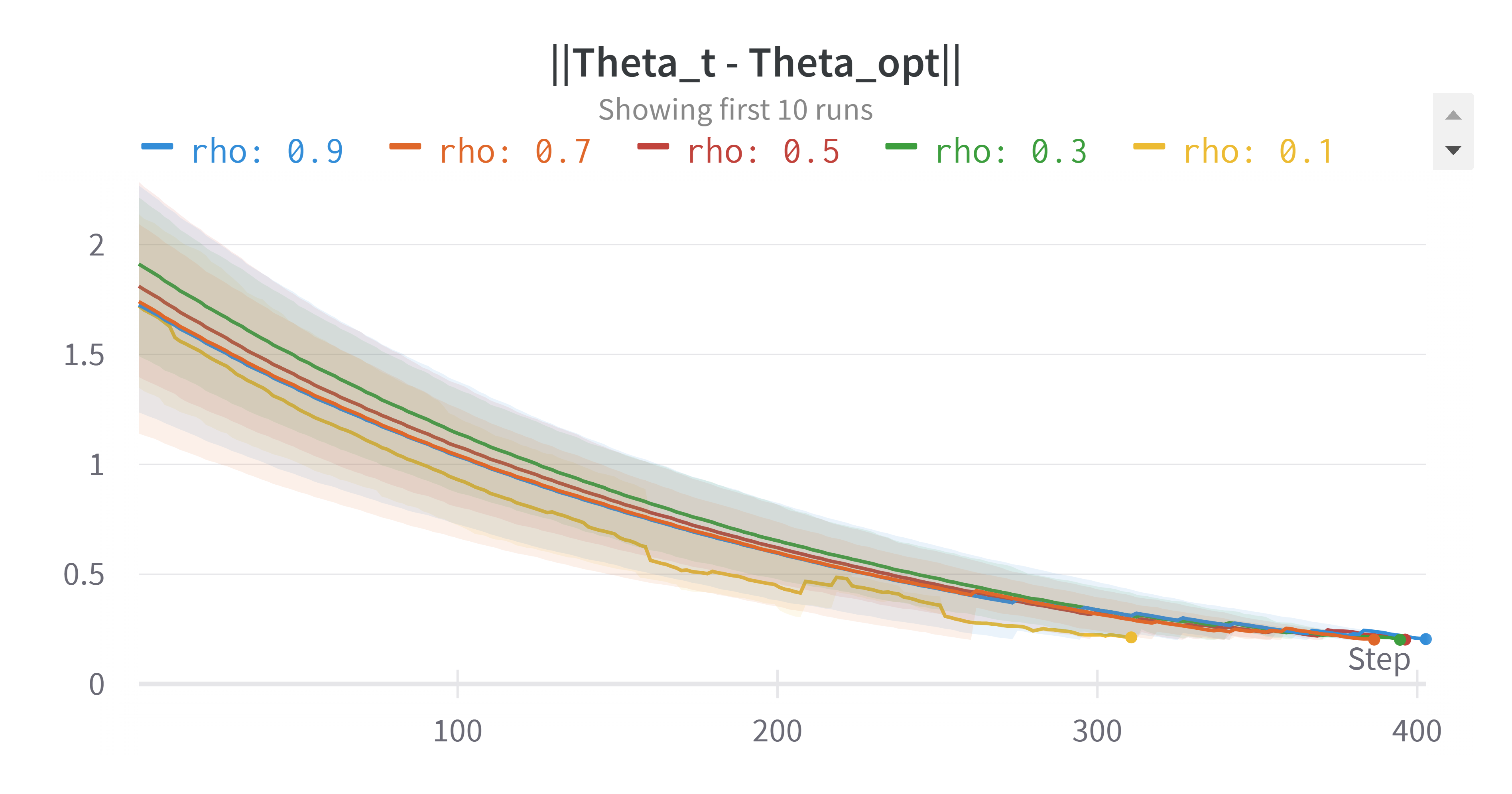}
		\caption{Spherical Gaussian}
	\end{subfigure}
	\hfill
	\begin{subfigure}[t]{0.475\textwidth}
		\centering
		\includegraphics[width=\textwidth]{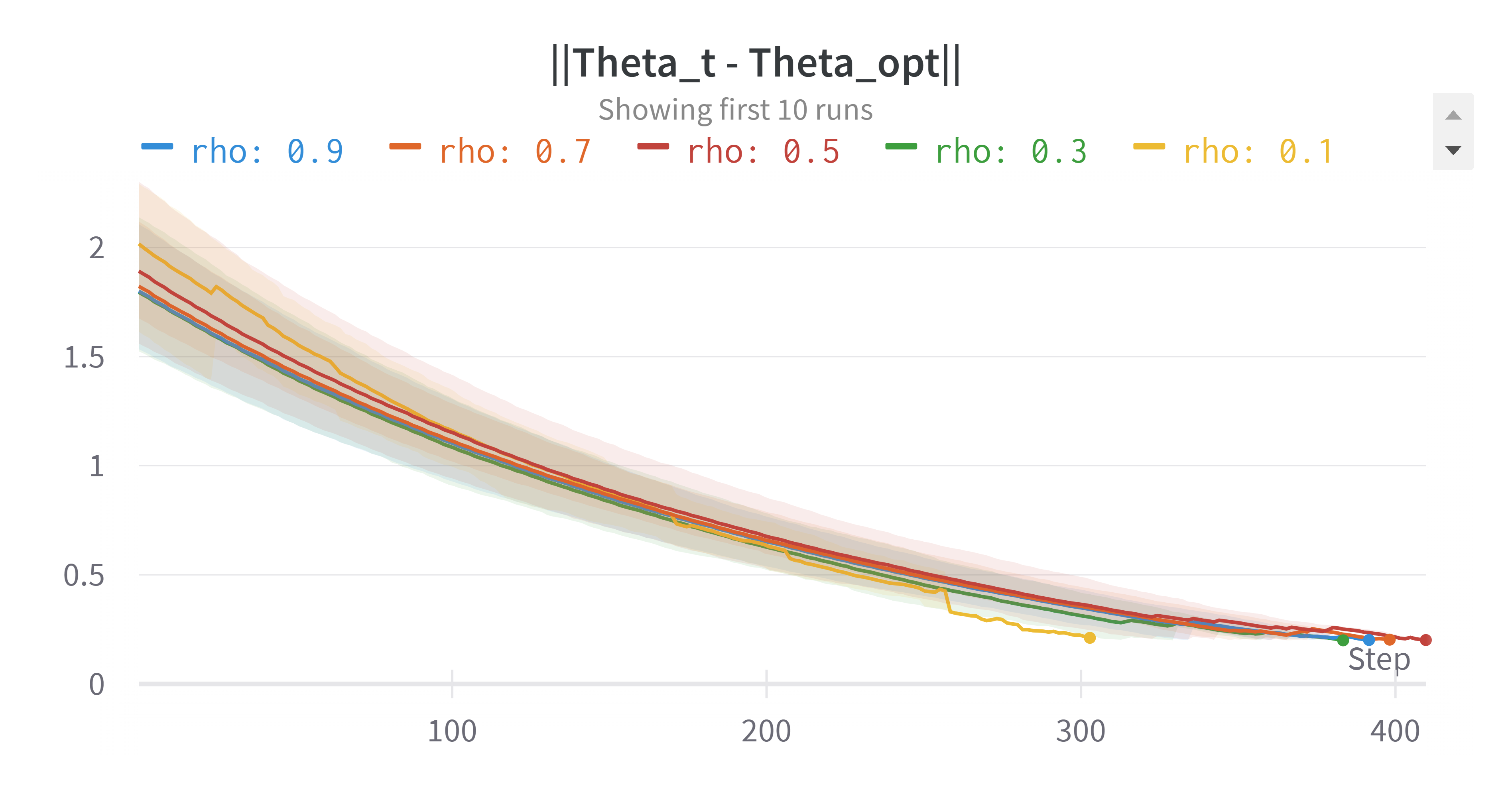}
		\caption{Conditional Gaussian}
	\end{subfigure}
	\vskip\baselineskip
	\begin{subfigure}[t]{0.475\textwidth}
		\centering
		\includegraphics[width=\textwidth]{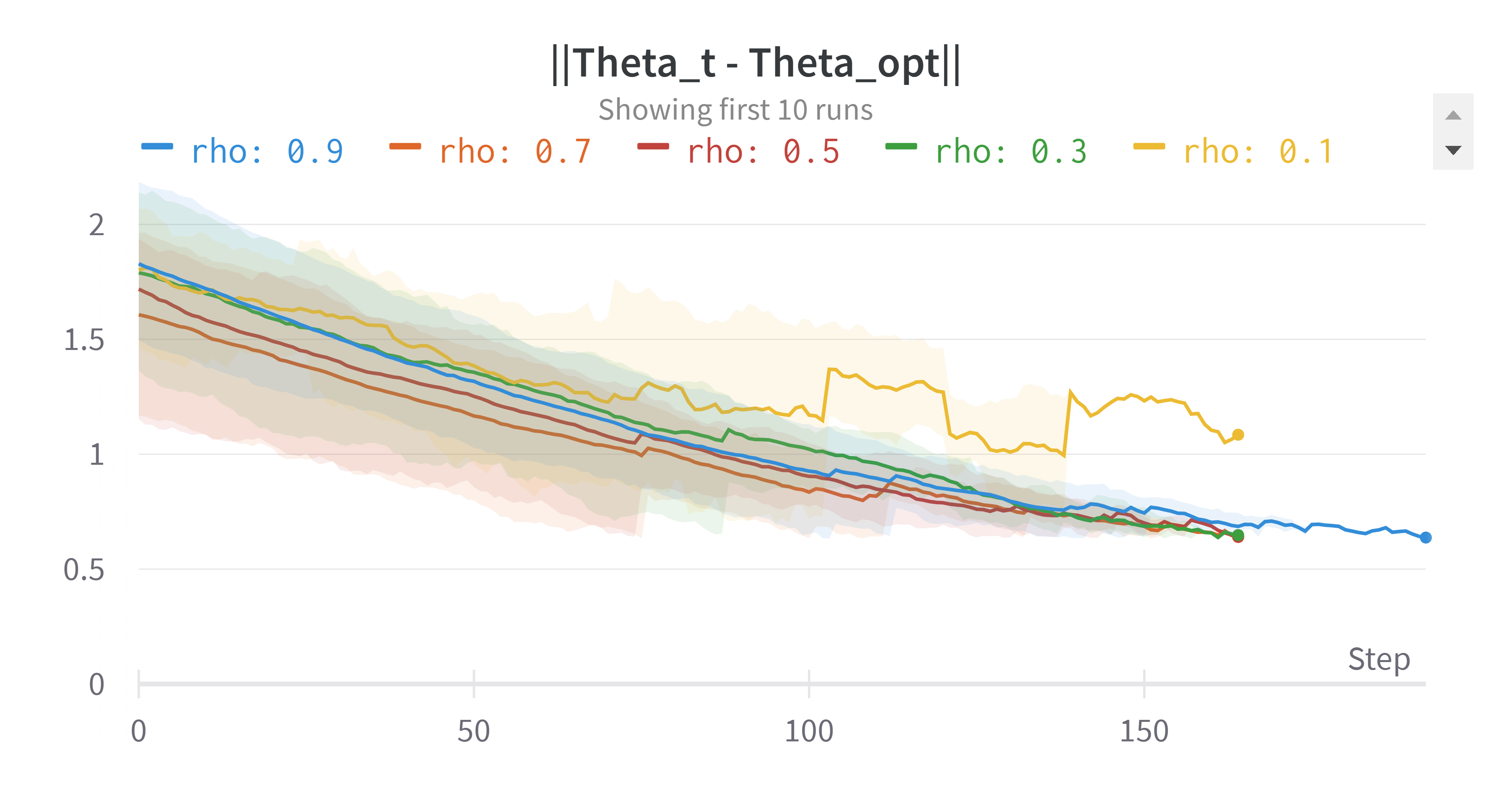}
		\caption{\label{subfig:productdist_rho}Product Distribution}
	\end{subfigure}
	\hfill
	\begin{subfigure}[t]{0.475\textwidth}
		\centering
		\includegraphics[width=\textwidth]{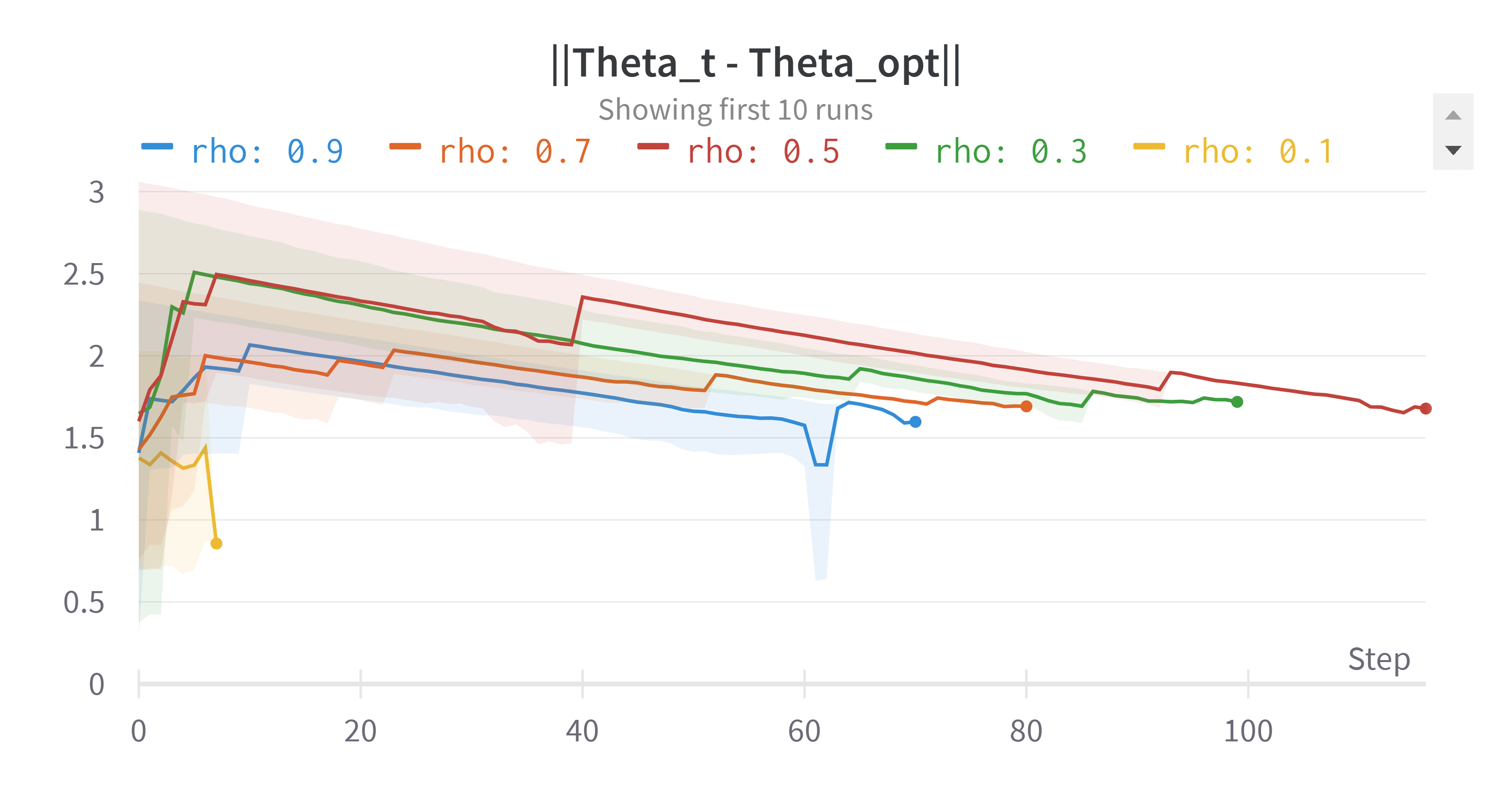}
		\caption{\label{subfig:barcrawl_rho}Bar Crawl: Detecting Heavy Drinking}
	\end{subfigure}
	\caption{\label{fig:results_by_rho} The distance of $\theta^t$ to $\theta_{opt}$ as a function of $t$ -- the iteration number, for $\gamma = 0.2$ and $\rho \in \{0.1, 0.3, 0.5, 0.7, 0.9 \}$. Each curve corresponds to a different $\rho$ value. In all experiments varying $\rho$ has a small effect on the halting time.} 
\end{figure}

\paragraph{Conclusions.}
Our experiments suggest that indeed our bound $T$ is a worst-case bound, where in all experiments we concluded in about $7-50$ times faster than the bound of Algorithm~\ref{alg:ngd-meb}.
This suggests that perhaps one would be better off if instead of partitioning the privacy budget equally across all $T$ iterations, they devise some sort of adaptive privacy budgeting. (E.g., using $\nicefrac {3\rho}{4}$ budget on the first $T/4$ iterations and then the remaining $\nicefrac \rho 4$ budget on the latter $\nicefrac {3T} 4$ iterations.) Such adaptive budgeting is simple when using zCDP, as it does not require ``privacy odometers''~\cite{RogersVRU16}.

}{\section{Experiments}
\label{sec:experiments}

In this section we give an experimental evaluation of our algorithm on three synthetic datasets and one real dataset. We emphasize that our experiment should be perceived  merely as a proof-of-concept experiment aimed at the possibility of improving the algorithm's analysis, and not a thorough experimentation for a ready-to-deploy code. We briefly explain the experimental setup below.

\paragraph{Goal.}
We set to investigate the performance of our algorithm, and seeing whether the performance is similar across different types of input and across a range of parameters. In addition, we wondered whether in practice our algorithm halts prior to concluding all  $T=O(\gamma^{-2}\ln(\nicefrac 1 \gamma))$ iterations.

\paragraph{Experiment details.}
We conducted experiments solely with Algorithm~\ref{alg:ngd-meb} with update-step that uses a constant learning rate of $\nicefrac{\gamma^2}{8}$, feeding it the true $r_{opt}$ of each given dataset as its $r$ parameter.
By default, we used the following set of parameters. Our domain in the synthetic experiments is $[-5,5]^{10}$ (namely, we work in the $10$-dimensional space), and our starting point $\theta_0$ is the origin. The default values of our privacy parameter is $\rho = 0.3$, of the approximation constant is $1.2$ (namely $\gamma = 0.2$), and of the failure probability is $\beta=e^{-9}\approx 0.00012$. We set the maximal number of repetitions $T$ just as detailed in Algorithm~\ref{alg:ngd-meb}, which depends on $\gamma$. 

We varied two of the input parameters, $\rho$ and $\gamma$, and also the data-type. We ran experiments with $\rho \in \{0.1, 0.3, 0.5, 0.7, 0.9\}$ and with $\gamma \in \{0.1, 0.2, 0.3, 0.4, 0.5\}$. Based on the values of $\rho$ and $\gamma$ we computed $n_0 =\frac{\sqrt{RT}(\sqrt{d}+\sqrt{\ln(\nicefrac {4RT} \beta_0)}}{\sqrt{\rho}}$ which we used as our halting parameter. In all experiments involving a synthetic dataset, we set the input size $n$  to be $n=640n_0$.

We varied also the input type, using 3 synthetically generated datasets and one real-life dataset:
\begin{itemize}
    \item \underline{Spherical Gaussian:} we generated samples from a $d$-dimensional Gaussian $\mathcal{N}(v, I_d)$, where $v \in \mathbb{R}^d$ is a random shift vector. We discarded each point that did not fall in $[-5,5]^{10}$.
    \item \underline{Product Distribution:} we generated samples from a $d$-dimensional Bernoulli distribution with support $\{-1, 1\}^d$ with various probabilities for each dimension~--- where for each coordinate $i\in [10]$ we set $\Pr[x_i = 1]=2^{-i}$. This creates a ``skewed'' distribution whose mean is quite far from its $1$-center. In order for the $1$-center not to coincide with $\theta_0=\bar 0$ we shifted this cube randomly in the grid.
    \item \underline{Conditional Gaussian:} we repeated the experiment with the spherical Gaussian only this time we conditioned our random draws so that no coordinate lies in the $[0,0.5]$-interval. This skews the mean of the distribution to be $<0$ in each coordinate, but leaves the $1$-center unaltered. Again, we shifted the Gaussian to a random point $v\in [-5,5]^d$.   
    \item \underline{``Bar Crawl: Detecting Heavy Drinking'':} a dataset taken from the freely available UCI Machine Learning Repository~\cite{UCI} which collected accelerometer data from participants in a college bar crawl~\cite{KillianPNMC19}. We truncated the data to only its 3 $x$-, $y$- and $z$-coordinates, and dropped any entry outside of $[-1,1]^3$, and since it has two points $(-1,-1,-1)$ and $(1,1,1)$ then its $1$-center is the origin (so we shifted the data randomly in the $[-5,5]^3$ cube). This left us with $n=12,921,593$ points. Note that the data is taken from a very few participants, so our algorithm gives an event-level privacy~\cite{DworkNPR10}.
\end{itemize}
We ran our experiments in Python, on a (fairly standard) Intel Core i7 \@ 2.80 GHz with 16GB RAM and they run in time that ranged from $15$ seconds (for $\gamma=0.5$) to $2$ hours (for $\gamma=0.1$).

\paragraph{Results.}
The results are given in Figures~\ref{fig:results_by_gamma}, \ref{fig:results_by_rho}, where we plotted the distance of $\theta^t$ to $\theta_{opt}$ for each set of parameters across $t=10$ repetitions. As evident, we converged to a good approximation of the MEB in all settings. We halt the experiment (i) if $\|\theta_t - \theta_{opt}\| \leq \gamma r_{opt}$, or (ii) if there are not enough wrong points, or (iii) if $t > 2500$ indicating that the run isn't converging.
Indeed, the number of iterations until convergence does increase as $\gamma$ decreases; but, rather surprisingly, varying $\rho$ has  a small effect on the halting time. This is somewhat expected as $T$ has no dependency on $\rho$ whereas its dependency on $\gamma$ is proportional to $\gamma^{-2}$, but it is evident that as $\rho$ increases our mean-estimation in each iteration becomes more accurate, so one would expect a faster convergence. Also unexpectedly, our results show that even for datasets whose mean and $1$-center aren't close to one another (such as the Conditional Gaussian or Product-Distribution), the number of iterations until convergence remains roughly the same (see for example Figure~\ref{fig:results_by_gamma} vs.~\ref{fig:results_by_rho}).

\begin{figure}	
	\centering
	\begin{subfigure}[t]{0.475\textwidth}
		\centering
		\includegraphics[width=\textwidth]{experiments/SphericalGaussianByGamma.png}
		\caption{Spherical Gaussian}
	\end{subfigure}
	\hfill
	\begin{subfigure}[t]{0.475\textwidth}
		\centering
		\includegraphics[width=\textwidth]{experiments/ConditionalGaussianByGamma.png}
		\caption{Conditional Gaussian}
	\end{subfigure}
	\vskip\baselineskip
	\begin{subfigure}[t]{0.475\textwidth}
		\centering
		\includegraphics[width=\textwidth]{experiments/ProductDistributionByGamma.png}
		\caption{Product Distribution}
	\end{subfigure}
	\hfill
	\begin{subfigure}[t]{0.475\textwidth}
		\centering
		\includegraphics[width=\textwidth]{experiments/DetectingHeavyDrinkingByGamma.png}
		\caption{\label{subfig:barcrawl_gamma}Bar Crawl: Detecting Heavy Drinking}
	\end{subfigure}
	\caption{\label{fig:results_by_gamma} The distance of $\theta^t$ to $\theta_{opt}$ as a function of $t$ -- the iteration number, for $\rho = 0.3$ and $\gamma \in \{0.1, 0.2, 0.3, 0.4, 0.5 \}$. Each curve corresponds to a different $\gamma$ value. In all experiments the number of iterations until convergence does increase as $\gamma$ decreases, except for $\gamma=0.1$ where it halts because there were not enough wrong points. Note that for $\gamma=0.1$ for Bar Crawl dataset (figure~\ref{subfig:barcrawl_gamma}) we didn't converge due to its size.}
\end{figure}

\begin{figure}	
	\centering
	\begin{subfigure}[t]{0.475\textwidth}
		\centering
		\includegraphics[width=\textwidth]{experiments/SphericalGaussianByRho.png}
		\caption{Spherical Gaussian}
	\end{subfigure}
	\hfill
	\begin{subfigure}[t]{0.475\textwidth}
		\centering
		\includegraphics[width=\textwidth]{experiments/ConditionalGaussianByRho.png}
		\caption{Conditional Gaussian}
	\end{subfigure}
	\vskip\baselineskip
	\begin{subfigure}[t]{0.475\textwidth}
		\centering
		\includegraphics[width=\textwidth]{experiments/ProductDistributionByRho.png}
		\caption{\label{subfig:productdist_rho}Product Distribution}
	\end{subfigure}
	\hfill
	\begin{subfigure}[t]{0.475\textwidth}
		\centering
		\includegraphics[width=\textwidth]{experiments/DetectingHeavyDrinkingByRho.png}
		\caption{\label{subfig:barcrawl_rho}Bar Crawl: Detecting Heavy Drinking}
	\end{subfigure}
	\caption{\label{fig:results_by_rho} The distance of $\theta^t$ to $\theta_{opt}$ as a function of $t$ -- the iteration number, for $\gamma = 0.2$ and $\rho \in \{0.1, 0.3, 0.5, 0.7, 0.9 \}$. Each curve corresponds to a different $\rho$ value. In all experiments varying $\rho$ has a small effect on the halting time.} 
\end{figure}

\paragraph{Conclusions.}
Our experiments suggest that indeed our bound $T$ is a worst-case bound, where in all experiments we concluded in about $7-50$ times faster than the bound of Algorithm~\ref{alg:ngd-meb}.
This suggests that perhaps one would be better off if instead of partitioning the privacy budget equally across all $T$ iterations, they devise some sort of adaptive privacy budgeting. (E.g., using $\nicefrac {3\rho}{4}$ budget on the first $T/4$ iterations and then the remaining $\nicefrac \rho 4$ budget on the latter $\nicefrac {3T} 4$ iterations.) Such adaptive budgeting is simple when using zCDP, as it does not require ``privacy odometers''~\cite{RogersVRU16}.

}

\section{Discussion and Open Problems}
\label{sec:conclusion}
This work is the first to give a DP-fPATS for the MEB problem, in both the curator- and the local-model, and it leads to numerous open problems. The first is the question of improving the utility guarantee. Specifically, the number of points our algorithm may omit from $P$ has a dependency of $\tilde O(\nicefrac 1 \gamma)$ in the approximation factor, where this dependency follows from the fact that in each of our $T=\tilde O(\gamma^{-2})$ iterations. Thus finding either an iterative algorithm which makes $\ll T$ iterations or a variant of SVT that will allow the privacy budget to scale like $O(\log(T))$ will reduce this dependency to only ${\rm polylog}(\gamma^{-1})$. Alternatively, it is intriguing whether there exists a lower-bound for any zCDP PTAS of the MEB problem proving a polynomial dependency on $\gamma$. (The best we were able to prove is via packing argument~\cite{HardtT10, BunS16} using a grid of $O((\nicefrac 1 \gamma)^d)$ many points, leading to a $d\log(\nicefrac 1 \gamma)$ bound.)

A different open problem lies on the the application of this DP-MEB approximation to the task of DP-clustering, and in particular~--- on improving on the works of~\cite{HuangL18,ShechnerSS20,CohenKMST21} for ``stable'' $k$-median/means clustering. One can presumably combine our technique with the LSH-based approach used in~\cite{NissimStemmer18} to cover a subset of points lying close together, however~--- it is unclear to us what is the effect of using only some of each cluster's ``core'' on the approximated MEB we return and on the $k$-means/median cost. More importantly, it does not seem that for the $k$-means problem our MEB approximation yields a better cost than the simple baseline of DP-averaging each cluster's core (after first finding a $O(1)$-MEB approximation, as discussed in the introduction). But it is possible that our work can be a building block in a first PTAS for the $k$-center problem in low-dimensions, a setting in which the $k$-center problem has a non-private PTAS~\cite{Har-Peled11}.

\begin{ack}
O.S. is supported by the BIU Center for
Research in Applied Cryptography and Cyber Security in
conjunction with the Israel National Cyber Bureau in the
Prime Minister’s Office, and by ISF grant no. 2559/20. Both
authors thank the anonymous reviewers for terrific suggestions and advice on improving this paper.
\end{ack}
\bibliographystyle{plain}
\bibliography{bibliography,paper}

\cut{
\section*{Checklist}

\begin{enumerate}

\item For all authors...
\begin{enumerate}
  \item Do the main claims made in the abstract and introduction accurately reflect the paper's contributions and scope?
    \answerYes{}
  \item Did you describe the limitations of your work?
    \answerYes{See lines 103-105} 
  \item Did you discuss any potential negative societal impacts of your work?
    \answerNA{}
  \item Have you read the ethics review guidelines and ensured that your paper conforms to them?
    \answerYes{}
\end{enumerate}

\item If you are including theoretical results...
\begin{enumerate}
  \item Did you state the full set of assumptions of all theoretical results?
    \answerYes{}
        \item Did you include complete proofs of all theoretical results?
    \answerYes{See also Supplementary Material Sections~\ref{apx_sec:proofsDPAlg} and~\ref{apx_sec:proofsLDPAlg}}
\end{enumerate}

\item If you ran experiments...
\begin{enumerate}
  \item Did you include the code, data, and instructions needed to reproduce the main experimental results (either in the supplemental material or as a URL)?
    \answerYes{In the Supplementary Material}
  \item Did you specify all the training details (e.g., data splits, hyperparameters, how they were chosen)?
    \answerNA{}
        \item Did you report error bars (e.g., with respect to the random seed after running experiments multiple times)?
    \answerNA{}
        \item Did you include the total amount of compute and the type of resources used (e.g., type of GPUs, internal cluster, or cloud provider)?
    \answerYes{}
\end{enumerate}

\item If you are using existing assets (e.g., code, data, models) or curating/releasing new assets...
\begin{enumerate}
  \item If your work uses existing assets, did you cite the creators?
    \answerYes{}
  \item Did you mention the license of the assets?
    \answerYes{See under `Bar Crawl' in the experiment section~\ref{sec:experiments}.}
  \item Did you include any new assets either in the supplemental material or as a URL?
    \answerYes{As a URL, see References}
  \item Did you discuss whether and how consent was obtained from people whose data you're using/curating?
    \answerNA{}
  \item Did you discuss whether the data you are using/curating contains personally identifiable information or offensive content?
    \answerNA{}
\end{enumerate}

\item If you used crowdsourcing or conducted research with human subjects...
\begin{enumerate}
  \item Did you include the full text of instructions given to participants and screenshots, if applicable?
    \answerNA{}
  \item Did you describe any potential participant risks, with links to Institutional Review Board (IRB) approvals, if applicable?
    \answerNA{}
  \item Did you include the estimated hourly wage paid to participants and the total amount spent on participant compensation?
    \answerNA{}
\end{enumerate}

\end{enumerate}
}

\newpage

\appendix

\section{Finding an Initial Good Center}
\label{apx_sec:preliminary_algorithms}

In this section we give, for completeness, the $\rho$-zCDP version of the algorithms for approximating $P$'s optimal radius up to a constant factor and finding some $\theta_0$ which is sufficiently close to the center of $P$'s MEB. The algorithm itself is ridiculously simple, and has appeared before implicitly. We bring it here for two reasons: (a) completeness and (b) in its LDP-version, this algorithm's utility depends solely on $\sqrt n$. Thus, combining this algorithm with the Algorithm~\ref{alg:LDP-meb} of Section~\ref{sec:local_DP_MEB_fPTAS}, we obtain a LDP-fPTAS for the MEB problem who's utility depends on $\sqrt{n}$ rather than the $n^{0.67}$-bound of \cite{NissimStemmer18} (at the expense of worse dependency on other parameters). This gives a clear improvement on previous algorithms for approximating the MEB problem when $n\to \infty$. Our algorithm requires a starting point $\theta_0$ which is $R_{\max}$ away from all points in $P$ (namely, $P\subset B(\theta_0, R_{\max})$, and a lower bound $r_{\min}$ on $r_{opt}$; and its overall utility bounds depends on $\log(R_{\max}/r_{\min})$. In a standard setting, where $P\subset [-B,B]^d$ and where all points lie on some grid ${\cal G}^d$ whose step-size is $\tau$, we can set $\theta_0$ as the origin and set $R_{\max}= B\sqrt d$ and $r_{\min}=\tau/2$, resulting in $O(\log(\nicefrac{Bd}{\tau}))$-dependency. In the specific case where $r_{opt}=0$ and all datapoints in $P$ lie on the exact same grid point we can just return the closest grid point to the resulting $\theta$ once it get to a radius of $r=r_{min}=\tau/2$.

\begin{algorithm}[H]
    \caption{Noisy Average and Radius (GoodCenter)}
    \label{alg:good-center}
     \hspace*{\algorithmicindent}\textbf{Input:} a set of $n$ points $P$ and parameters $\theta_0,R_{\max}$ and $r_{\min}$, such that $P\subset B(\theta_0, R_{\max})$ and $r_{opt}\geq r_{\min}$. Failure parameter $\beta \in (0,1)$, privacy parameter $\rho$.
    \begin{algorithmic}[1]
     \State Set $T\gets \lceil\log_2(R_{\max}/r_{\min})\rceil+1$, $X\gets \sqrt{\frac{2T\ln(\nicefrac{4T}{\beta})}{\rho}}$
     \State Set $\sigma_{count}^2 \leftarrow \frac{T}{\rho}$, $\sigma_{sum}^2 \leftarrow \frac{T}{\rho}$.
     \State Init $P^0\gets P$, $\theta^0\gets \theta_0$, $n_{cur}\gets n$ and $r_{cur}\gets R_{\max}$.
     \For {($t=0,1,2,..., T-1$)}
     \State $P^t \gets P^t \cap B(\theta^t,r_{cur})$.
     \State $\Delta_{sum} \sim \mathcal{N}(0, 4r_{cur}^2\sigma_{sum}^2I_d)$
     \State $\tilde \mu^t \gets ({\sum_{x\in P^t}x + \Delta_{sum}})/{n_{cur}}$
     \State $\Delta_{count}\gets {\cal N}(0, \sigma^2_{count})$
     \If {($|P^t\setminus B(\tilde\mu^t, \frac 1 2 r_{cur})|+\Delta_{count} \geq  X$)}
            {\Return $B(\theta^t, r_{cur})$}\EndIf
    \State Update: $r_{cur}\gets \frac 1 2 r_{cur}$, $n_{cur}\gets n_{cur}-2X$, $\theta^{t+1}\gets \tilde \mu^t$. 
    \EndFor
     \State \Return $B(\theta^T, r_{cur})$
    \end{algorithmic}
\end{algorithm}

\begin{theorem}
    \label{thm:AlgIterativeCentering_DP}
    Algorithm~\ref{alg:good-center} is $\rho$-zCDP.
\end{theorem}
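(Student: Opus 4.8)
The plan is to mirror the privacy accounting of Lemma~\ref{lem:iterative_alg_is_dp}: regard Algorithm~\ref{alg:good-center} as an \emph{adaptive composition} of $2T$ Gaussian mechanisms, one pair per iteration, and charge each mechanism $\tfrac{\rho}{2T}$-zCDP. The only data-dependent quantities the algorithm ever releases are, in iteration $t$, (a) the noisy sum $\sum_{x\in P^t}x+\Delta_{sum}$, from which $\tilde\mu^t$ (hence $\theta^{t+1}$) is formed, and (b) the noisy count $|P^t\setminus B(\tilde\mu^t,\tfrac12 r_{cur})|+\Delta_{count}$ used in the halting test. Everything else the algorithm manipulates---the public schedule $r_{cur},n_{cur},X$; the centers $\theta^t=\tilde\mu^{t-1}$; the filtered set $P^t=P\cap B(\theta^0,r_0)\cap\cdots\cap B(\theta^t,r_{cur})$; and the ball it ultimately outputs---is a deterministic function of the public parameters together with the previously released noisy answers, so by post-processing it incurs no additional cost. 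Thus it suffices to bound, for each of the $2T$ releases and conditioned on all earlier releases being fixed, the zCDP parameter of that single Gaussian mechanism, and then invoke sequential (adaptive) composition of zCDP.

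For the two per-iteration sensitivities: changing a single input datum changes the filtered set $P^t$ by at most one element (once the earlier releases are fixed, $P^t$ is a deterministic function of the input with this ``Hamming'' sensitivity), so the counting query has $L_2$-sensitivity $\le 1$; since $\sigma_{count}^2=\tfrac{T}{\rho}=\tfrac{1^2}{2(\rho/2T)}$, the Gaussian mechanism makes release (b) $\tfrac{\rho}{2T}$-zCDP. For the summation query, every retained point lies in $B(\theta^t,r_{cur})$, a set of diameter $2r_{cur}$, so replacing one retained point by another changes $\sum_{x\in P^t}x$ by at most $2r_{cur}$; as the injected noise has covariance $4r_{cur}^2\sigma_{sum}^2 I_d=\tfrac{(2r_{cur})^2}{2(\rho/2T)}I_d$ and $r_{cur}$ is public (so scaling the noise to it is legitimate), release (a) is also $\tfrac{\rho}{2T}$-zCDP.

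Composing: within one iteration the two releases compose to $\tfrac{\rho}{2T}+\tfrac{\rho}{2T}=\tfrac{\rho}{T}$-zCDP, and adaptive composition of zCDP across the $T$ iterations yields $T\cdot\tfrac{\rho}{T}=\rho$-zCDP for the entire transcript; since the ball returned by Algorithm~\ref{alg:good-center} is a post-processing of that transcript, the algorithm is $\rho$-zCDP, exactly as in Corollary~\ref{lem:iterative_alg_is_dp}'s style of argument.

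The step needing the most care is the interplay of \emph{adaptivity} with the two sensitivity claims. Because $P^t$ and $\theta^t$ are themselves data-dependent (through the earlier $\tilde\mu$'s), the argument must be phrased as a bound on the R\'enyi divergence of the \emph{whole} output sequence, obtained by summing the worst-case per-step divergences over all fixings of the earlier releases---which is precisely what sequential zCDP composition delivers, and is the only reason the per-step bounds (stated for a \emph{fixed} $P^t$ and $\theta^t$) are enough. A secondary delicate point is the summation-query sensitivity in the edge case where the modified datum survives all filters under one dataset but is dropped under its neighbor: to keep this contribution at $O(r_{cur})$ rather than $O(R_{\max})$ one should bound it through $\|x-\theta^t\|\le r_{cur}$ for each retained $x$ (equivalently, view the query as the recentered sum $\sum_{x\in P^t}(x-\theta^t)$ with $\theta^t$ added back by post-processing), after which the Gaussian-mechanism bookkeeping above goes through verbatim.
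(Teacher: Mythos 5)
Your proof takes essentially the same route as the paper's (which is itself only a two-sentence sketch): decompose Algorithm~\ref{alg:good-center} into $2T$ Gaussian-mechanism releases, each budgeted at $\tfrac{\rho}{2T}$-zCDP, bound the counting query's $L_2$-sensitivity by $1$ and the summation query's by $2r_{cur}$, and conclude by adaptive composition plus post-processing. Your additional care about adaptivity and about recentering the sum as $\sum_{x\in P^t}(x-\theta^t)$ to keep the sensitivity at $O(r_{cur})$ only makes rigorous a point the paper glosses over, so the argument is correct and matches the intended proof.
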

\begin{proof}
    The proof follows immediately from the fact that the $L_2$-global sensitivity of a count query is 1, and that the $L_2$-global sensitivity of a sum of datapoints in a ball of radius $r_{cur}$ is at most $2r_{cur}$. The rest of the proof relies on the composition of $2T$ queries, each answered with a ``budget'' of $\frac{\rho}{2T}$-zCDP.
\end{proof}

\begin{theorem}
    \label{thm:good_center_alg_utility}
    W.p. $\geq 1-\beta$, given a set of points $P$ of size $n$ where $n\geq \max\{  16T\sqrt{\frac{2T\ln(\nicefrac{4T}{\beta})}{\rho}},  16\sqrt{\frac{T}{\rho}}(\sqrt{d}+{\sqrt{2\ln(\nicefrac{4T}{\beta})}})\}$, Algorithm~\ref{alg:good-center} returns a ball $B(\theta^*, r^*)$ where (i) the set $P' = P\cap B(\theta^*,r^*)$ contains at least $n-\sqrt{\frac{8T^3\ln(\nicefrac{4T}{\beta})}{\rho}}$, and (ii) denoting $B(\theta(P'), r_{opt}(P'))$ as the MEB of $P'$, we have that $r^*\leq 6r_{opt}$.
\end{theorem}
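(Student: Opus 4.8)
Here is a plan for proving the utility theorem for \hyperref[alg:good-center]{Algorithm~\ref{alg:good-center}} (GoodCenter).

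\medskip
\noindent\textbf{Step 1: isolate a good event.} The plan is to condition on a single event $\mathcal{E}$ on which all $2T$ Gaussian draws of the run lie within their standard concentration radii, and then argue deterministically. Let $\mathcal{E}$ be the event that in every iteration $|\Delta_{count}| \le X$ (note $X = \sigma_{count}\sqrt{2\ln(4T/\beta)}$ since $\sigma_{count}^2 = T/\rho$) and $\|\Delta_{sum}\| \le 2r_{cur}\sigma_{sum}\bigl(\sqrt d + \sqrt{2\ln(4T/\beta)}\bigr)$. By the Gaussian tail bound and the $\chi^2_d$ tail bound recalled in Section~\ref{sec:preliminaries}, each of these $2T$ sub-events fails with probability at most $\beta/(2T)$, so $\Pr[\mathcal{E}] \ge 1-\beta$ by a union bound. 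Everything below is conditioned on $\mathcal{E}$.

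\medskip
\noindent\textbf{Step 2: bookkeeping and part (i).} First I would record two invariants. Since $n \ge 16TX$, the decrements keep $n_{cur} = n - 2Xt \ge 7n/8 > 0$ throughout. And on $\mathcal{E}$, a non-halting iteration has true count $|P^t \setminus B(\tilde\mu^t, \tfrac12 r_{cur})| \le (\text{noisy count}) + |\Delta_{count}| < X + X = 2X$, so each iteration deletes fewer than $2X$ points from the running set; over the at most $T$ iterations this gives $|P'| \ge |P \cap B(\theta^*,r^*)| \ge n - 2XT = n - \sqrt{8T^3\ln(4T/\beta)/\rho}$, which is exactly part (i). (In the no-halt case one also uses that not halting at iteration $T-1$ already leaves all but $<2X$ of $P^{T-1}$ inside the final ball $B(\theta^T, r_{cur})$.)

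\medskip
\noindent\textbf{Step 3: accuracy of the noisy mean.} Next I would show that, using $n \ge 16\sqrt{T/\rho}\bigl(\sqrt d + \sqrt{2\ln(4T/\beta)}\bigr)$ together with $n_{cur} \ge 7n/8$ and $\sigma_{sum}^2 = T/\rho$, on $\mathcal{E}$ we have $\|\Delta_{sum}\|/n_{cur} \le \tfrac17 r_{cur}$; since all of $P^t$ — hence its true mean $\mu^t$ — lies in $B(\theta^t, r_{cur})$, this yields $\|\tilde\mu^t - \mu^t\| \le \tfrac17 r_{cur}$. I expect this to be the delicate point: one must also check that the running denominator $n_{cur}$ used to form $\tilde\mu^t$ differs from the true size of $P^t$ only by a $1 \pm O(X/n)$ factor (which follows from the bookkeeping of Step~2, after recentering the sum at $\theta^t$), so that this mismatch adds only an $O(r_{cur}/T)$ term, i.e.\ the perturbation of the mean stays a small constant fraction of the current radius simultaneously across all $T$ geometrically shrinking scales.

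\medskip
\noindent\textbf{Step 4: shrink-or-halt dichotomy and part (ii).} Finally I would prove: \emph{if at iteration $t$ the current set satisfies $r_{cur} \ge \tfrac{28}{5}\,r_{opt}(P^t)$, then the algorithm does not halt.} Indeed, by convexity of the MEB the true mean $\mu^t$ of $P^t$ lies within $r_{opt}(P^t)$ of the MEB-center $\theta_{opt}(P^t)$, so every $x \in P^t$ obeys $\|x - \tilde\mu^t\| \le 2r_{opt}(P^t) + \|\tilde\mu^t - \mu^t\| \le 2r_{opt}(P^t) + \tfrac17 r_{cur} \le \tfrac12 r_{cur}$; hence $P^t \setminus B(\tilde\mu^t, \tfrac12 r_{cur}) = \emptyset$, the noisy count equals $\Delta_{count} < X$, and the algorithm continues. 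Contrapositively, if the algorithm halts at some iteration $t^*$ then $r^* = r_{cur} < \tfrac{28}{5}\,r_{opt}(P^{t^*}) \le \tfrac{28}{5}\,r_{opt}(P) < 6\,r_{opt}$ (and since $P^{t^*} \subseteq P'$ this also gives $r^* < 6\,r_{opt}(P')$ whenever $r_{opt}(P') > 0$); while if it never halts, then by the choice $T = \lceil\log_2(R_{\max}/r_{\min})\rceil + 1$ the returned radius is $r^* \le R_{\max}/2^T \le r_{\min}/2 \le r_{opt}/2 < 6\,r_{opt}$. Either way $r^* \le 6\,r_{opt}$, establishing (ii); the degenerate case $r_{opt}(P') = 0$ with all retained points on a single grid point is exactly the situation handled separately in the surrounding text.
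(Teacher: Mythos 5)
Your overall architecture (single good event for all $2T$ Gaussian draws, the ``each non-halting iteration deletes $<2X$ points'' bookkeeping for part (i), and the shrink-or-halt dichotomy for part (ii)) matches the paper's proof. Your treatment of the never-halts case and of the $r_{opt}(P')=0$ degeneracy is in fact more explicit than the paper's.

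There is, however, a genuine quantitative gap in Step 3, and it propagates into the constant of part (ii). The denominator mismatch is \emph{not} an ``$O(r_{cur}/T)$ term.'' At iteration $t$ the algorithm divides by $n_{cur}=n-2Xt$ while the true set has size $|P^t|\in[n-2Xt,\,n]$, so the cumulative relative mismatch is $\frac{|P^t|-n_{cur}}{n_{cur}}\le\frac{2XT}{14XT}=\frac17$ --- a constant, not $O(1/T)$. If you recenter the sum at $\theta^t$ (as you propose), the resulting error term is $\|\mu^t-\theta^t\|\cdot\frac{|P^t|-n_{cur}}{n_{cur}}$, which can be as large as $\frac{r_{cur}}{7}$; added to the noise term this gives $\|\tilde\mu^t-\mu^t\|\le\frac{2r_{cur}}{7}$ rather than $\frac{r_{cur}}{7}$. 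Your dichotomy then requires $2r_{opt}(P^t)+\frac{2}{7}r_{cur}\le\frac12 r_{cur}$, i.e.\ $r_{cur}\ge\frac{28}{3}r_{opt}(P^t)\approx 9.33\,r_{opt}(P^t)$, and the halting guarantee degrades to $r^*<\frac{28}{3}r_{opt}$, which does not establish the claimed factor $6$. The paper avoids this by recentering at the MEB center $\theta_t$ of $P^t$ instead of at $\theta^t$: since $\|\mu^t-\theta_t\|\le r_{opt}(P^t)$ by convexity, the mismatch term becomes $\le\frac{r_{opt}(P^t)}{7}$ --- proportional to the (small) optimal radius rather than to the (large) current radius --- yielding $\|\tilde\mu^t-\theta_t\|\le\frac{r_{cur}+8r_{opt}(P^t)}{7}$ and hence the threshold $r_{cur}\ge 6r_{opt}(P^t)$ exactly. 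With that one substitution your argument goes through.
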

\begin{proof}
    Let ${\cal E}$ be the event where for any of the $\leq T$ draws of the $\Delta_{sum}$ and $\Delta_{count}$ it holds that
    \[  |\Delta_{count}| \leq \sqrt{\frac{2T\ln(\nicefrac{4T}{\beta})}{\rho}} \qquad \text{and} \qquad \|\Delta_{sum}\| \leq 2r_{cur}\sqrt{\frac{T}{\rho}}(\sqrt{d}+{\sqrt{2\ln(\nicefrac{4T}{\beta})}}) \]
    where again, standard union bound and Gaussian / $\chi^2$-distribution concentration bounds give that $\Pr[\overline{\cal E}]\leq \beta$. So we continue the proof under the assumption that ${\cal E}$ holds.
    
    In this case, in any iteration it must hold that $|P\setminus B(\mu^t, \frac 1 2 r_{cur})|\leq 2X = \sqrt{\frac{8T\ln(\nicefrac{4T}{\beta})}{\rho}}$. It follows that all in all we remove in the process of Algorithm~\ref{alg:good-center} at most $2XT$ points, and since $n\geq 16XT$ we have that in any iteration $t$ it always holds that $n\geq |P^t|\geq n-2Xt = n_{cur}\geq \frac{7n}{8}\geq 14XT$. Denoting in any iteration $t$ the true mean of the points (remaining) in $P^t$ as $\mu_t = \frac{1}{|P^t|}\sum_{x\in P^t} x$, and the center of the MED of $P^t$ as $\theta_{t}$, it follows that
    \begin{align*}
        \|\tilde\mu^t-\mu^t\| &= \|\tilde\mu^t-\theta_t-(\mu^t-\theta_t)\| =\left\| \frac{\Delta_{sum}+ \sum_{x\in P^t}(x-\theta_t)}{n_{cur}} - \frac{\sum_{x\in P^t}(x-\theta_t)}{|P^t|} \right\| 
        \cr &\leq \left\| \frac{\Delta_{sum}}{n_{cur}}\right\| + \left\| \frac{\left(\sum_{x\in P^t} (x-\theta_t)\right)(|P^t|-n_{cur})}{|P^t|n_{cur}}     \right\| \leq  \frac{8\|\Delta_{sum}\|}{7n}+\|\mu^t-\theta_t\| \frac{2XT}{n_{cur}}
       \cr &\leq \frac {8\cdot 2r_{cur}\sqrt{\frac{T}{\rho}}(\sqrt{d}+{\sqrt{2\ln(\nicefrac{4T}{\beta})}})}{7n}
        + \frac{r_{opt}(P^t)}{7}\leq \frac{r_{cur}+r_{opt}(P^t)}{7}
    \end{align*}
    Since we assume $n\geq16\sqrt{\frac{T}{\rho}}(\sqrt{d}+{\sqrt{2\ln(\nicefrac{4T}{\beta})}})$. Moreover, since $\|\mu^t-\theta_t\|\leq r_{opt}(P^t)$ it follows that $\|\tilde\mu^t - \theta_t\|\leq \frac{r_{cur}+8r_{opt}(P^t)}{7}$. Now, as long as $r_{cur}\geq 6r_{opt}(P^t)$ we have that
    \begin{align*}
        \frac {r_{cur}}{2} \geq \frac{r_{cur}}{7} + \frac{5r_{cur}}{14} \geq \frac{r_{cur}}{7} + \frac{30r_{opt}(P^t)}{14} \geq r_{opt}(P^t)+\frac{r_{cur}+8r_{opt}(P^t)}{7} \geq r_{opt}(P^t)+\|\tilde \mu^t - \theta_t\|
    \end{align*}
    thus $B({\theta_t, r_{opt}(P^t)})\subset B(\tilde\mu^t, \frac{r_{cur}}{2})$ which implies that $|P^t\setminus B(\mu^t, \frac 1 2 r_{cur})|=0$, and so under ${\cal E}$ we continue to the next iteration. 
    
    And so, when we halt it must hold that $r_{cur}$ (which is the $r^*$ we return) must satisfy that $r_{cur}<6r_{opt}(P^t)$. 
\end{proof}
\begin{corollary}
    \label{cor:good_center}
    Algorithm~\ref{alg:good-center} is a $\rho$-zCDP algorithm that, given $n$ points on a grid ${\cal G}\subset [-B,B]^d$ of side-step $\tau$ where $n=\Omega(\sqrt{\frac{\log(\nicefrac{Bd}\tau)}{\rho}}(\sqrt d + \sqrt{\log(\nicefrac{Bd}{\tau\beta})}))$ returns w.p. $\geq 1-\beta$ a ball $B(\theta^*,r^*)$ where for $P'=P\setminus B(\theta^*,r^*)$ it holds that both $n-|P'|=O(\frac{\log(\nicefrac{Bd}\tau)}{\sqrt\rho}\sqrt{\log(\nicefrac{Bd}{\tau\beta})}))$ and that w.r.t to $B(\theta_{opt},r_{opt})$ which is the true MEB of $P'$ we have that $\|\theta^*-\theta_{opt}\|\leq 6r_{opt}(P')$.
\end{corollary}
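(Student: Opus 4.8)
The plan is to read the corollary off Theorems~\ref{thm:AlgIterativeCentering_DP} and~\ref{thm:good_center_alg_utility} by instantiating the three free inputs of Algorithm~\ref{alg:good-center} with values tailored to the grid setting. Concretely I would take $\theta_0=\bar 0$, $R_{\max}=B\sqrt d$ and $r_{\min}=\tau/2$. The choice of $R_{\max}$ is legitimate since every $x\in[-B,B]^d$ satisfies $\|x\|\leq B\sqrt d$, so $P\subseteq B(\bar 0,R_{\max})$; and $r_{\min}=\tau/2$ is a valid lower bound on $r_{opt}(P)$ because two distinct grid points differ in some coordinate by at least $\tau$ and hence lie at distance $\geq\tau$, so the MEB of $P$ has radius $\geq\tau/2$ --- this covers the non-degenerate case, while the degenerate case in which $P$ is a single grid point is handled separately by rounding the returned center to the nearest grid point (as noted in the text preceding Algorithm~\ref{alg:good-center}). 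Privacy is then immediate: Theorem~\ref{thm:AlgIterativeCentering_DP} states that Algorithm~\ref{alg:good-center} is $\rho$-zCDP for \emph{every} admissible parameter choice.

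For the utility bounds, the first step is to compute $T=\lceil\log_2(R_{\max}/r_{\min})\rceil+1=\lceil\log_2(2B\sqrt d/\tau)\rceil+1=\Theta(\log(\nicefrac{Bd}{\tau}))$. Substituting this into the size requirement of Theorem~\ref{thm:good_center_alg_utility}, i.e.\ $n\geq\max\{16T\sqrt{2T\ln(\nicefrac{4T}{\beta})/\rho},\,16\sqrt{T/\rho}(\sqrt d+\sqrt{2\ln(\nicefrac{4T}{\beta})})\}$, recovers the hypothesis $n=\Omega\big(\sqrt{\tfrac{\log(\nicefrac{Bd}{\tau})}{\rho}}\,(\sqrt d+\sqrt{\log(\nicefrac{Bd}{\tau\beta})})\big)$ stated in the corollary (which suppresses the precise $\mathrm{polylog}$ dependence). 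Under that hypothesis, Theorem~\ref{thm:good_center_alg_utility} yields, w.p.\ $\geq 1-\beta$, a ball $B(\theta^*,r^*)$ such that for $P'=P\cap B(\theta^*,r^*)$ we have (i) $n-|P'|\leq\sqrt{8T^3\ln(\nicefrac{4T}{\beta})/\rho}$, which after substituting $T=\Theta(\log(\nicefrac{Bd}{\tau}))$ becomes $O\big(\tfrac{\log(\nicefrac{Bd}{\tau})}{\sqrt\rho}\sqrt{\log(\nicefrac{Bd}{\tau\beta})}\big)$, and (ii) $r^*\leq 6\,r_{opt}(P')$.

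It then remains to strengthen (ii) from a bound on $r^*$ to the center bound $\|\theta^*-\theta_{opt}(P')\|\leq 6\,r_{opt}(P')$ asserted by the corollary. For this I would invoke the standard fact that the center of the MEB of a finite point set lies in the convex hull of that set --- otherwise projecting it onto the convex hull would strictly decrease the maximal distance to all points, contradicting minimality. Since $P'\subseteq B(\theta^*,r^*)$ and balls are convex, $\mathrm{conv}(P')\subseteq B(\theta^*,r^*)$, hence $\theta_{opt}(P')\in B(\theta^*,r^*)$ and therefore $\|\theta^*-\theta_{opt}(P')\|\leq r^*\leq 6\,r_{opt}(P')$, as required.

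The step I expect to require the most care is purely the bookkeeping in the second paragraph: checking that the $\max$-of-two-terms size condition and the $\sqrt{T^3\ln(\nicefrac{T}{\beta})}$ omission bound of Theorem~\ref{thm:good_center_alg_utility} collapse, under $T=\Theta(\log(\nicefrac{Bd}{\tau}))$, into the single cleaner $\tilde O$-expressions of the corollary (which hide constants and $\mathrm{polyloglog}$ slack), together with a clean statement of the degenerate $r_{opt}=0$ fallback. The only mathematically non-routine ingredient is the convex-hull containment of the MEB center used in the last paragraph, and that is classical.
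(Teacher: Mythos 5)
Your proposal is correct, and it supplies what the paper leaves implicit: the paper states Corollary~\ref{cor:good_center} without proof, treating it as an immediate consequence of Theorems~\ref{thm:AlgIterativeCentering_DP} and~\ref{thm:good_center_alg_utility} under the grid parametrization that the paper itself suggests (take $\theta_0 = \bar 0$, $R_{\max}=B\sqrt d$, $r_{\min}=\tau/2$, giving $T=\Theta(\log(\nicefrac{Bd}{\tau}))$). You also rightly read $P'$ as $P\cap B(\theta^*,r^*)$ rather than the set-difference printed in the corollary, since only then does $n-|P'|$ count the omitted points as in Theorem~\ref{thm:good_center_alg_utility}. The genuinely non-trivial step you add --- upgrading Theorem~\ref{thm:good_center_alg_utility}'s radius bound $r^*\leq 6\,r_{opt}(P')$ to the center bound $\|\theta^*-\theta_{opt}(P')\|\leq 6\,r_{opt}(P')$ via ``the MEB center lies in $\mathrm{conv}(P')\subseteq B(\theta^*,r^*)$'' --- is exactly what is needed, and the convexity argument is sound. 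One caveat worth being aware of but not a flaw in your reasoning: the ``collapse'' of the $\max$-condition and of the $\sqrt{T^3\ln(T/\beta)/\rho}$ omission bound into the corollary's cleaner expressions is itself a bit loose on the paper's side (the $T^{3/2}$ factor in the first branch of the $\max$ and in the omission bound is not literally dominated by the corollary's $\sqrt T(\sqrt d + \sqrt{\log(\nicefrac{Bd}{\tau\beta})})$ and $T\sqrt{\log(\nicefrac{Bd}{\tau\beta})}$ when $d$ is small relative to $T$), so that part really is $\tilde O$-level bookkeeping slack rather than an exact match, as you anticipate.
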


\subsection{A Local-DP Version of Finding an Initial Good Center}
\label{apx_subsec:LDP_good_center}

\begin{algorithm}[H]
    \caption{LDP Noisy Average and Radius (LDP-GoodCenter)}
    \label{alg:LDP-good-center}
     \hspace*{\algorithmicindent}\textbf{Input:} a set of $n$ points $P$ and some parameter $R_{\max},\theta_0$ and $r_{\min}$, such that $P\subset B(\theta_0, R_{\max})$ and $r_{opt}\geq r_{\min}$. Failure parameter $\beta \in (0,1)$, privacy parameter $\rho$.
    \begin{algorithmic}[1]
     \State Set $T\gets \lceil\log_2(R_{\max}/r_{\min})\rceil+1$, $X\gets \sqrt{\frac{2nT\ln(\nicefrac{4T}{\beta})}{\rho}}$
     \State $\sigma_{count}^2 \leftarrow \frac{T}{\rho}$, $\sigma_{sum}^2 \leftarrow \frac{T}{\rho}$.
     \State Init $\theta^0\gets \theta_0$, and $r_{cur}\gets R_{\max}$.
     \For {($t=0,1,2,..., T-1$)}
     \State Denote $\Pi^t$ as the projection onto $B(\theta^t,r_{cur})$.
     \For {{\bf each} $x\in P$}
     \State Send $Y_x \sim \mathcal{N}(\Pi^t(x), 4r_{cur}^2\sigma_{sum}^2I_d)$
     \EndFor
     \State $\tilde \mu^t \gets \frac 1 n \sum_x Y_x$
     \For {{\bf each} $x\in P$}
     \If { ($x\notin B(\tilde \mu^t, \frac 1 2 r_{cur})$)  } 
        {\State Send $Z_x \sim \mathcal{N}(1, \sigma_{count}^2)$
        \Else\  Send $Z_x \sim \mathcal{N}(0, \sigma_{count}^2)$}
     \EndIf\EndFor
     \If {($\sum_x Z_x \geq  X$)}
            {\Return $B(\theta^t, r_{cur})$}\EndIf
    \State Update: $r_{cur}\gets \frac 1 2 r_{cur}$, $\theta^{t+1}\gets \tilde \mu^t$. 
    \EndFor
     \State \Return $B(\theta^T, r_{cur})$
    \end{algorithmic}
\end{algorithm}

\begin{theorem}
    \label{thm:LDP_good_center_alg_utility}
    Algorithm~\ref{alg:LDP-good-center} is a LDP algorithm in which each user maintains $\rho$-zCDP. Forthermore, w.p. $\geq 1-\beta$, given a set of point $P$ of size $n$ where $n\geq \max\{  16T\sqrt{\frac{2nT\ln(\nicefrac{4T}{\beta})}{\rho}},  16\sqrt{\frac{nT}{\rho}}(\sqrt{d}+{\sqrt{2\ln(\nicefrac{4T}{\beta})}})\}$, Algorithm~\ref{alg:LDP-good-center} returns a ball $B(\theta^*, r^*)$ where the set $P' = \{\Pi_{B(\theta^*,r^*)}(x):~ x\in P\}$ contains no more than $2T\sqrt{\frac{2T\ln(\nicefrac{4T}{\beta})}{\rho}}$ points for which $x\neq \Pi_{B(\theta^*,r^*)}(x)$; and denoting $B(\theta(P'), r_{opt}(P'))$ as the MEB of $P'$, it holds that  $\|\theta^*-\theta(P')\|\leq 8r*$.
\end{theorem}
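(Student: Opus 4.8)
The plan is to prove the two halves — privacy and utility — as local-model analogues of Theorems~\ref{thm:AlgIterativeCentering_DP} and~\ref{thm:good_center_alg_utility} respectively. For privacy I would argue that each individual user satisfies $\rho$-zCDP, which makes the protocol local since every user randomizes her own messages. In each of the $\le T$ rounds user $x$ releases two Gaussian-perturbed functions of her datum: the vector $Y_x\sim{\cal N}(\Pi^t(x),4r_{cur}^2\sigma_{sum}^2 I_d)$ and the scalar $Z_x\sim{\cal N}(b_x,\sigma_{count}^2)$ where $b_x\in\{0,1\}$ indicates $x\notin B(\cdot,\cdot)$. Swapping $x$ moves $\Pi^t(x)$ by at most $2r_{cur}$ (both projections lie in $B(\theta^t,r_{cur})$) and $b_x$ by at most $1$, so with $\sigma_{sum}^2=\sigma_{count}^2=T/\rho$ the Gaussian-mechanism bound makes each release $\tfrac{\rho}{2T}$-zCDP, and composing the $2T$ releases gives $\rho$-zCDP per user, exactly as in Theorem~\ref{thm:AlgIterativeCentering_DP}.

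For utility I would condition on the good event ${\cal E}$ that over all $\le T$ rounds both \emph{aggregated} noises concentrate: the count-noise $\sum_x(Z_x-b_x)\sim{\cal N}(0,n\sigma_{count}^2)$ has magnitude $\le X$, and the mean-noise $\tfrac1n\sum_x(Y_x-\Pi^t(x))\sim{\cal N}(0,\tfrac{4r_{cur}^2\sigma_{sum}^2}{n}I_d)$ has norm $\le\tfrac{2r_{cur}}{\sqrt n}\sqrt{T/\rho}(\sqrt d+\sqrt{2\ln(4T/\beta)})$; standard Gaussian and $\chi_d^2$ tail bounds with a union bound over $\le 2T$ events give $\Pr[\overline{\cal E}]\le\beta$, so from here I would work under ${\cal E}$. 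Writing $\hat P^t=\{\Pi^t(x):x\in P\}\subseteq B(\theta^t,r_{cur})$, the estimate $\tilde\mu^t$ is exactly the mean of $\hat P^t$ plus the aggregated mean-noise, and the hypothesis $n\ge 16\sqrt{nT/\rho}(\sqrt d+\sqrt{2\ln(4T/\beta)})$ forces that noise to be $\le r_{cur}/7$. Since the mean of $\hat P^t$ lies within $r_{opt}(\hat P^t)$ of its MEB center $\theta(\hat P^t)$ (it is in the convex hull), I would reproduce the displayed triangle-inequality chain from the proof of Theorem~\ref{thm:good_center_alg_utility} to get $\|\tilde\mu^t-\theta(\hat P^t)\|\le\tfrac{8r_{opt}(\hat P^t)+r_{cur}}{7}$, so that whenever $r_{cur}\ge 6r_{opt}(\hat P^t)$ one has $\hat P^t\subseteq B(\theta(\hat P^t),r_{opt}(\hat P^t))\subseteq B(\tilde\mu^t,\tfrac12 r_{cur})$.

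Two consequences then close the argument. First, for the omitted-points bound: in any round where the algorithm does not return, $\sum_x Z_x<X$, so under ${\cal E}$ the true number of $x$ with $x\notin B(\tilde\mu^t,\tfrac12 r_{cur})=B(\theta^{t+1},r_{cur}/2)$ is at most $2X$; the returned ball is one of these ``next balls'' (or the ball checked in the final round), hence at most the stated number of $x\in P$ satisfy $x\neq\Pi_{B(\theta^*,r^*)}(x)$. Second, for the radius bound: by the contrapositive of the last displayed inclusion, as long as $r_{cur}$ exceeds $6\,r_{opt}$ of the current $\hat P^t$ the count query sees essentially $0$ and the algorithm continues, so at termination $r^*<6\,r_{opt}(P')$ where $P'=\hat P^{t^\star}$ is precisely the projection of $P$ onto the returned ball; since $P'\subseteq B(\theta^*,r^*)$ and the MEB center of $P'$ lies in $\mathrm{conv}(P')\subseteq B(\theta^*,r^*)$, this gives $\|\theta^*-\theta(P')\|\le r^*<6\,r_{opt}(P')\le 8\,r_{opt}(P')$ (a fortiori $\le 8r^*$).

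The hard part, compared with the curator proof, is that the local algorithm never discards points — it re-projects all $n$ of them each round — so the ``relevant set'' is the moving set $\hat P^t$, and the $\le 2X$ points pinned each round to the sphere $\partial B(\theta^t,r_{cur})$ can both inflate $r_{opt}(\hat P^t)$ and perturb the count query used in the stopping test. The step I expect to require real care is verifying that the stopping test still fires only once $r_{cur}$ has shrunk to within a constant factor of $r_{opt}(P')$, and this is exactly where the second hypothesis $n\ge 16T\sqrt{2nT\ln(4T/\beta)/\rho}$ is used: it guarantees the cumulative count $2XT$ of boundary-pinned points stays below $n/8$, hence negligible relative to the bulk that drives both $\mathrm{mean}(\hat P^t)$ and the count. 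Everything else is the same bookkeeping as in Theorem~\ref{thm:good_center_alg_utility}, with each $\Delta_{sum},\Delta_{count}$ replaced by its $\sqrt n$-inflated aggregate.
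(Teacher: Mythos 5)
Your proposal matches the paper's (essentially unwritten) proof: the paper merely declares the result ``completely analogous'' to Theorems~\ref{thm:AlgIterativeCentering_DP} and~\ref{thm:good_center_alg_utility} via the two aggregated-Gaussian facts $\sum_x Y_x \sim \mathcal{N}\big(\sum_x \Pi^t(x),\,4nr_{cur}^2\sigma_{sum}^2 I_d\big)$ and $\sum_x Z_x \sim \mathcal{N}\big(|\{x: x\notin B(\tilde\mu^t, r_{cur}/2)\}|,\,n\sigma_{count}^2\big)$, and your per-user privacy accounting together with the $\sqrt{n}$-inflated concentration bounds is exactly that analogy carried out. The subtlety you flag at the end (outliers are re-projected each round rather than discarded, so they can perturb the stopping test) is genuine but is not addressed by the paper either, and it does not affect the two claims actually asserted in the theorem statement.
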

The proof of Theorem~\ref{thm:LDP_good_center_alg_utility} is completely analogous to the proof of Theorems~\ref{thm:AlgIterativeCentering_DP} and~\ref{thm:good_center_alg_utility} using the fact that in each iteration $t$ of the algorithm
\begin{align*}
    &\sum_{x}Y_x \sim {\cal N}\left( \sum_x \Pi^t(x),~~4nr_{cur}^2\sigma^2_{sum}I_d\right)  
    \cr&\sum_{x}Z_x \sim {\cal N}\left( |\{x\in P:~ x\notin B(\tilde\mu^t, r_{cur}/2)\}|, ~~n\sigma_{count}^2\right)
\end{align*}
\begin{corollary}
    \label{cor:LDP_good_center}
    Algorithm~\ref{alg:LDP-good-center} is a $\rho$-zCDP algorithm in the local-model that, given $n$ points on a grid ${\cal G}\subset [-B,B]^d$ of side-step $\tau$ where $n=\Omega({\frac{\log(\nicefrac{Bd}\tau)}{\rho}}(\sqrt d + \sqrt{\log(\nicefrac{Bd}{\tau\beta})})^2)$ returns w.p. $\geq 1-\beta$ a ball $B(\theta^*,r^*)$ where for the set $P'=\{\Pi_{B(\theta^*,r^*)}(x):~ x\in P\}$ it holds that at most $O(\frac{\sqrt{n}\cdot\log(\nicefrac{Bd}\tau)}{\sqrt\rho}\sqrt{\log(\nicefrac{Bd}{\tau\beta})})$ points are shifted in the projection (and the rest remain as they are in $P$) and that w.r.t to $B(\theta_{opt},r_{opt})$ which is the true MEB of $P'$ we have that $\|\theta^*-\theta_{opt}\|\leq 6r^*$.
\end{corollary}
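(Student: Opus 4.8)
The plan is to obtain Corollary~\ref{cor:LDP_good_center} as a direct instantiation of Theorem~\ref{thm:LDP_good_center_alg_utility} with the natural parameters for the grid setting. Concretely, I would invoke Algorithm~\ref{alg:LDP-good-center} with starting point $\theta_0=\bar 0$, with $R_{\max}=B\sqrt d$ (so that $[-B,B]^d\subseteq B(\bar 0,B\sqrt d)$ and hence $P\subseteq B(\theta_0,R_{\max})$), and with $r_{\min}=\tau/2$; the latter is a legitimate lower bound on $r_{opt}(P')$ whenever $P'$ contains two distinct grid points, since any two distinct points of ${\cal G}$ are at $\ell_2$-distance $\geq\tau$ --- and the degenerate case in which all surviving points coincide is handled exactly as remarked after Theorem~\ref{thm:good_center_alg_utility}. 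With these choices $T=\lceil\log_2(R_{\max}/r_{\min})\rceil+1=\lceil\log_2(\nicefrac{2B\sqrt d}{\tau})\rceil+1=O(\log(\nicefrac{Bd}{\tau}))$, and consequently $\ln(\nicefrac{4T}{\beta})=O(\log(\nicefrac{Bd}{\tau\beta}))$.

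Next I would check that the corollary's hypothesis $n=\Omega\!\left(\frac{\log(\nicefrac{Bd}{\tau})}{\rho}\big(\sqrt d+\sqrt{\log(\nicefrac{Bd}{\tau\beta})}\big)^2\right)$ implies the size condition required by Theorem~\ref{thm:LDP_good_center_alg_utility}, namely $n\geq\max\{16T\sqrt{\tfrac{2nT\ln(\nicefrac{4T}{\beta})}{\rho}},\,16\sqrt{\tfrac{nT}{\rho}}(\sqrt d+\sqrt{2\ln(\nicefrac{4T}{\beta})})\}$. Since $\sqrt n$ occurs on both sides, this reduces to $n\geq 256\max\{2T^3\ln(\nicefrac{4T}{\beta})/\rho,\ \tfrac{T}{\rho}(\sqrt d+\sqrt{2\ln(\nicefrac{4T}{\beta})})^2\}$; substituting the bounds on $T$ and on $\ln(\nicefrac{4T}{\beta})$ (and absorbing the residual $\mathrm{poly}\log\log$ factors) shows the stated $\Omega$-bound is enough.

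Finally I would translate the conclusions of Theorem~\ref{thm:LDP_good_center_alg_utility}: the bound $2T\sqrt{\tfrac{2nT\ln(\nicefrac{4T}{\beta})}{\rho}}$ on the number of points $x\in P$ with $x\neq\Pi_{B(\theta^*,r^*)}(x)$ becomes $O\!\left(\frac{\sqrt n\,\log(\nicefrac{Bd}{\tau})}{\sqrt\rho}\sqrt{\log(\nicefrac{Bd}{\tau\beta})}\right)$ after plugging in $T$ and $\ln(\nicefrac{4T}{\beta})$; the guarantee $\|\theta^*-\theta(P')\|=O(r^*)$ is precisely the claimed relation between $\theta^*$ and the center of the MEB of $P'$, with the explicit constant following from the same accounting used in the proof of Theorem~\ref{thm:good_center_alg_utility} (or by a harmless rescaling of the halting threshold $X$); and the privacy statement is immediate, since Theorem~\ref{thm:LDP_good_center_alg_utility} already asserts that each user maintains $\rho$-zCDP. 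The only mildly delicate step is the middle one --- confirming that the single $\Omega$-bound on $n$ dominates both arms of the $\max$ after the $T=O(\log(\nicefrac{Bd}{\tau}))$ substitution; there is no conceptual obstacle, as the convergence argument and the LDP analysis are already carried out in Theorem~\ref{thm:LDP_good_center_alg_utility}.
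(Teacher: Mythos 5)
Your proposal is correct and matches the paper's (implicit) argument exactly: the corollary is stated without a separate proof and is intended precisely as the instantiation of Theorem~\ref{thm:LDP_good_center_alg_utility} with $\theta_0=\bar 0$, $R_{\max}=B\sqrt d$, $r_{\min}=\tau/2$, hence $T=O(\log(\nicefrac{Bd}{\tau}))$, followed by the parameter substitution you describe (mirroring how Corollary~\ref{cor:good_center} follows from Theorem~\ref{thm:good_center_alg_utility}). The small mismatches you flag --- the $8r^*$ versus $6r^*$ constant and the $T^{3/2}$ versus $T$ factor in the count of shifted points --- are discrepancies already present in the paper's own statements, not defects of your derivation.
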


    Note that comparing Corollary~\ref{cor:LDP_good_center} with the approximation of~\cite{NissimStemmer18}, we have that they may omit $O(n^{0.67}\log(n/\tau))$-many points whereas we may omit  only $\sqrt{n}\log^{3/2}(d/\tau)$ points. But, of course, they deal with a bounding ball for $t$ points out of giving $n$, whereas we deal with the MEB problem.

\section{Using Noisy Mean}
\label{apx_sec:SQNoise}
Here we continue the analysis detailed in Section~\ref{subsec:noisy_version}.
\SQNoiseParagraph

\cut{
\section{Missing Proofs: DP Algorithm}
\label{apx_sec:proofsDPAlg}
}


\cut{
\begin{lemma}
\label{apx_lem:DP_iterative_algorithm_utility}[Lemma~\ref{lem:DP_iterative_algorithm_utility} restated.]
\lemDPUtility
\end{lemma}
}

\cut{
\begin{corollary}
\label{apx_cor:DP_utility}
    [Corollary~\ref{cor:DP_utility} restated]
    \corDPUtility
\end{corollary}
}


\cut{
\section{Missing Proofs: Local-DP Algorithm}
\label{apx_sec:proofsLDPAlg}
}


\end{document}